\documentclass[runningheads]{llncs} 
\usepackage{amssymb}
\usepackage{xspace}
\usepackage{amsmath}
\usepackage[dvipsnames]{xcolor}
\usepackage{graphicx}
\usepackage{multicol}
\usepackage{paralist}
\usepackage{comment}
\usepackage{enumitem}
\usepackage{wrapfig}
\usepackage{ifthen}
\newboolean{long}
\setboolean{long}{false}
\usepackage{bbm}
\usepackage{pgf,pgfarrows,pgfnodes}
\usepackage{tikz}
\usetikzlibrary{arrows,automata}
\usepackage{url}
\usepackage{vwcol}
\usepackage{algorithm,algorithmicx,algpseudocode}
\usepackage[normalem]{ulem}

%%%%%%%%%%%%%%% MACROS %%%%%%%%%%%%%%%%%%%
%%%%%%%% VG macros %%%%%%%%%
%
%For working notes: 
\usepackage[colorinlistoftodos,textwidth=60pt]{todonotes}

%Agents, models and components

%

%

%

%

%

%

%

%
%new: 

%Paths, plays, histories

\newcommand{\nat}{\ensuremath{\mathbb{N}}}

%Operators
\newcommand{\coop}[2][]{\langle\!\langle{#2}\rangle\!\rangle_{_{\!\mathit{#1}}}\,}

\newcommand{\atlx}{\mathord \mathsf{X}\, }
\newcommand{\atlf}{\mathord \mathsf{F}\, }
\newcommand{\atlg}{\mathord \mathsf{G}\, }
\newcommand{\atlu}{\, \mathsf{U} \, }

\newcommand{\ifff}{\leftrightarrow}
%

%formulae

%Logics

%
%

%
%More VG macros
%\newcommand{\extn}[1]{\|#1\|}
\newcommand{\cut}[1]{}
\newcommand{\vcut}[1]{ }

\newcommand{\set}[1]{\{{#1}\}}

\newcommand{\bbN}{\ensuremath{\mathbb{N}}}

\newcommand{\defstyle}{\textbf}
%%%%% END OF VG MACROS %%%%%%%%%%%%%%%%%%%%%%%%%%%

%%%%%%%%%%%% RDM MACROS %%%%%%%%%%%%%%%%%%%%%%%%
\newcommand{\ra}{\ensuremath{\rightarrow}}

\newcommand{\ol}[1]{\overline{#1}}

\newcommand{\tuple}[1]{\ensuremath{\langle #1 \rangle}}

%%%%%%%%%%%% END OF RDM MACROS %%%%%%%%%%%%%%%%%%

%%%%%%%%%%%%% Dynamic MAS %%%%%%%%%%%%%%%%%%%

\newcommand{\oursys}{\textsc{hdmas}\xspace}

\newcommand{\agset}{\ensuremath{\mathit{Ag}}\xspace}
\newcommand{\ag}{\ensuremath{\mathit{ag}}}
\newcommand{\coal}{\ensuremath{\boldsymbol{C}}}
\newcommand{\opp}{\ensuremath{\boldsymbol{N}}}

\newcommand{\actset}{\ensuremath{\mathit{Act}}}
\newcommand{\act}{\ensuremath{\mathit{act}}}
\newcommand{\actav}{\ensuremath{\mathit{d}}}
\newcommand{\States}{\ensuremath{\mathit{S}}\xspace}
\newcommand{\sts}{\ensuremath{\mathit{s}\xspace}}

\newcommand{\atprop}{\ensuremath{\mathit{AP}}}

\newcommand{\guardset}{\ensuremath{G}}
\newcommand{\guard}{\ensuremath{\mathit{g}}}

\newcommand{\dmas}{\ensuremath{\mathcal{M}}}

\newcommand{\actvar}{\ensuremath{x}}
\newcommand{\actvarset}{\ensuremath{\mathit{X}}}
\newcommand{\varsetf}[1]{\ensuremath{\mathit{Var(#1)}}}

\newcommand{\labf}{\ensuremath{\lambda}}

\newcommand{\stratg}{\ensuremath{\sigma}}
\newcommand{\stratgn}[1]{\ensuremath{\stratg_{#1}}}

\newcommand{\outf}{\ensuremath{\mathit{out}}}

\newcommand{\actass}{\ensuremath{\eta}}
\newcommand{\actassset}{\ensuremath{H}}

\newcommand{\actassr}[1]{\ensuremath{\actass|_{#1}}}
\newcommand{\actasssetr}[1]{\ensuremath{\actassset|_{#1}}}
\newcommand{\actasssetn}[1]{\ensuremath{\actassset|^{#1}}}

\newcommand{\actasssum}[1]{\ensuremath{\mathsf{sum}(#1)}}

\newcommand{\agass}{\ensuremath{\theta}}
\newcommand{\agassow}[2]{\ensuremath{\agass[#1 := #2]}}

\newcommand{\agvarset}{\ensuremath{Y}}
\newcommand{\agvar}{\ensuremath{y}}

\newcommand{\actmap}{\ensuremath{\mathsf{\mu}}}
\newcommand{\noop}{\ensuremath{\varepsilon}}

\newcommand{\actprof}[1]{\ensuremath{\pi_{#1}}}
\newcommand{\actprofr}[2]{\ensuremath{\actprof{#1}|_{#2}}}
\newcommand{\actprofset}[1]{\ensuremath{\Pi_{#1}}}
\newcommand{\actprofsetr}[2]{\ensuremath{\actprofset{#1}|_{#2}}}

\newcommand{\abst}{\ensuremath{\alpha}}

\newcommand{\termset}{\ensuremath{T}}
\newcommand{\term}{\ensuremath{t}}

\newcommand{\pth}{\ensuremath{w}}

\newcommand{\transrel}{\ensuremath{\delta}}
\newcommand{\exptransf}{\ensuremath{\Delta}}

\newcommand{\dom}{\ensuremath{\mathit{dom}}}

\newcommand{\stexten}[2]{[\![ #1]\!]_{\dmas}^{#2}}

\newcommand{\algostyle}[1]{\textsc{#1}}

\newcommand{\astratg}{\ensuremath{\rho}}

\newcommand{\grd}[2]{\guard^{#1}_{#2}}
\newcommand{\prf}{\ensuremath{\mathit{prf}}}

\newcommand{\pfix}{\ensuremath{\mathsf{pfix\xspace}}}

\newcommand{\emptystring}{\ensuremath{\epsilon}}

%%%%%%%%%%%%%%%%%%VG added: 
\newcommand{\lang}{\ensuremath{\mathcal{L}_{\oursys}}\xspace}
\newcommand{\fineq}{\ensuremath{\equiv_{fin}}}

\newcommand{\longversion}[1]{}

\renewcommand{\actass}{\ensuremath{\mathbf{act}\xspace}}
\newcommand{\langb}{\ensuremath{\lang^{0}}\xspace}

\newcommand{\Q}{\ensuremath{\mathsf{Q}\xspace}}

\newcommand{\cmon}{\textbf{(C-mon)}}
\newcommand{\nmon}{\textbf{(N-mon)}}

\newcommand{\parset}{\ensuremath{Z}}
\newcommand{\parvar}{\ensuremath{z}}

\newcommand{\pra}{\ensuremath{\mathsf{PrA}\xspace}}

\renewcommand{\fineq}{\ensuremath{\equiv_{\mathsf{fin}}}}

\renewcommand{\langb}{\ensuremath{\lang^{\mathsf{NF}}}\xspace}

\renewcommand{\prf}{\ensuremath{\mathsf{PrF}}}

\newcommand{\hdmas}{\oursys}

\newcommand{\nd}{\ensuremath{\mathbf{nd}\xspace}}

%%%%%%%%%%%%%%%%% VG ADDED %%%%%%%%%%%%%%%%%%%%%%

\renewcommand{\actprofset}[1]{\ensuremath{\mathsf{P}_{#1}}}
\renewcommand{\actprof}[1]{\ensuremath{\mathsf{p}_{#1}}}

\renewcommand{\pth}{\ensuremath{\pi}\xspace}

\newcommand{\setagt}{\ensuremath{A}\xspace}
\newcommand{\stu}{\ensuremath{u}\xspace}

\newcommand{\statw}{\ensuremath{w}\xspace}

%%%%%%%%%%%%% FOR REVISION %%%%%%%%%%%%%%%

\newcommand{\astate}{\sts}
\newcommand{\fo}{\varphi}
\newcommand{\fob}{\psi}
\newcommand{\aset}{Z}
\newcommand{\asetbis}{W}
\newcommand{\avarprop}{p}
\algblockdefx{Cases}{EndCases}%
   [1]{\textbf{case} #1 \textbf{of}}%
   {\textbf{end case}}
\algcblockx[Cases]{Cases}{Case}{EndCases}%
   [1]{#1:\enspace}%
   {\textbf{end case}}

   \newcommand{\Qstates}{\ensuremath{Q}}

   \newcommand{\pushf}{\algostyle{push}\xspace}
   \newcommand{\nff}{\algostyle{nf}\xspace}
    \newcommand{\pqe}{\algostyle{pqe}\xspace}

%%% Local Variables:
%%% mode: latex
%%% TeX-master: "main.tex"
%%% save-place: t
%%% End:
%%%%%%%%%%%%%%% MACROS %%%%%%%%%%%%%%%%%%%

\begin{document}
\sloppypar

\title{Logic-based Specification and Verification of Homogeneous Dynamic Multi-agent Systems\\
  {\footnotesize (To appear in Journal of \textsc{AAMAS}, revised version)}
}

\titlerunning{Logic-based Verification of \oursys}        % if too long for running head

\author{Riccardo {De Masellis}         \and
        Valentin Goranko %etc.
}

\authorrunning{R. De Masellis, V. Goranko} % if too long for running head

\author{Riccardo {De Masellis}\inst{1} \and Valentin Goranko\inst{2}}

\institute{Stockholm University,
\email{riccardo.demasellis@philosophy.su.se} \\
\and
Stockholm University and University of Johannesburg (visiting professorship),
\email{valentin.goranko@philosophy.su.se}
}

\date{Received: date / Accepted: date}
% The correct dates will be entered by the editor

\maketitle

\begin{abstract}  % put your abstract here!
We develop a logic-based framework for formal specification and algorithmic verification of \emph{homogeneous} and \emph{dynamic} concurrent multi-agent transition systems (HDMAS). 
Homogeneity means that all agents have the same available actions at any given state and the actions have the same effects regardless of which agents perform them. The state transitions are therefore determined only by the vector of numbers of agents performing each action and are specified symbolically, by means of conditions on these numbers definable in Presburger arithmetic.
The agents are divided into \emph{controllable} (by the system supervisor/controller) and \emph{uncontrollable}, representing the environment or adversary. Dynamicity means that the numbers of controllable and uncontrollable agents may vary throughout the system evolution, possibly at every transition.

As a language for formal specification we use a suitably extended version of Alternating-time Temporal Logic (ATL), where one can specify properties of the type ``a coalition of (at least) $n$ controllable agents can ensure against (at most) $m$ uncontrollable agents that any possible evolution of the system satisfies a given objective $\gamma$'', where $\gamma$ is specified again as a formula of that language and each of $n$ and $m$ is either a fixed number or a variable that can be quantified over.

We provide formal semantics to our logic \lang\ and define normal form of its formulae. 
We then prove that every formula in \lang\  is equivalent in the finite to one in a normal form and develop an algorithm for global model checking of formulae in normal form in finite HDMAS models, which invokes model checking truth of Presburger formulae. We establish worst case complexity estimates for the model checking algorithm and illustrate it on a running example.

\end{abstract}

%%%%%%%%%%%%%%%%%%%%%%%%%%%%%
\section{Introduction}

\textbf{The framework.}
We consider discrete concurrent multi-agent transition systems, i.e. multi-agent systems (MAS) in which the transitions take place in a discrete succession of steps, as a result of a simultaneous (or, at least mutually independent) actions performed by all agents. Such MAS are typically modelled as \emph{concurrent game models} (cf \cite{AHK-02} or \cite{BGJ15}).

Here we focus on a special type of concurrent MAS, which are  \emph{homogeneous} and \emph{dynamic}, in a sense explained below. 

The  \emph{homogeneity} means that all agents are essentially indistinguishable from each other, as their possible behaviours are determined by the same protocol. In particular, they have the same available actions at each state and the effect of these actions depends not on \emph{which} agents perform them, but only on \emph{how many} agents perform each action.  
Thus, the transitions in such systems are determined \emph{not by the specific action profiles}, but only \emph{by the vector of numbers of agents that perform each of the possible actions in these action profiles}. The latter can be regarded as an abstraction of the action profile. The transitions are specified symbolically, by means of conditions on these vectors, definable in Presburger arithmetic.

Typical examples of such homogeneous systems include: 

\begin{itemize}
\item \emph{voting procedures}
 where the outcome only depends on how many agents vote  for each possible alternative, but not who votes for what. These also involve voting procedures where anonymity is required and the identity of agents should not be inferred by observing the system's evolution~\cite{PedersenD13,DBLP:conf/voteid/JamrogaKK18};

\item \emph{sensor networks} of a type where protocols only depend on how many sensors send any given signal~\cite{VinyalsEtAl-2011};

\item \emph{computer network servers}, the functioning of which only depends on how many currently connected users are performing any given action (e.g. uploading or downloading data, sending printing jobs, communicating over common channels, etc);  

\item \emph{markets}, the dynamics of which only depends on how many agents are selling and how many are buying any given stock (assuming the transactions are per unit) but not exactly who does what.
\end{itemize}

The  \emph{dynamicity} of the systems that we consider means that the set (hence, the number) of agents being present (or, just acting) in the system may vary throughout the system evolution, possibly at every transition from a state to a state. All examples listed above naturally have that dynamic feature. There are different ways to interpret such dynamicity. In the extreme version, agents literally appear and disappear from the system, e.g. users joining and leaving an open network. A less radical interpretation is where the agents are in the system all the time but may become active and inactive from time to time, e.g. voters, or members of a committee, may abstain from voting in one election or decision making round, and then become active again in the next one. A more refined version is where at every state of the system performance each agent decides to act (i.e. take one of the available actions) or pass/idle, formally by performing the `pass/idle' action. Technically, all these interpretations seem to be reducible to the latter one.
However, the way we model the dynamicity here is by assuming that there is an unbounded, and  possibly infinite set of `potentially existing' agents, but that only finitely many of them are `actually existing/present' at each stage of the evolution of the system. Therefore, at each transition round, only finitely many currently existing agents can possibly perform an action, and each of these may also choose not to perform any action (i.e.,  remain inactive in that round). However, the currently inactive (or, `non-existing') agents do not have any individual influence on the transitions. Thus, the number of currently active agents, who determine the next transition, can change from any instant to the next one, while always remaining finite. We note, however, the difference between dynamic systems, in the sense described above, and simply \emph{parametric} systems, where the number of agents is taken as a parameter but remains fixed during the whole evolution of the system. In that sense, the present study applies both to parametric and truly dynamic systems.

In this work we develop a logic-based framework for formal specification and algorithmic verification of the behaviour of homogeneous dynamic multi-agent systems (\oursys) of the type described above. We focus, in particular, on scenarios where the agents are divided into \emph{controllable} (by the system supervisor or controller) and  \emph{uncontrollable}, representing the environment or an adversary. Both numbers, of controllable and uncontrollable agents, may be fixed or varying throughout the system evolution, possibly at every transition. 
The controllable agents are assumed to act according to a joint strategy prescribed by the supervisor/controller, with the objective to ensure the desired behaviour of the system (e.g. reaching an outcome in the voting procedure, or keeping the demand and supply of a given stock within desired bounds, or ensuring that the server will not be deadlocked by a malicious attack of adversary users, etc). 

As a logical language for formal specification we introduce a suitably extended version,  \lang,  of the Alternating time temporal logic ATL (\cite{AHK-02}). In \lang\  one can specify properties of the type {``\textit{A team of (at least) $n$ controllable agents can ensure, against at most $m$ active uncontrollable agents, that any possible evolution of the system satisfies a given objective $\varphi$}''},
where the objective $\varphi$ is specified again as a formula of that language, and each of $n$ and $m$ is either a fixed number, a  parameter, or a {variable} that can be quantified over. 

To summarise the comparison: in the standard concurrent game models of MAS agents are explicitly distinguished and in the logic ATL they are explicitly referred to by their names (individually, or in coalitions). In the HDMAS framework developed here, the only distinction between the agents is whether they are controllable or not, and in the language both are referred to only by numbers.

Here is an indicative, yet generic scenario, where our framework is readily applicable for both modelling and verification.

%\begin{example} 
\textit{
  A military fortress has $k$ protected points of entry: $A_1, A_2 … A_k$, with $k > 2$. The commander of the fortress has $C$ soldiers, hereafter called `defenders’, that can be deployed to protect these points of entry against an invading army. For each $A_i$, a number $c_i$ of defenders, with $m_i \leq c_i \leq M_i$, can be deployed against $n_i$ `invaders'. If $c_i = M_i$, then the defenders successfully protect $A_i$ against any number of invaders; if $c_i \not = M_i$, then entry point $A_i$ is lost when $n_i > m_i$.
  Moreover, both the defender and the invading commander may receive reinforcements and re-deploy their soldiers among the entry points once a day (say, at noon), whereas the attacks can only take place at night.
  However, neither of them can observe the precise distribution of the soldiers of the other party, but they can observe which points of entry are currently ``outpowered’’ by not being sufficiently protected by defenders. It is also known that the enemy must outpower more than 2 points of entry at the same time in order to successfully invade the fortress.
  }

The framework \oursys that we develop here will enable modelling the scenario above as well as specifying and algorithmically verifying claims of the kind: ``\textit{The fortress commander has a strategy to protect the fortress for at least $d$ days, for a given $d$ (or, forever) with $C$ defenders against at most $V$ (or, against any number of) invaders}''.

\medskip
\textbf{Structure and content of the paper.} 
In Section \ref{sec:framework} we introduce the \oursys framework,  provide a running example, and prove some technical results needed to introduce \emph{counting abstractions} of joint actions and strategy profiles. Using these counting abstractions, in Section \ref{sec:logic} we provide formal semantics in \oursys models for the logic \lang\ which we introduce there. We then define normal form of formulae of  \lang\ and the fragment \langb, consisting of formulae in normal form. The key technical result obtained in that section is that every formula in \lang\  is equivalent on finite models to one in \langb. In Section \ref{sec:MC} we develop an algorithm for global model checking of formulae in \langb in finite \oursys models, which invokes model checking truth of their respective translations into Presburger formulae, and illustrate that algorithm on running examples. In Section \ref{sec:complexity} we establish some refined  complexity estimates for the model checking algorithm, using recent complexity results obtained in~\cite{DBLP:conf/csl/Haase14} for fragments of Presburger arithmetic. We end with some concluding remarks on extensions and possible applications of our work in Section \ref{sec:concluding}. 

\medskip
\textbf{Related work.}  
{A more closely related framework to ours is Open Multi-Agent Systems (OMAS)~\cite{KouvarosLPP19}. \oursys shares with it the characteristic 'dynamic' feature of agents, which can therefore leave and join the system at runtime. However \oursys differs from OMAS in several essential aspects. First, although any finite number of agents can perform actions at each step, the evolution of OMAS depends only on the projection of those on the set of actions or, in other words, whether any action is performed by at least one agent. 
Thus, \oursys makes use of the full expressivity of Presburger arithmetic. Next, the verification formalism of OMAS is a temporal epistemic logic with (universally quantified) indices spanning over agents, while ours includes strategic operators. Lastly, decidability of model-checking Open Multi-agent Systems is obtained by restricting the semantics of the models and by using cutoff techniques whereas we ultimately invoke model-checking truth of Preseburger formulas.

We are aware of other threads of, more or less essentially, related work, however none of them considers formal models and verification methods for the type of homogeneous and dynamic multi-agent scenarios studied here. Therefore, we only mention them briefly as in all frameworks mentioned below, the number of agents is fixed along system executions, possibly as a parameter and the formal specification languages do not explicitly allow quantification over the number of agents.}

-- Counting abstraction for verification of parametric systems has been studied in~\cite{GermanS92} and~\cite{BloemJKKRVW16}, where techniques based on Petri nets or Vector Addition Systems with States (VASS) are used to obtain decidability of model checking.

-- The work in~\cite{RaskinSB05} is closer to ours, as strategic reasoning is considered but only for a restricted set of properties such as reachability, coverability and deadlock avoidance. Also, assumptions on the system evolutions are made and, in particular, monotonicity with respect to a well-quasi-ordering.

-- In~\cite{KouvarosL16} temporal epistemic properties of parametric interpreted systems are checked irrespective of  the number of agents by using cutoff techniques.

-- Modular Interpreted Systems~\cite{Jamroga2007MIS} is a MAS framework where
a decoupling between local agents and global system description is achieved, thus possibly amenable to model dynamical MAS frameworks.

-- Homogeneous MAS with transitions determined by the number of acting agents have been introduced in~\cite{PedersenD13}.

-- Population protocols~\cite{AngluinADFP04} are parametric systems of homogeneous agents, and decidability of model checking against probabilistic linear-time specification is studied in~\cite{EsparzaGLM16}.

-- In~\cite{DeGiacomoVFAL18}, instead of verifying MAS with unknown number of agents, the authors propose a technique to find the minimal number of agents which, once deployed and suitably orchestrated, can carry out a manufacturing task.

-- Lastly, as noted above, our logic of specification builds on the 
Alternating time temporal logic ATL (\cite{AHK-02}) and extends the model checking algorithm for ATL to \hdmas. 

%%%%%%%%%%%%%%%%%%%%%%%%%%%%%%%%%%%%%%%%%%%

\section{Preliminaries and modelling framework}
\label{sec:framework}

We start by introducing the basic ingredients of our framework.
We assume a hereafter fixed (finite, or possibly countably infinite) \defstyle{universe of potential agents} $\agset = \{\ag_1, \ag_2, \ldots\}$, but only finitely many of them will be assumed currently present, or `currently existing', at any time instant or stage of the evolution of the system. 
{Alternatively, the universe of agents can be assumed always finite but unbounded.} 

Next, we consider a finite set of \defstyle{action names} $\actset=\set{\act_1, \ldots, \act_n}$. 
We extend this set  with a specific `idle' action $\noop$ and define 
$\actset^+= \actset \cup \set{\noop}$. 
We also fix a set of distinct variables $\actvarset = \set{\actvar_1, \ldots, \actvar_n}$ extended to $\actvarset^+= \actvarset \cup \set{x_{\noop}}$, called \defstyle{action counters}, associated to $\actset$ and $\actset^+$ respectively. Formally, we relate these by a mapping $\actmap: \actset^+ \ra \actvarset^+$ such that for each $i \in \set{1, \ldots, n}$, $\actmap(\act_i)=\actvar_i$ and $\actmap(\noop)=\actvar_{\noop}$.
Hereafter, $\actset$, $\actset^+$, $\actvarset$, $\actvarset^+$, and $\actmap$ are assumed fixed, as above. 

An \defstyle{action profile} over a given set of actions $\actset' \subseteq \actset^+$ is defined as a function $\actprof{}: \agset \ra \actset'$, 
assigning an action from $\actset'$ to each agent in $\agset$. 
More generally, for any subset of agents $A \subseteq \agset$, a \defstyle{joint action of $A$} over a set of actions $\actset' \subseteq \actset^+$ 
is a function $\actprof{A}$
assigning an action from $\actset'$ to each agent in $A$.

Given a function $f$, we will write: $\dom(f)$ for the domain of $f$;  $f|_{Z}$ for the restriction of $f$ to a domain $Z \subseteq dom(f)$;  
and $f[Z]$ for the image of $Z$ under $f$. For technical purposes, we also consider a (unique) function $f_{\emptyset}$ with an empty domain.

{To express relevant conditions on the number of agents performing actions in $\actvarset$, we make use of Presburger arithmetic (the first-order theory of natural numbers with addition and $=$). This is a fairly expressive, yet decidable theory, which makes it very natural and suitable for many computational tasks related to verification of various discrete infinite-state systems (see e.g. ~\cite{DBLP:journals/siglog/Haase18} for an introduction.)}

\begin{definition}[Guards]
  A \defstyle{(transition) guard} $\guard$ is an open (quantifier-free)\footnote{The restriction to quantifier-free guards is only partly essential for the technical results,  given the quantifier elimination property of Presburger arithmetic. We make that restriction mainly to keep the presentation simpler.} formula  of 
  Presburger arithmetic \pra\ with predicates $=$ and $<$ over variables from the set of action counters $\actvarset$.
    We denote by $\guardset$ the set of all guards, 
    {by $\varsetf{g}$ the set of variables occurring in a guard $g \in \guardset$}, and we use the following standard abbreviations in Presburger formulas: 
    {$n := 1+ \ldots +1$ ($n$ times 1) and $n \actvar := \actvar + \ldots + \actvar$ ($n$ times $\actvar$) for any $n \in \nat$ and $\actvar \in \actvarset^+$}.
\end{definition}

\begin{definition} \label{def:act-dist}
  An \defstyle{action distribution} is any function $\actass : X' \ra \nat$, where 
$X' \subseteq  \actvarset^+$. The domain $X'$ 
is denoted, as usual, by $\dom(\actass)$. 
Intuitively, an action distribution assigns for every action $\act$, through the value 
of the action counter $\actmap(\act)$, the number of agents who are assigned the action $\act$.

Given an action distribution $\actass$ we define:
\begin{itemize}

\item  $\actass \models \guard$, for a given guard $\guard$, if $\actass$ satisfies $\guard$  with the expected standard semantics of \pra, 
{namely: \\ 
$\actass \models \actvar_1 = \actvar_2$ if $\actmap(\actvar_1) = \actmap(\actvar_2)$ and $\actass \models \actvar_1 < \actvar_2$ if $\actmap(\actvar_1) = \actmap(\actvar_2)$};

\item  $\actasssum{\actass} := \sum_{\actvar \in \dom(\actass)} \actass(\actvar)$;

\item $\actasssetn{m} := \set{\actass \mid \actasssum{\actass} = m}$ 
{is the set of action distributions where exactly $m$ agents perform actions};

  \item $\actassset := \bigcup_{m \in \nat} \actasssetn{m}$ 
  {is the set of all action distributions}.
  
 \end{itemize}

We also define the mapping $\oplus : \actassset \times \actassset \dashrightarrow \actassset$,  
  which, given two action distributions $\actass_{1}$ and $\actass_{2}$, is defined if $\dom(\actass_{1}) = \dom(\actass_{2}) := Z$ and
 returns a new action distribution, $\actass_{1} \oplus \actass_{2}$, with domain $Z$, defined component-wise as the sum of $\actass_{1}$ and $\actass_{2}$, i.e. 
 $\actass_{1} \oplus \actass_{2} (z) = \actass_{1}(z) + \actass_{2}(z)$ for each $z \in Z$.
\end{definition}

\begin{remark} \label{rmk:noop-assignments}
Note that guards are defined over the set of variables $\actvarset$, while the domain of action distributions can also include $\actvar_{\noop}$. 
It follows that, for any action distribution $\actass$, the value $\actass(\actvar_{\noop})$ does not have any influence on the satisfiability of a guard. More generally, for every $\actass \in \actassset$ and $\guard \in \guardset$ we have $\actass \models \guard$ iff $\actassr{\varsetf{\guard}} \models \guard$. 
  \end{remark}

We now relate action profiles with action distributions. Every action profile is associated with the action distribution that counts, for each action, the number of agents performing it. In that sense, action distributions are \emph{counting abstractions} for action profiles. The formal definition follows, where we denote the set of all action profiles over $\actset$ by $\actprofset{}$
and define the inverse of an action profile $\actprof{}$ as the function 
$\actprof{}^{-1} : \actset \ra \wp(\agset)$ such that $\actprof{}^{-1}(\act) = \set{\ag \in \agset \mid \actprof{}(\ag)=\act}$.

\begin{definition}
\label{def:action-abstraction} 
The \defstyle{action profile abstraction} is the function $\abst: \actprofset{} \ra \actassset$ where $\abst(\actprof{})(\actmap(\act)) = |\actprof{}^{-1}(\act)|$ for all $\actprof{} \in \actprofset{}$ {and $\act \in \actset^+$}.
\end{definition}

The function $\abst{}$ partitions the set $\actprofset{}$ into equivalence classes of action profiles having the same abstraction 
{that is, two action profiles $\actprof{1}$ and $\actprof{2}$ belongs to the same equivalence class iff 
$\abst(\actprof{1}) = \abst(\actprof{2})$.}

We now introduce the abstract models of our framework.

\begin{definition}
\label{def:HDMAS} 
  A \defstyle{homogeneous dynamic MAS (\oursys)} is a structure $\dmas = \tuple{\agset, \actset^+, \States, \actav, \transrel, \atprop, \labf}$ where:
  \begin{itemize}
    \item $\agset = \set{\ag_1, \ag_2, \ldots}$ is the countable set of \defstyle{agents}.

  \item $\actset^+$ is the set of \defstyle{action names};
  
  \item $\States$ is a set of \defstyle{states}\footnote{Note that $\States$ is not required in general to be finite, but some of our technical results will assume finiteness.};
  
\item $\actav: \States \ra \wp(\actset^+)$ is the \defstyle{action availability function}, that assigns to every state $\sts$ the set of actions $\actav(\sts)$ available (to all agents) at $\sts$, and is  
such that $\varepsilon \in \actav(\sts)$;
  
\item $\transrel: \States \times \States \ra \guardset$ is the 
\defstyle{transitions guard} function, labelling possible transitions between states with  guards such that:
   \begin{itemize}
   \item $\varsetf{\transrel(\sts, \sts')} \subseteq \actmap[\actav(\sts)]$ for each $\sts, \sts' \in \States$ \ 
 (the guards at each state only involve action counters corresponding to actions available at that state),
 
\item and, for each $\sts \in \States$ and for each $\actass \in \actasssetr{\actmap[\actav(s)]}$, there exists a unique $\sts' \in \States$ such that $\actass \models \transrel(\sts, \sts')$ \  
(every possible action distribution over the set of actions available at the current state determines a unique transition). 
\end{itemize}
  
  \item $\atprop = \set{p_1, p_2, \ldots}$ is a finite set of \defstyle{atomic propositions};
  \item $\labf : \States \ra \wp(\atprop)$  is a \defstyle{labelling function}, assigning to any state $\sts$ the set of atomic propositions that are true at $\sts$.
    \end{itemize}    
  \end{definition}
  %%%%%%%%%%%%%%%%%%%%%%%%%%%%%%%%%%%%%%

\tikzset{every state/.style={minimum size=4mm}}
\tikzset{->,>=stealth',shorten >=1pt}

\begin{figure}
  \begin{minipage}{0.5\textwidth}
      \begin{center}
    \begin{small}
  
\begin{tikzpicture}[node distance=1.5cm]
  
  \node[state] (1) {$\sts_1$};
  \node () [below =0.1cm of 1] {$\set{}$};

  \node[state] (2) [right of=1] {$\sts_2$};
  \node () [below =0.05cm of 2] {$\set{\mathit{captured}}$};

  \path

  (1) edge [loop left] node[left]{$\neg \guard_1$}  (1)
  (1) edge node[above]{$\guard_1$} (2)

  (2) edge [loop right] node[right]{$\guard_2$} (2)
  ;
\end{tikzpicture}
\end{small}
\end{center}
\end{minipage} \begin{minipage}{0.5\textwidth}
  \begin{center}
    \begin{small}
      \begin{align*}
        \guard_1 := {} & \bigwedge_{i \set{1, 2, 3}} \actvar_i < m_i \lor (\actvar_i < M_i \land \actvar_i < \overline{\actvar}_i) \\
        \guard_2 := {} & \actvar_1 = \actvar_1
\end{align*}
    \end{small}
    \end{center}
\end{minipage} 
\caption{The fortress example modelled as a \oursys.} 
\label{fig:example-fortress}
\end{figure}
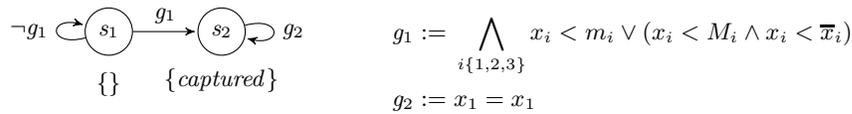

\tikzset{every state/.style={minimum size=4mm}}
\tikzset{->,>=stealth',shorten >=1pt}

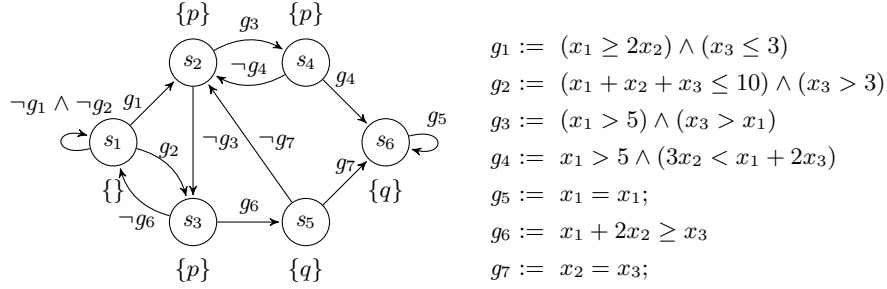
\begin{figure}
  \begin{minipage}{0.5\textwidth}
  \begin{center}
    \begin{small}
\begin{tikzpicture}[node distance=1.5cm]
  
  \node[state] (1)                    {$\sts_1$};
  \node () [below =0.05cm of 1] {$\set{}$};

  \node[state]         (2) [above right of=1] {$\sts_2$};
  \node () [above=0.05cm of 2] {$\set{p}$};

  \node[state]         (3) [below right of=1] {$\sts_3$};
  \node () [below=0.05cm of 3] {$\set{p}$};

  \node[state]         (4) [right of=2] {$\sts_4$};
  \node () [above=0.05cm of 4] {$\set{p}$};

  \node[state]         (5) [right of=3] {$\sts_5$};
  \node () [below=0.05cm of 5] {$\set{q}$};

  \node[state]         (6) [below right of=4] {$\sts_6$};
  \node () [below =0.05cm of 6] {$\set{q}$};

  \path

  (1) edge [loop left] node[above =0.25cm of 1]{$\neg \guard_1 \land \neg \guard_2$}  (1)
  (1) edge node[left]{$\guard_1$} (2)
  (1) edge [bend left] node[above]{$\guard_2$}  (3)

  (2) edge node[right]{$\neg \guard_3$} (3)
  (2) edge [bend left] node[above]{$\guard_3$} (4)

  (3) edge [bend left] node[below =0.1cm of 3]{$\neg \guard_6$} (1)
  (3) edge node[above]{$\guard_6$} (5)

  (4) edge [bend left] node[above]{$\neg \guard_4$} (2)
  (4) edge node[above=0.1cm of 4]{$\guard_4$} (6)

  (5) edge node[right]{$\neg \guard_7$} (2)
  (5) edge node[above]{$\guard_7$} (6)

  (6) edge [loop right] node[above = 0.1cm of 6]{$\guard_5$} (6)
  ;
\end{tikzpicture}
\end{small}
\end{center}
\end{minipage}\begin{minipage}{0.5\textwidth}
  \begin{center}
    \begin{small}
\begin{align*}
  \guard_1 := {} & \ (\actvar_1 \geq 2 \actvar_2) \land (\actvar_3 \leq 3)\\
  \guard_2 := {} & \ (\actvar_1 + \actvar_2 + \actvar_3 \leq 10) \land (\actvar_3 > 3) \\
  \guard_3 := {} & \ (\actvar_1 > 5) \land (\actvar_3 > \actvar_1) \\
  \guard_4 := {} & \ \actvar_1 > 5 \land (3 \actvar_2 < \actvar_1 + 2 \actvar_3) \\
  \guard_5 := {} & \ \actvar_1 = \actvar_1; \\
  \guard_6 := {} & \ \actvar_1 + 2 \actvar_2 \geq \actvar_3 \\
  \guard_7 := {} & \ \actvar_2 = \actvar_3;                 
\end{align*}
\end{small}
\end{center}
\end{minipage}

  \caption{An abstract example of a \oursys.}
\label{fig:example}
\end{figure}

%%%%%%%%%%%%%%%%%%%%%%%%%%%%%%%%%%%%%%%
  
  \begin{example} \label{ex:fortress}
    {The fortress example presented in the introduction, with $k=3$ entry points, can be modeled as a \oursys as follows. The set $\States$ contain two states only, displayed as circles in Figure~\ref{fig:example-fortress}: $\sts_1$ and $\sts_2$ represents respectively the fortress being under control of the defenders or being captured. Next, we have two actions for each entry point $A_i$: one modelling the defensive action $\act_i$, and the other the attacking action $\overline{act}_i$ for $A_i$; therefore $\actset^+ = \set{\act_1, \overline{act}_1, \act_2, \overline{act}_2, \act_3, \overline{act}_3, \noop}$, with $\actmap(\act_i)=\actvar_i$ and $\actmap(\overline{act}_i)=\overline{\actvar}_i$ for $i \in \set{1, 2, 3}$. All of them are allowed in $\sts_1$ and none of them in $\sts_2$, formally: $\actav(\sts_1) = \actset^+$ and $\actav(\sts_2) = \set{\noop}$.
 The guards $\guard_1, \guard_2$ are listed next to the picture, and an arrow is drawn from $s_i$ to $s_j$ and labeled with $g_k$ iff $\delta(s_i, s_j)=\guard_k$. Formula $\guard_1$ guards transition from $\sts_1$ to $\sts_2$ and therefore it defines when the fortress is captured. This happens when, for each of the entry point $A_i$ with $i \in \set{1, 2, 3}$, one of two conditions hold: 1) the number of defenders $\actvar_i$ is less than $m_i$ or 2) it is less than $M_i$ and also less than the number of attackers $\overline{\actvar}_i$. If this is not the case, the defenders hold the fortress (loop in $\sts_1$) but once is conquered, it remains so regardless of the actions performed ($\guard_2$ is a tautology).
The label of each state, as defined by the labelling function,
 is given next to it. We only have one atomic proposition, $\mathit{captured}$, false in $\sts_1$ and true in $\sts_2$, therefore $\labf(\sts_1) = \emptyset$ and $\labf(\sts_2) = \set{captured}$.
    }
    \end{example}

\begin{example} 
\label{ex:1}
%\red
{A more abstract example is given in
Figure \ref{fig:example}, which will be used to illustrate some technical points and the model checking algorithm later.}
      The set of actions is $\actset = \set{\act_1, \act_2, \act_3}$ and the action availability function is defined by $\actav(\sts_1)=\actav(\sts_3)=\actav(\sts_4)=\actset^+$, $\actav(\sts_2)=\set{\act_1, \act_3, \noop}$, $\actav(\sts_5)=\set{\act_2, \act_3, \noop}$ and $\actav(\sts_6)=\set{\act_1, \noop}$.
Lastly, the labelling function is defined as: $\labf(\sts_1)=\emptyset$, $\labf(\sts_2)=\labf(\sts_3)=\labf(\sts_4)=\set{p}$ and $\labf(\sts_5)=\labf(\sts_6)=\set{q}$.
\end{example}

 The restriction on $\transrel$ ensures that for any number of agents and their action profile  of available actions, the next state is uniquely defined. Thus, the dynamics of the  system in terms of possible state transitions is fully determined symbolically by the transitions guard function $\transrel$, as defined formally below.

\begin{definition}
\label{def:transition}
  Given a \oursys $\dmas$, a \defstyle{transition} in $\dmas$ is a triple $(\sts, \actprof{}, \sts')$, where $\sts, \sts'  \in \States$ and $\actprof{} \in \actprofset{}$, such that: \\
  1) each agent $\ag$ performs an available action: $\actprof{}(\ag) \in \actav(\sts)$;
  \\
  2) the abstraction $\abst(\actprof{})$ satisfies the (unique) guard that labels the transition from $\sts$ to $\sts'$, i.e., $\abst(\actprof{}) \models \transrel(\sts, \sts')$.
\end{definition}

Since transitions only depend on the abstractions of the action profiles, that is, on action distributions, it is immediate to see that actions profiles with the same abstraction, applied at the same state, lead to the same successor state. Formally, the following holds. 

\begin{lemma}
\label{fact1} 
  Given a \oursys $\dmas$ as above,  
for every $\sts, \sts' \in \States$,  and every $\actprof{1}, \actprof{2} \in \actprofset{}$, if $\abst(\actprof{1}) = \abst(\actprof{2})$, then $(\sts, \actprof{1}, \sts')$ is a transition in $\dmas$ iff $(\sts, \actprof{2}, \sts')$ is a transition in $\dmas$.
\end{lemma}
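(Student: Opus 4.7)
The plan is to show that both defining conditions of a transition (Definition~\ref{def:transition}) are invariant under replacing an action profile with another having the same abstraction. Assume $\abst(\actprof{1}) = \abst(\actprof{2})$. By symmetry, it suffices to prove one direction: if $(\sts, \actprof{1}, \sts')$ is a transition, then so is $(\sts, \actprof{2}, \sts')$.

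The second condition is immediate: $\actprof{1}$ being a transition together with $\sts'$ at $\sts$ gives $\abst(\actprof{1}) \models \transrel(\sts, \sts')$, and since $\abst(\actprof{1}) = \abst(\actprof{2})$, we obtain $\abst(\actprof{2}) \models \transrel(\sts, \sts')$. No work is needed here because guards are evaluated on action distributions, not on the underlying profiles.

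For the availability condition, the key observation is that the image of an action profile $\actprof{}$ (as a set of actions actually used) can be recovered from its abstraction: an action $\act \in \actset^+$ occurs in $\actprof{}[\agset]$ if and only if $\abst(\actprof{})(\actmap(\act)) > 0$, directly from Definition~\ref{def:action-abstraction} (since $\abst(\actprof{})(\actmap(\act)) = |\actprof{}^{-1}(\act)|$). Hence $\actprof{1}[\agset] = \actprof{2}[\agset]$. Since $(\sts, \actprof{1}, \sts')$ is a transition, every action used by $\actprof{1}$ lies in $\actav(\sts)$, so by the equality of images, every action used by $\actprof{2}$ also lies in $\actav(\sts)$, i.e.\ $\actprof{2}(\ag) \in \actav(\sts)$ for every $\ag \in \agset$.

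Combining the two conditions yields that $(\sts, \actprof{2}, \sts')$ is a transition, completing the argument. There is no genuine obstacle: the lemma is essentially the formal confirmation that the counting abstraction $\abst$ captures exactly the information about action profiles that the transition relation uses. The only mildly non-trivial step is the image-recovery observation for the availability condition, but this follows directly from the cardinality-based definition of $\abst$.
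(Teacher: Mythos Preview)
Your proof is correct. The paper does not supply a proof of this lemma at all; it merely remarks in the preceding sentence that the claim is ``immediate to see'' from the fact that transitions depend only on the abstraction. Your argument is precisely the unpacking of that remark, with the image-recovery observation for condition~1 being the one detail the paper leaves implicit.
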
 

Lemma \ref{fact1} enables us to define the transition function\footnote{
We remark that the assumption of determinism of \oursys is common in the study of multi-agent systems, because non-determinism can be settled easily by the actions of  a fictitious new agent (Nature). Intuitively, one can transform a nondeterministic \oursys to a deterministic one by adding actions that resolve the non-determinism. Then specifications can be translated from the latter to the former by adding controllable or non-controllable agents that could execute these actions.
} of $\dmas$ directly on action distributions, rather than on action profiles. 

\begin{definition}
\label{def:transition-function}
  Let $\dmas$ be a \oursys. The \defstyle{transition function of $\dmas$} is the partial mapping  $\exptransf: \States \times \actassset \dashrightarrow \States$ defined as follows. For each $\sts \in \States$ and $\actass \in \actassset$,
 \defstyle{the outcome state $\exptransf(\sts, \actass)$} of $\actass$ at $\sts$  
is defined and equal to $\sts' \in \States$ 
 iff there exists $\actprof{} \in \actprofset{}$ such that $(\sts, \actprof{}, \sts')$ is a transition and $\abst(\actprof{}) = \actass$; otherwise $\exptransf(\sts, \actass)$ is undefined.
\end{definition}

Infinite sequences of successor
states will be called `plays'. Formally, a \defstyle{play} is a sequence $\pth = \sts_0, \sts_1, \ldots$ in $\States^{\omega}$, such that for every \defstyle{stage} (of the play) 
$i \in \nat$, there is $\actass_i \in \actassset$ such that 
$\exptransf(\sts_i, \actass_i) = \sts_{i+1}$. 
We denote by $\pth[i]$ the state of the $i$-th stage of the play, for each $i \in \nat$.

Since transitions from a given state $\sts$ are defined only for action profiles that assigns to all agents only actions that are available at $\sts$, we call these \defstyle{available action profiles in $\sts$}.  We formally define for each state $\sts \in \States$ the set of available action profiles in $\sts$ as 
\[\actprofset{\sts} = \set{\actprof{} \in \actprofset{} \mid \actprof{}(\ag) \in \actav(\sts) 
\ \mbox{for each} \  \ag \in \agset}.\] 

More generally, for each set of agents $\setagt \subseteq \agset$ we define likewise the set of \defstyle{joint actions for $\setagt$ available in $\sts$} as 
\[\actprofsetr{\sts}{\setagt} = \set{\actprof{\setagt} \in \actprofset{\setagt} \mid \actprof{\setagt}(\ag) \in \actav(\sts) 
\ \mbox{for each} \  \ag \in \setagt}.\]   
where $\actprofset{\setagt}$ denotes (with a mild abuse of notation) the set of all possible joint actions for $\setagt$.

Next, we define a positional strategy for a given coalition of agents $\setagt$  
as a mapping that assigns to each state $\sts$ an available joint action for $\setagt$. 

\begin{definition}
\label{def:joint-strategy}
  Let $\setagt$ be a (possibly empty) set of agents and $\dmas$ be a \oursys with a state space $\States$.  A \defstyle{joint (positional) strategy for the coalition $\setagt$} is a function $\stratgn{\setagt} : \States \ra \actprofsetr{}{\setagt}$ such that   $\stratgn{\setagt}(s) \in \actprofsetr{\sts}{\setagt}$ for each $\sts \in \States$. The empty coalition has only one  joint strategy $\stratgn{\emptyset}$, assigning the empty joint action at every state. 
  \end{definition}

Hereafter we assume that at every stage of the play representing the evolution of the system, the set of all currently present agents is partitioned into two: the set of 
 \emph{controllable agents}, denoted by $\coal$, and the set of 
 \emph{uncontrollable agents}, denoted by $\opp$. Neither of these subsets (and their sizes) is fixed initially, nor during the play, but each of them can vary at each transition round.

\begin{definition}
\label{def:outcome-set}
  Let $\dmas$ be a \oursys, $\sts \in \States$ be a state in it, $\coal, \opp\subseteq \agset$ be the respective current sets of controllable and uncontrollable agents, and   
let $\actprof{\coal} \in \actprofsetr{\sts}{\coal}$.
The  \defstyle{outcome set of $\actprof{\coal}$ at $\sts$} is defined as follows: 
 \[\outf(\sts, \actprof{\coal}, \opp) :=  \{ \sts' \in \States \mid 
   \sts' \!=\! \exptransf(\sts, \abst(\actprof{})) 
\,   \mbox{for\! some} \, \actprof{} \in   \actprofsetr{\sts}{(\coal \cup \opp)}   
 \,   \mbox{such\! that }  \actprof{} |_{\coal} = \actprof{\coal}\}.
   \]
  
Respectively, given a joint strategy $\stratgn{\coal}$ for $\coal$ 
we define the \defstyle{set of outcome plays of $\actprof{\coal}$ at $\sts$ (against $\opp$)} as \\ 
 \[ \outf(\sts, \stratgn{\coal}, \opp) := \big\{ \pth = \sts_0, \sts_1, ... \mid 
      \sts_0=\sts \mbox{ and for all } i \in \nat 
     \mbox{ there exists } \actprof{i} \in \actprofsetr{\sts_i}{(\coal \cup \opp)}  
     \]
     \[ 
     \hspace{46mm}   \mbox{ such that }  \actprofr{i}{\coal} = \stratgn{\coal}(\sts_i) \mbox{ and } 
 \exptransf(\sts_i, \abst(\actprof{i})) = \sts_{i+1} 
\big\}.  
   \]
\end{definition}

\medskip

The abstraction $\abst$, although defined on actions profiles, is readily extended over joint actions and naturally specifies an equivalence relation between them: two joint actions are equivalent whenever their abstraction is the same. Likewise for joint strategies, as the next definition formalizes.

\begin{definition}
\label{def:equivalence}
Let $\dmas$ be a \oursys, 
 $\coal_1, \coal_2 \subseteq \agset$
 and $\actprof{\coal_1}, \actprof{\coal_2}$ be respective joint actions for $\coal_1$ and 
 $\coal_2$.
We say that $\actprof{\coal_1}$ and $\actprof{\coal_2}$  are \defstyle{equivalent}, denoted 
$\actprof{\coal_1}~\equiv~\actprof{\coal_2}$, if 
$\abst(\actprof{\coal_1}) = \abst(\actprof{\coal_2})$.

Likewise,  we say that joint strategies $\stratgn{\coal_1}$ and $\stratgn{\coal_2}$ 
are \defstyle{equivalent}, denoted $\stratgn{\coal_1} \equiv \stratgn{\coal_2}$ if they prescribe equivalent joint actions for $\coal_1$ and $\coal_2$ at every state.
\end{definition}

Note that if $\actprof{\coal_1} \equiv \actprof{\coal_2}$ then  $|\coal_1| =  |\coal_2|$ and 
$\actprof{\coal_1}$ and $\actprof{\coal_2}$
produce the same outcome sets.  

\begin{lemma}
\label{lem:equiv} 
Let $\dmas$ be a \oursys and $\coal_1, \coal_2,\opp_1, \opp_2 \ \subseteq \agset$ be such that, 
 $|\coal_1| =  |\coal_2|$,  $|\opp_1| =  |\opp_2|$,  $\coal_1 \cap \opp_1 = \emptyset$, and $\coal_2 \cap \opp_2= \emptyset$. 
Then:  

\begin{enumerate}
\item For any $\sts \in \States$, if $\actprof{\coal_1}$ and $\actprof{\coal_2}$ are two equivalent joint actions available at $\sts$, respectively for $\coal_1$ and $\coal_2$, then 
 $\outf(\sts, \actprof{\coal_1},\opp_1) = \outf(\sts, \actprof{\coal_2},\opp_2)$.

\item If $\stratgn{\coal_1}$ and $\stratgn{\coal_2}$ are two equivalent  joint strategies in $\dmas$,  respectively for  $\coal_1$ and $\coal_2$, then for each $\sts \in \States$, 
 $\outf(\sts, \stratgn{\coal_1},\opp_1) = \outf(\sts, \stratgn{\coal_2},\opp_2)$.
 \end{enumerate}
\end{lemma}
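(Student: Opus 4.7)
The plan is to establish part~(1) directly by a counting-preserving bijection argument between the two coalition-opposition pairs, and then obtain part~(2) by applying part~(1) at each stage of a play.

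For part~(1), I would prove $\outf(\sts,\actprof{\coal_1},\opp_1)\subseteq\outf(\sts,\actprof{\coal_2},\opp_2)$; the reverse inclusion is symmetric. Suppose $\sts'\in\outf(\sts,\actprof{\coal_1},\opp_1)$. By Definition~\ref{def:outcome-set} there exists an available joint action $\actprof{}\in\actprofsetr{\sts}{(\coal_1\cup\opp_1)}$ with $\actprof{}|_{\coal_1}=\actprof{\coal_1}$ and $\sts'=\exptransf(\sts,\abst(\actprof{}))$. I would construct a matching $\actprof{}'\in\actprofsetr{\sts}{(\coal_2\cup\opp_2)}$ as follows: set $\actprof{}'|_{\coal_2}:=\actprof{\coal_2}$, and, using $|\opp_1|=|\opp_2|$, fix an arbitrary bijection $\phi:\opp_1\to\opp_2$ and define $\actprof{}'|_{\opp_2}(\phi(\ag)):=\actprof{}|_{\opp_1}(\ag)$ for every $\ag\in\opp_1$. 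The bijectivity of $\phi$ ensures $\abst(\actprof{}'|_{\opp_2})=\abst(\actprof{}|_{\opp_1})$ (preimage sizes are preserved), while $\actprof{\coal_1}\equiv\actprof{\coal_2}$ gives $\abst(\actprof{}'|_{\coal_2})=\abst(\actprof{}|_{\coal_1})$. Since $\coal_i\cap\opp_i=\emptyset$ for $i=1,2$, the counts over the two parts add componentwise, so $\abst(\actprof{}')=\abst(\actprof{})$. Availability at $\sts$ follows directly from the construction. By the definition of $\exptransf$, $\exptransf(\sts,\abst(\actprof{}'))=\exptransf(\sts,\abst(\actprof{}))=\sts'$, placing $\sts'\in\outf(\sts,\actprof{\coal_2},\opp_2)$.

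For part~(2), I would show that every $\pth=\sts_0,\sts_1,\ldots\in\outf(\sts,\stratgn{\coal_1},\opp_1)$ also lies in $\outf(\sts,\stratgn{\coal_2},\opp_2)$. Unfolding the definition of outcome plays, at each stage $i\in\nat$ there is a witnessing joint action $\actprof{i}\in\actprofsetr{\sts_i}{(\coal_1\cup\opp_1)}$ with $\actprof{i}|_{\coal_1}=\stratgn{\coal_1}(\sts_i)$ and $\exptransf(\sts_i,\abst(\actprof{i}))=\sts_{i+1}$. Since $\stratgn{\coal_1}(\sts_i)\equiv\stratgn{\coal_2}(\sts_i)$ at every state by hypothesis, I apply the construction from part~(1) pointwise at each $\sts_i$ to obtain a joint action $\actprof{i}'\in\actprofsetr{\sts_i}{(\coal_2\cup\opp_2)}$ with $\actprof{i}'|_{\coal_2}=\stratgn{\coal_2}(\sts_i)$ and $\abst(\actprof{i}')=\abst(\actprof{i})$, yielding the same successor $\sts_{i+1}$. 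This places $\pth$ in $\outf(\sts,\stratgn{\coal_2},\opp_2)$, and the converse is identical.

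The only mildly subtle point worth explicit mention is that $\abst$, though originally defined on full action profiles, extends naturally to joint actions on subsets $A\subseteq\agset$ by the same preimage-counting rule, and this extension is additive under disjoint unions of domains. Once this is noted, there is no real obstacle, as everything reduces to the equality $\abst(\actprof{})=\abst(\actprof{}')$ combined with Lemma~\ref{fact1} and the definition of $\exptransf$.
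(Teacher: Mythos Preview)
Your proposal is correct and follows essentially the same bijection/counting-preservation argument as the paper. The only cosmetic difference is that the paper builds the witness profile via a single global bijection $f:\agset\to\agset$ extending $h:\coal_2\to\coal_1$ with $f[\opp_2]=\opp_1$, whereas you glue $\actprof{\coal_2}$ directly onto a transported opponent action; your version is in fact slightly cleaner because it yields $\actprof{}'|_{\coal_2}=\actprof{\coal_2}$ on the nose rather than only up to equivalence.
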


\begin{proof}
 (1) Let $\sts' \in \outf(\sts, \actprof{\coal_1},\opp_1)$. 
 Then  $\sts' = \exptransf(\sts, \abst(\actprof{1}))$ 
for some  $\actprof{1} \in \actprofsetr{\sts}{(\coal_1 \cup \opp_1)}$   
such that $\actprof{1} |_{\coal_1} = \actprof{\coal_1}$.
 Fix a bijection $h: \coal_2 \ra \coal_1$. 
It can be extended to a bijection $f: \agset \ra \agset$, such that 
 $f[\opp_2] = \opp_1$. 
Define $\actprof{2} \in \actprofsetr{\sts}{(\coal_2 \cup \opp_2)}$ 
  so that $\actprof{2}(\ag) := \actprof{1}(f(\ag))$. 
 Clearly, $\abst(\actprof{2}) = \abst(\actprof{1})$. 
Also $\actprof{2}|_{\coal_2}  = \actprof{1}|_{f[\coal_2]}$ 
as $f[\coal_2] = \coal_1$, hence 
$\abst(\actprof{2}|_{\coal_2})  = \abst(\actprof{1}|_{\coal_1}) = 
\abst(\actprof{\coal_1}) = \abst(\actprof{\coal_2}) $ (since 
  $\actprof{\coal_1} \equiv \actprof{\coal_2}$). 
   Therefore, we obtain that $\sts' = \exptransf(\sts, \abst(\actprof{2}))  
   \in \outf(\sts, \actprof{\coal_2},\opp_2)$. 
  Thus,  $\outf(\sts, \actprof{\coal_1},\opp_1) \subseteq \outf(\sts, \actprof{\coal_2},\opp_2)$. 
  The proof of the converse inclusion is completely symmetric. 
  
 \medskip 
 (2) The claim follows easily by using (1). Indeed, every play $\pth = \sts_0, \sts_1, ... $ in $\outf(\sts, \stratgn{\coal_1},\opp_1)$ can be generated step-by-step as a play in $\outf(\sts, \stratgn{\coal_2},\opp_2)$, by using the equivalence of $\stratgn{\coal_1}$ and $\stratgn{\coal_2}$ and  applying (1) at every step of the construction. We leave out the routine details. 
Thus, $\outf(\sts, \stratgn{\coal_1},\opp_1) \subseteq \outf(\sts, \stratgn{\coal_2},\opp_2)$. 
Again, the converse inclusion is completely symmetric. 
\end{proof}

\medskip

We now prove that, as expected, the outcome sets from joint actions and strategies do not depend on the actual sets of controllable and uncontrollable agents, but only on their sizes.

\begin{lemma}
\label{lem:act-abstraction}
  Let $\dmas$ be a \oursys, $\sts \in \States$,  with 
  $\coal, \opp\subseteq \agset$ be the respective current sets of controllable and uncontrollable agents (hence, assumed disjoint), and let 
  $\actprof{\coal} \in \actprofsetr{\sts}{\coal}$ be an available joint action for $\coal$ at $\sts$. 
Then for every $\coal' \subseteq \agset$ such that $|\coal'| = |\coal|$ there exists an available joint action $\actprof{\coal'}$ for $\coal'$ at $\sts$, 
such that for every $\opp' \subseteq \agset$ where $\coal' \cap \opp' = \emptyset$, if $|\opp'| = |\opp|$, then 
$\outf(\sts, \actprof{\coal'}, \opp') = \outf(\sts, \actprof{\coal}, \opp)$.
\end{lemma}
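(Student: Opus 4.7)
The idea is to construct $\actprof{\coal'}$ from $\actprof{\coal}$ by relabelling agents via a bijection, so that $\actprof{\coal'} \equiv \actprof{\coal}$ in the sense of Definition~\ref{def:equivalence}, and then to invoke Lemma~\ref{lem:equiv} to conclude that the outcome sets coincide for any compatible choice of $\opp'$.

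First, since $|\coal'| = |\coal|$, fix any bijection $h: \coal' \to \coal$. Define the joint action $\actprof{\coal'} : \coal' \to \actset^+$ by setting $\actprof{\coal'}(\ag) := \actprof{\coal}(h(\ag))$ for every $\ag \in \coal'$. Because $\actprof{\coal}$ is available at $\sts$, i.e., $\actprof{\coal}(h(\ag)) \in \actav(\sts)$ for each $\ag \in \coal'$, the joint action $\actprof{\coal'}$ belongs to $\actprofsetr{\sts}{\coal'}$. Moreover, for each action $\act \in \actset^+$, the fibre $\actprof{\coal'}^{-1}(\act) = h^{-1}(\actprof{\coal}^{-1}(\act))$ has the same cardinality as $\actprof{\coal}^{-1}(\act)$ because $h$ is a bijection. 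Consequently $\abst(\actprof{\coal'}) = \abst(\actprof{\coal})$, which by Definition~\ref{def:equivalence} means $\actprof{\coal'} \equiv \actprof{\coal}$.

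Now let $\opp' \subseteq \agset$ be any set with $\coal' \cap \opp' = \emptyset$ and $|\opp'| = |\opp|$. Since $\coal_1 := \coal$, $\coal_2 := \coal'$, $\opp_1 := \opp$, $\opp_2 := \opp'$ satisfy $|\coal_1| = |\coal_2|$, $|\opp_1| = |\opp_2|$, $\coal_1 \cap \opp_1 = \emptyset$, and $\coal_2 \cap \opp_2 = \emptyset$, and since $\actprof{\coal}$ and $\actprof{\coal'}$ are equivalent and both available at $\sts$, part (1) of Lemma~\ref{lem:equiv} yields directly $\outf(\sts, \actprof{\coal'}, \opp') = \outf(\sts, \actprof{\coal}, \opp)$.

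Crucially, the constructed $\actprof{\coal'}$ depends only on the chosen bijection $h$ and on $\actprof{\coal}$, and does \emph{not} depend on $\opp'$; hence the same $\actprof{\coal'}$ works uniformly for all admissible choices of $\opp'$. No step here is really an obstacle: the argument rests entirely on the counting-abstraction invariance already established in Lemma~\ref{lem:equiv}, and the only thing to check is that the relabelling via $h$ preserves availability and abstraction, both of which are immediate.
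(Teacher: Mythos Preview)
Your proof is correct and follows essentially the same approach as the paper: fix a bijection $h:\coal'\to\coal$, define $\actprof{\coal'}(\ag):=\actprof{\coal}(h(\ag))$, observe that this preserves availability and the abstraction, and then invoke Lemma~\ref{lem:equiv}(1). Your version is slightly more explicit (spelling out the fibre argument and the uniformity in $\opp'$), but the structure and the key idea are identical.
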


\begin{proof}
Fix any $\coal' \subseteq \agset$ such that $|\coal'| = |\coal|$. 
 Take a bijection $h: \coal' \ra \coal$. 
It transforms canonically the joint action $\actprof{\coal}$ to a joint action $\actprof{\coal'}$ available at $\sts$, defined by 
 $\actprof{\coal'}(\ag) := \actprof{\coal}(h(\ag))$. Clearly, $\abst(\actprof{\coal'}) = \abst(\actprof{\coal})$. 
 Hence, by Lemma \ref{lem:equiv},  
 $\outf(\sts, \actprof{\coal'}, \opp') = \outf(\sts, \actprof{\coal}, \opp)$ for every 
  $\opp' \subseteq \agset$ such that $\coal' \cap \opp' = \emptyset$ and 
  $|\opp'| = |\opp|$. 
\end{proof}

Lemma \ref{lem:act-abstraction} easily extends to joint strategies, as follows. 

\begin{lemma}
\label{lem:str-abstraction}
  Let $\dmas$ be a \oursys, $\sts \in \States$,  with 
  $\coal, \opp\subseteq \agset$ be the respective current (disjoint) sets of controllable and uncontrollable agents, and let $\stratgn{\coal}$ be a joint strategy for \ 
  $\coal$. 
Then for every $\coal' \subseteq \agset$ with $|\coal'| = |\coal|$ there exists a joint strategy $\stratgn{\coal'}$ 
such that for every $\opp' \subseteq \agset$ where $\coal' \cap \opp' = \emptyset$, 
 if $|\opp'| = |\opp|$, then $\outf(\sts, \stratgn{\coal'}, \opp') = \outf(\sts, \stratgn{\coal}, \opp)$.
\end{lemma}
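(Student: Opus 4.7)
The plan is to lift the proof of Lemma~\ref{lem:act-abstraction} pointwise over states and then invoke Lemma~\ref{lem:equiv}(2). First, fix any $\coal' \subseteq \agset$ with $|\coal'| = |\coal|$ and pick a bijection $h : \coal' \to \coal$. For each state $\stt \in \States$, the joint action $\stratgn{\coal}(\stt) \in \actprofsetr{\stt}{\coal}$ is available at $\stt$. Define the joint action $\stratgn{\coal'}(\stt) : \coal' \to \actset^+$ by
\[
\stratgn{\coal'}(\stt)(\ag) \ := \ \stratgn{\coal}(\stt)(h(\ag)) \qquad \text{for every } \ag \in \coal'.
\]
Since $\stratgn{\coal}(\stt)(h(\ag)) \in \actav(\stt)$ for each $\ag \in \coal'$ (as $\stratgn{\coal}$ is a joint strategy), we have $\stratgn{\coal'}(\stt) \in \actprofsetr{\stt}{\coal'}$, so $\stratgn{\coal'}$ is a well-defined joint strategy for $\coal'$.

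Next, I would verify $\stratgn{\coal'} \equiv \stratgn{\coal}$ in the sense of Definition~\ref{def:equivalence}. At every $\stt \in \States$, because $h$ is a bijection between $\coal'$ and $\coal$, the number of agents of $\coal'$ assigned each action $\act \in \actset^+$ by $\stratgn{\coal'}(\stt)$ equals the number of agents of $\coal$ assigned $\act$ by $\stratgn{\coal}(\stt)$. Hence $\abst(\stratgn{\coal'}(\stt)) = \abst(\stratgn{\coal}(\stt))$ at every state, which is precisely the equivalence of joint strategies required.

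Finally, given any $\opp' \subseteq \agset$ with $\coal' \cap \opp' = \emptyset$ and $|\opp'| = |\opp|$, the hypotheses of Lemma~\ref{lem:equiv}(2) are satisfied (equal coalition sizes, equal opponent sizes, and disjointness of $\coal'$ from $\opp'$ and of $\coal$ from $\opp$). Applying that lemma yields $\outf(\sts, \stratgn{\coal'}, \opp') = \outf(\sts, \stratgn{\coal}, \opp)$, as required.

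The one mildly non-routine point is ensuring that a single strategy $\stratgn{\coal'}$ works uniformly for all admissible $\opp'$, rather than needing to be rebuilt for each opponent set; but since the construction of $\stratgn{\coal'}$ via $h$ does not refer to $\opp$ or $\opp'$ at all, uniformity in $\opp'$ is automatic. Thus no serious obstacle arises, and the result follows directly from the strategy-level reformulation of the argument used for Lemma~\ref{lem:act-abstraction}.
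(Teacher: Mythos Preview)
Your proof is correct and follows essentially the same approach as the paper: fix a bijection $h:\coal'\to\coal$, define $\stratgn{\coal'}(\stt)(\ag):=\stratgn{\coal}(\stt)(h(\ag))$ pointwise in states, observe that this yields equivalent strategies, and conclude via Lemma~\ref{lem:equiv}(2). Your additional remarks on well-definedness and on uniformity in $\opp'$ are sound and slightly more explicit than the paper's version.
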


\begin{proof}
The argument is similar to the previous proof. 

Fix any $\coal' \subseteq \agset$ such that $|\coal'| = |\coal|$. 
 Take a bijection $h: \coal' \ra \coal$. 
It transforms canonically the joint strategy $\stratgn{\coal}$  to a joint strategy  
$\stratgn{\coal'}$, defined by 
 $\stratgn{\coal'}(\sts)(\ag) := \stratgn{\coal}(\sts)(h(\ag))$. 
 Clearly, $\abst(\stratgn{\coal'}(\sts)) = \abst(\stratgn{\coal}(\sts))$ for every state $\sts$, hence $\stratgn{\coal} \equiv \stratgn{\coal'}$.   
Therefore, by Lemma \ref{lem:equiv},  
$\outf(\sts, \stratgn{\coal'}, \opp') = \outf(\sts, \stratgn{\coal}, \opp)$ for every 
  $\opp' \subseteq \agset$ such that $\coal' \cap \opp' = \emptyset$ and 
  $|\opp'| = |\opp|$. 
  \end{proof}

Lemmas \ref{lem:act-abstraction} and \ref{lem:str-abstraction} essentially say that the strategic abilities in a \oursys are determined not by the concrete sets of controllable and uncontrollable agents, but only by their respective sizes. This justifies abstracting the notions of coalitional actions and strategies in terms of action profile abstractions, to be used thereafter in our semantics and verification procedures.

\begin{definition}
\label{def:abstr-action}
\label{def:abstr-strategy}
\label{def:abstr-outcome}
Let $\dmas$ be a \oursys and $C,N \in \nat$. 

\medskip
1.1. An \defstyle{abstract joint action for a coalition of $C$ agents at state $\sts \in \States$} is an action distribution 
 $\actass_{C} \in \actasssetn{C}$ such that $\dom(\actass_{C}) = \actmap[\actav(\sts)]$ \ 
(recall notation from Definition \ref{def:act-dist}).   

 Thus, an abstract joint action for a given coalition at state $\sts$ prescribes for each action available at $\sts$ how many agents from the coalition take that action.

\smallskip
1.2. The \defstyle{outcome set of states of the abstract joint action $\actass_{C}$ 
  of $C$ controllable agents against $N$ uncontrollable agents at $\sts$} is the set of states
\[ \outf(\sts, \actass_{C}, N) := 
    \big\{ \sts' \in \States \mid \sts' = \exptransf(\sts, \actass_{C} \oplus  \actass_{N})  
    \mbox{ for some } \actass_{N} \in \actasssetn{N}  
         \]
     \[ 
 ~    \hspace{66mm}
      \mbox{ such that } \dom(\actass_{N}) = \actmap[\actav(\sts)]  
 \big\}.  
  \]
 
\smallskip
2.1. An \defstyle{abstract (positional) joint strategy for a coalition of $C$ agents} is a function $\astratg_{C}: \States \ra \actasssetn{C}$ such that for each $\sts \in \States$,
$\astratg_{C}(\sts)$ is an abstract joint action such that 
$\dom(\astratg_{C}(\sts)) = \actmap[\actav(\sts)]$.

\smallskip
2.2. The \defstyle{outcome set of plays of an abstract  joint strategy $\astratg_{C}$   
  of $C$ controllable agents 
  against $N$ uncontrollable agents} is the set of plays
  \[
  \outf(\sts, \astratg_{C}, N) := 
\big\{ \pth = \sts_0, \sts_1, ... \mid  \sts_0=\sts  \mbox{ and for all } i \in \nat  
 \mbox{ there is } \actass_{i} \in \actasssetn{N} 
 \]
   \[
   ~    \hspace{26mm} 
 \mbox{ such that } \dom(\actass_{i}) = \actmap[\actav(\sts)]  \mbox{ and }  
\Delta(\sts_i, \astratg_{C}(\sts_i) \oplus \actass_{i} ) = \sts_{i+1}
\big\}.
 \]
\end{definition}

%%%%%%%%%%%%%%%%%%%%%%%%%%%%%%%%%%%%%%%%%%%%%%%%%%%%

\section{Logic for specification and verification of HDMAS} 
\label{sec:logic}

We now introduce a logic $\lang$ for specifying and verifying properties of \oursys, based on the Alternating-time Temporal Logic ATL. It features a strategic operator that expresses the ability of a set of controllable agents to guarantee the satisfaction a temporal objective, regardless of the actions taken by the set of uncontrollable agents. As shown in the previous section, such ability only depends on the sizes of these sets. Therefore, our strategic operator $\coop{\ast, \ast}$ takes two arguments: the first one represent the number of controllable agents and the second  -- the number of uncontrollable agents currently present in the system. Intuitively, a formula of the kind $\coop{C, N} \chi$, with $C, N \in \nat$ and $\chi$ being a (path) formula of $\lang$ specifies the property: \\ 
``\textit{A coalition of $C$ controllable agents has a joint  strategy to guarantee satisfaction of the objective $\chi$ against $N$ uncontrollable agents on every play consistent with that strategy}''.  

Each of the arguments $C$ and $N$ may be a concrete number, a parameter, or a variable that can be quantified over. Parameters are free variables that cannot be quantified over, which gives extra expressiveness of the language, because some syntactic restrictions will be imposed on the variables.

%%%%%%%%%%%%%%%%%%%%%%%
\subsection{Formal syntax and semantics} 
\label{subsec:synsem} 
%%%%%%%%%%%%%%%%%%%%%%%

We now fix a set of atomic propositions $\Phi = \{p_1,p_2,....\}$,
a set of two special variables $\agvarset = \set{\agvar_1, \agvar_2}$, ranging over $\nat$, which we call \defstyle{agent counters}. These will represent the numbers of controllable and uncontrollable agents respectively, and can be quantified over. 
We also fix a set of  \defstyle{agent counting parameters}\footnote{The role of parameters is mostly auxiliary, just like in algebra (or in first-order logic) and will be discussed further.}
 $\parset = \set{\parvar_1, \parvar_2, \ldots}$, again ranging over $\nat$, 
and define the set of \defstyle{terms}\footnote{To avoid cluttering the notation, we will identify here natural numbers with their numerals.} as $\termset = \agvarset \cup \parset \cup \nat$.
These will be used as arguments of the strategic operators in the logical language defined below.  

\begin{definition}\label{def:syntax} 
The logic $\lang$ has two sorts of formulae, defined by mutual induction with the following grammars, where {free} (and {bound}) occurrences of variables
are defined like in first-order logic (FOL): 

\smallskip
\defstyle{Path formulae}: \ \  
$ \chi ::= {}  \atlx \varphi \mid 
\atlg \varphi \mid \psi \atlu \varphi $, \\ 
where $\varphi, \psi$ are state formulae. 

\smallskip
\defstyle{State formulae}:   \\
$\varphi ::= {}  \top \mid p \mid \neg \varphi \mid 
 (\varphi \land \varphi)  \mid (\varphi \lor \varphi)  \mid
     \coop{\term_1, \term_2}{ \chi} \mid \forall \agvar \varphi  \mid \exists \agvar \varphi$
          \smallskip 
\\
 where $p \in \Phi$,
   $\term_1 \in \termset \setminus \{\agvar_2\}$, 
   $\term_2 \in \termset \setminus \{\agvar_1\}$,    
   $\agvar \in \agvarset$,  
   and $\chi$ is a path formula. 
   The cases of $\forall \agvar \varphi$ and  $\exists \agvar \varphi$ are subject to the following syntactic constraint:  all free occurrences of $\agvar$ in $\varphi$ must have a positive polarity, viz. must be in the scope of an even number of negations.

The propositional connectives $\bot, \to, \ifff$
 are defined as usual. 
 Also, we define
 ${\atlf \psi} :=   
 {\top \atlu \psi}$.  
\end{definition}

\begin{remark}  Some remarks on the 
formulae in $\lang$ are in order: 

\begin{enumerate}
\item 
Note that $\agvar_1$ can only occur in the first position  of $\coop{\term_1, \term_2}$ and $\agvar_2$ can only occur in the second position.  
However, the same parameter $z$ may occur in both positions and this is one reason to allow the use of parameters, as the model checking algorithm will treat them uniformly. 

\item The restriction for quantification only over positive free occurrences of variables is imposed for technical reasons. By using the duality of $\forall$ and $\exists$, that restriction can readily be relaxed to the requirement all free occurrences of the quantified variable to be of \emph{the same} polarity (all positive, or all negative). Further relaxation, allowing both positive and negative occurrences under some restrictions, is possible, but it would complicate further the syntax and the model checking algorithm,  without making an essential contribution to the \emph{useful} expressiveness of the language. Indeed, one can argue that, if a formula is to make a meaningful claim about the strategic abilities of the coalition of controllable agents which is quantified over the number of these agents, then it is natural to assume that the controllable coalition appear only in positive context in that claim\footnote{For instance, $\exists y_1 \coop{y_1, 10} \atlx \neg \exists y_1 \coop{y_1, 5} p$, where $y_1$ appear first positively and then negatively, expresses that ``there exists a coalition of controllable agents that can ensure against 10 uncontrollable agents that at the next step there is no coalition of controllable agents that can ensure against 5 uncontrollable agents the truth of $p$'', which is a rather unusual specification in any practical context.}.

\item Some additional useful syntactic restrictions can be imposed, which (as it will be shown in the next section) do not essentially restrict the expressiveness of the language. They lead to the notion of `normal form', to be introduced shortly.   
\end{enumerate}
\end{remark}

 Hereafter, by \lang-formulae we will mean, unless otherwise specified, state formulae of \lang, whereas we will call the path formulae in $\lang$ \defstyle{temporal objectives}. In particular,  
for any \lang-formula $\phi$ of the type $ \coop{\term_1, \term_2}  \chi$, the path subformula $\chi$ is called the \defstyle{temporal objective of $\phi$}.

Some examples of $\lang$ formulae: 
\begin{itemize}
\item 
{with reference to the fortress example: 

\begin{itemize}
\item $\coop{C, N_1} \atlx \coop{C, N_2} \atlx \coop{C, N_3} \atlx \neg \mathit{captured}$, with $N_1 < N_2 < N_3$ and $N_i \in \nat$ for $i = \set{1, 2, 3}$, says that there is a strategy for $C \in \nat$ defenders to hold the fortress for three days against an increasing number of attackers.

\item
$\exists \agvar_1 \coop{\agvar_1, N} \atlg \neg \mathit{captured}$ expresses that there is a number $\agvar_1$ of defenders that have a strategy to hold the fortress forever against $N$ many invaders.  

\item
$\forall \agvar_2 \coop{C, \agvar_2} \atlg \neg \mathit{captured}$ expresses that, for any number of invaders $\agvar_2$, there is a strategy for $C$ defenders to hold the fortress  forever against $\agvar_2$ invaders\footnote{We note that such a strategy would generally depend on $\agvar_2$. However, given the monotonicity properties that our logic enjoys (proved in the next section), it turns out that the above reading of the formula is equivalent on finite models to ``there exists a strategy for $C$ defenders to hold the fortress agains any number $\agvar_2$ of attackers''.}.

\item $ \forall y_2 \exists y_1 \coop{ y_1, y_2} \atlg \lnot \mathit{captured}$  expresses that for any number ($y_2$) of invaders there is a number ($y_1$)  of defenders  who have a joint strategy to  hold the fortress forever.   
\end{itemize}
}

\item lastly, an abstract example with nesting of strategic operators and quantifiers: \\
$\coop{z_2, z_2} \atlx p \lor \exists y_1 (\coop{y_1, z_1} \atlf \coop{y_1, y_2} \atlx \lnot p  \land 
\lnot \forall y_2
\coop{z_1, y_2} \atlx  
\lnot \coop{y_1, z_2} p \atlu q)$, \\   
for $z_1,z_2 \in \parset$.
\end{itemize}

\medskip
The semantics of $\lang$ is based on the standard, positional strategy semantics of ATL (cf \cite{AHK-02} or \cite{BGJ15}), applied in \oursys models, but uses abstract joint actions and strategy profiles, rather than concrete ones. 
In order to evaluate formulae that contain free variables and parameters, we use a version of FOL \defstyle{assignment}, here defined as a function $\agass: \termset \ra \nat$, where $\agass(i)=i$ for $i \in \nat$.

\begin{definition}\label{def:semantics}
Let $\dmas$ be a \oursys, $\sts$ be a state and $\agass$ an assignment in it. The \defstyle{satisfaction relation} $\models$ is inductively defined on the structure of $\lang$-formulae as follows:

      \begin{enumerate}
     \item $\dmas, \sts, \agass \models \top$; 
      
\item $\dmas, \sts, \agass \models p$ iff $p \in \labf(\sts)$;

\item $\land$ and $\neg$ have the standard semantics;

\item\label{def:semantics:coop} $\dmas, \sts, \agass \models \coop{\term_1, \term_2}{\chi}$ iff there exists an abstract strategy $\astratg_{C}$ for a coalition of $C = \agass(\term_1)$ agents such that for every play $\pth$ in the outcome set $\outf(\sts, \astratg_{C}, N)$ against $N=\agass(\term_2)$ uncontrollable agents the following hold:
  \begin{enumerate}
\item if $\chi = \atlx \varphi$ then $\dmas, \pth[1], \agass \models \varphi$;
\item if $\chi = \atlg \varphi$ then $\dmas, \pth[i], \agass \models \varphi$ for every $i \in \nat$;
 \item  if $\chi = \varphi_1 \atlu \varphi_2$ then $\dmas, \pth[i], \agass \models \varphi_2$ for some $i \geq 0$ and $\dmas, \pth[j], \agass \models \varphi_1$ for all $0 \leq j < i$;
\end{enumerate}
  
\item $\dmas, \sts, \agass \models \forall \agvar \varphi$ iff $\dmas, \sts, \agassow{\agvar}{m} \models \varphi$ for every $m \in \nat$, 
where the assignment $\agassow{\agvar}{m}$ assigns $m$ to $\agvar$ and agrees with $\agass$ on every other argument. 

\item Likewise for $\dmas, \sts, \agass \models \exists \agvar \varphi$.  
\end{enumerate}

\end{definition}

The notions of \defstyle{validity}
and \defstyle{(logical) equivalence} in $\lang$ are defined as expected, and we will use the standard notation for them, viz. $\models \varphi$ for validity and $\varphi_1 \equiv \varphi_2$ for equivalence. 
We also say that two $\lang$-formulae, $\varphi_1$ and $\varphi_2$ are \defstyle{equivalent in the finite}, denoted $\varphi_1 \fineq \varphi_2$, if $\dmas, \sts,  \agass \models \varphi_1$ iff $\dmas, \sts,  \agass \models \varphi_2$ for any finite \oursys model  
$\dmas$ and state $\sts$ and assignment $\agass$ in $\dmas$.

\begin{remark} Note the following:  
\begin{enumerate}
\item Defining the semantics in terms of abstract joint actions and strategies in the truth definitions of the strategic operators, rather than concrete ones, is justified by Lemmas \ref{lem:act-abstraction} and \ref{lem:str-abstraction}
which imply that the `concrete' and the `abstract' semantics are equivalent.

\item Just like in FOL, the truth of any $\lang$-formula $\varphi$ only depends on the assignment of values to the parameters that occur in $\varphi$ and to the variables that occur free in $\varphi$. In particular, it does not depend at all on the assignment for \emph{closed formulae} (containing no parameters and free variables). In such cases we simply write 
 $\dmas, \sts \models \varphi$.    
 
 \item
 Again, just like in FOL, if $\agvar$ has no free occurrences in $\varphi$, then 
 $\forall \agvar \varphi \equiv \exists \agvar \varphi \equiv \varphi$. 
 Thus, in order to avoid such vacuous quantification, whenever it occurs we can assume that the formula is simplified automatically according to these equivalences. 
\end{enumerate}
\end{remark}

\begin{example}
\label{ex:4.3}  
Consider the \oursys $\dmas$ in Example~\ref{ex:1}. 

\begin{enumerate}
\item The closed formula $\varphi = \coop{7, 5} \atlx p$
is satisfied in state $\sts_1$ of $\dmas$.
Indeed, any abstract joint strategy $\astratg_{7}$ that prescribes $\noop$ to 3 of the  controllable agents ($\astratg_{7} (\sts_1)(\noop)=3$) and $\act_{3}$ to 4 of them ($\astratg_{7} (\sts_1)(\act_{3})=4$) guarantees that guard $\guard_2$ is satisfied, enforcing transition from $\sts_1$ to $\sts_3$. 
\item
 $\dmas, \sts_1 \models \lnot \exists \agvar_1 \coop{\agvar_1, 11} \atlx p$.
  Indeed, for any value of $\agvar_1$ the abstract joint action profile for the uncontrollable agents that prescribes to all of them to perform $\act_3$ falsifies both $\guard_1$ and $\guard_2$, thus forces a loop to $\sts_1$ where $p$ is false.

 \item $\dmas, \sts_4 \models  \coop{7, 4} \atlx (\forall \agvar_2 \exists \agvar_1 \coop{\agvar_1, \agvar_2} \atlg p)$, as we show in Section \ref{sec:MC}.
\end{enumerate}  
\end{example}

%%%%%%%%%%%%%%%%%%%%%%%
\subsection{Normal form and monotonicity properties} 
\label{subsec:normalforms1} 
%%%%%%%%%%%%%%%%%%%%%%%

This is a technically important section, where we define the fragment \langb of normal form formulae of \lang.
The normal form impose essential syntactic restrictions and therefore reduce the expressiveness of the language. However, the key technical result obtained here is that every formula in \lang\  is equivalent \emph{on finite models} to one in \langb. The importance of that result will be discussed further.

  \begin{definition}\label{def:normalform}
A $\lang$-formula $\psi$ is \defstyle{in a normal form} if: 
\begin{enumerate}
\item[(NF1)] There are no occurrences of  $\forall y_1$ or $\exists y_2$ in $\psi$.  

\item[(NF2)] Every subformula $\coop{\term_1, \term_2}  \chi$ of $\psi$ where either $\term_1 = y_1$ or $\term_2 = y_2$ (but not both), such that that variable occurrence is bound in $\psi$, is immediately preceded respectively by $\exists y_1$ or $\forall y_2$. 

\item[(NF3)] Every subformula $\coop{y_1, y_2}  \chi$, where both variable occurrences are
bound in $\psi$, is immediately preceded either by $\forall y_2 \exists y_1$ or $\exists y_1 \forall y_2$.
\end{enumerate}
 \end{definition}
Of the example formulae given after Definition~\ref{def:syntax}, the first two are in normal form, while the last one is not.   

We denote by \langb\ the fragment of \lang\ consisting of all formulae in normal form. 
We can give a more explicit definition of the formulae of \langb, 
by modifying the recursive definition of state formulae of \lang, where
the clauses  $\forall y \varphi$ and  $\exists y \varphi$ are replaced with the following, where 
$\chi$ is a temporal objective:
\begin{align} \label{eq:nf-temporal}
  \begin{split}
	& \exists y_1 \coop{y_1, \term_2}\chi  \mid  
         \forall y_2  \exists y_1 \coop{y_1, y_2}\chi   \mid  \\
        &  \forall y_2 \coop{\term_1, y_2}\chi   \mid  	
        \exists y_1  \forall y_2 \coop{y_1, y_2}\chi  
        \end{split}
\end{align}

The same syntactic constraints as before apply. In addition, in each case above no variable quantified in the prefix of the formula may occur free in $\chi$.

\medskip
   The rest of the section is devoted to prove that every formula in \lang\ is logically equivalent in the finite to one in \langb. That is of crucial importance, as our model checking algorithm works only on \langb\ formulae. Indeed, the fact that quantification in formulae in normal form does not span across multiple temporal objectives enables us to obtain fixpoint characterizations for formulae of the types listed in~(\ref{eq:nf-temporal}) above, presented at the end of this section, in Theorem \ref{lem:fixpoints1}. That, in turn, allows us to retain the basic structure of the recursive model checking algorithm for ATL (cf \cite{AHK-02} or \cite{BGJ15}).  

A first important observation is that the semantics of the strategic operators in  \lang\xspace is \emph{monotonic} with respect to the number of controllable and uncontrollable agents, in a sense formalized in the following lemma.

Hereafter, for a given formula $\varphi$, term $t$ and $k\in \nat$, we denote by $\varphi[k/t]$ the result of  uniform substitution of all \emph{free}\footnote{The constraint to free occurrences is, of course, only relevant when $t$ is a variable.} 
 occurrences of $t$ in $\varphi$ by $k$.
 
\begin{lemma} 
\label{lem:monotonicity} 
For every $\lang$-formula $\varphi$ and a term $t$ 
the following monotonicity properties hold.  

\begin{description}
\item[\cmon] 
Suppose $C,C' \in \nat$ are such that $C'>C$. Then: 

\cmon$^+$:  If all free occurrences of $t$ are positive
and only in first position in strategic operators in $\varphi$ 
then $\models \varphi[C/t] \to \varphi[C'/t]$. 

\smallskip
\cmon$^-$:  If all free occurrences of $t$ are negative and only in first position in strategic operators in $\varphi$ then $\models \varphi[C'/t] \to \varphi[C/t]$. 

\medskip
\item[\nmon]
Suppose $N,N' \in \nat$ are such that $N' <N$. Then: 

\nmon$^+$:  
 If all free occurrences of $t$ are positive 
and only in second position in strategic operators in $\varphi$ then $\models \varphi[N/t] \to \varphi[N'/t]$.

\smallskip

\nmon$^-$:  
If all free occurrences of $t$ are negative and only in second position in strategic operators in $\varphi$ then $\models \varphi[N'/t] \to \varphi[N/t]$.
\end{description}
\end{lemma}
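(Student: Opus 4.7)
The plan is to prove all four statements simultaneously by structural induction on $\varphi$. The atomic cases ($\top$ and $p$) are trivial since $t$ does not occur, and the Boolean and quantifier cases go through by routine application of the induction hypothesis: for $\land$ and $\lor$ the polarity and position constraints on $t$ are inherited by each conjunct/disjunct; for $\neg\varphi$, negation flips the polarity of every free occurrence of $t$, so \cmon$^+$ for $\neg\varphi$ reduces to \cmon$^-$ for $\varphi$, and symmetrically for \nmon; and for $\forall y\,\varphi$ and $\exists y\,\varphi$, since $t$ is free we have $t\neq y$, substitution commutes with the quantifier, and the IH applies at every witness value of $y$.

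The substantive case is the strategic operator $\varphi = \coop{\term_1, \term_2}\chi$, which I would split according to whether $t$ occurs in $\term_1$, $\term_2$, or only in state subformulae of $\chi$. If $t$ occurs only within $\chi$, apply the IH pointwise to those state subformulae and combine with the semantic monotonicity of $\coop{\term_1, \term_2}$ in its path argument, which is immediate from Definition~\ref{def:semantics}. The genuinely new argument is when $t$ instantiates the first or second strategic slot, and it rests on a single semantic observation: by the definition of guards, every guard is a quantifier-free Presburger formula over the variables in $\actvarset$ only, and therefore does \emph{not} mention $\actvar_\noop$. Hence adding or removing agents that perform the idle action $\noop$ never affects the satisfaction of any guard, and thus never affects the transition function $\exptransf$.

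For \cmon$^+$ with $\term_1 = t$ and $C' > C$, given an abstract joint strategy $\astratg_C$ witnessing $\coop{C, \term_2}\chi[C/t]$ at state $\sts$, construct $\astratg_{C'}$ by keeping all values on counters in $\actvarset$ identical to those of $\astratg_C$ and increasing $\astratg_C(\sts')(\actvar_\noop)$ by exactly $C'-C$ at every state $\sts'$; this is well-defined since $\noop\in\actav(\sts')$ for every $\sts'$, and the resulting distribution sums to $C'$. By the guard-invariance observation, for every environment distribution $\actass$ the compounds $\astratg_{C'}(\sts')\oplus\actass$ and $\astratg_C(\sts')\oplus\actass$ satisfy exactly the same guards, so $\outf(\sts, \astratg_{C'}, N) = \outf(\sts, \astratg_C, N)$ as sets of plays; combining with the IH applied to the state subformulae of $\chi$ (which still satisfy the positive/first-position condition after substitution) yields the required implication. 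The case \nmon$^+$ with $\term_2 = t$ and $N' < N$ is dual: the same $\astratg_C$ works, since any opponent distribution in $\actasssetn{N'}$ can be padded to one in $\actasssetn{N}$ by adding $N-N'$ to $\actvar_\noop$ without altering any transition. The minus-variants \cmon$^-$ and \nmon$^-$ then follow from the plus-variants via the negation case.

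The main obstacle I anticipate is bookkeeping in the strategic-operator case when $t$ simultaneously instantiates an operator slot and appears in state subformulae of $\chi$: one must combine the idling-padding construction with the inductive passage through the subformulae of $\chi$ in a single step, while tracking polarities across nested negations and quantifiers. Once $\actvar_\noop$-invariance of guards has been isolated as the key semantic ingredient, however, the remaining work is a careful but routine case analysis.
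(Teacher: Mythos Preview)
Your proposal is correct and follows essentially the same approach as the paper: a simultaneous structural induction with the negation case interlinking the $^+$ and $^-$ variants, and the strategic-operator case handled by padding the controllable strategy (resp.\ the opponent's action distribution) with the idle action $\noop$, exploiting that guards do not mention $\actvar_{\noop}$. Your explicit isolation of the $\actvar_{\noop}$-invariance of guards as the key semantic ingredient, and your observation that this yields exact equality of outcome sets, make the core step slightly more transparent than the paper's version.
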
 

\begin{proof} ~
\cmon:
Both claims are analogous and we prove both by simultaneous induction on the structure of $\varphi$. We will present the proof for \cmon$^+$, and the claim of \cmon$^-$ will only be needed in the case when $\varphi = \lnot \psi$, proved by using the inductive hypothesis for \cmon$^-$ for $\psi$ and contraposition.  

\smallskip
The inductive cases where the main connective of $\varphi$ is $\land, \lor, \forall, \exists$ are easily proved by using the inductive hypothesis and the monotonicity of each of these logical connectives.  

The only more essential inductive case is $\varphi = \coop{t, t_2} \chi$, where the inductive hypothesis is that the claim of \cmon$^+$ holds for the main state subformulae of $\chi$. 
Note that the semantics of the strategic and temporal operators is argument-monotone, in sense that if $\models \psi \to \psi'$ then 
 $\models  \coop{t, t_2} \atlx \psi \to  \coop{t, t_2} \atlx \psi'$ and 
 $\models  \coop{t, t_2} \atlg \psi \to  \coop{t, t_2} \atlg \psi'$, and likewise for Until. By using that and the inductive hypothesis, we obtain that 
 $\models  \coop{t, t_2} \chi[C/t] \to \coop{t, t_2} \chi[C'/t] $. 
 Therefore,  $\models  \coop{C, t_2} \chi[C/t] \to \coop{C', t_2} \chi[C'/t] $. 
 Thus, it remains to show that 
 $\models  \coop{C, t_2} \chi[C/t] \to \coop{C', t_2} \chi[C/t] $. 
Let $\dmas, \sts,  \agass \models \coop{C, t_2} \chi[C/t]$.   
Let $\astratg_{C}$ be an abstract strategy for $C$ controllable agents such that every play $\pth$ in the outcome set $\outf(\sts, \astratg_{C}, \agass(t_2))$ against $\agass(t_2)$ uncontrollable agents satisfies the temporal objective $\chi[C/t]$. 
Then, since $C' > C$, the strategy $\astratg_{C}$ can be extended to strategy 
$\astratg_{C'}$ whereby the additional $C'-C$ many agents always perform the idle action 
$\noop$. Clearly, $\astratg_{C'}$ ensures that 
$\dmas, \sts,  \agass \models \coop{C', t_2} \chi[C/t]$. 

\medskip
\nmon: The proof is analogous to the one for \cmon, so we only treat the inductive case of $\varphi = \coop{t_1, t} \chi$ for the claim \nmon$^+$. 
Similarly to  the case of \cmon$^+$, it boils down to proving the validity $\models  \coop{t_1, N} \chi[N/t] \to \coop{t_1,N'} \chi[N/t]$. 
Let $\dmas, \sts,  \agass \models \coop{t_1, N} \chi[N/t]$ and let 
$\astratg_{C}$ be an abstract strategy for $C = \agass(t_1)$ controllable agents such that every play $\pth$ in the outcome set $\outf(\sts, \astratg_{C}, N)$ against $N$ uncontrollable agents satisfies the temporal objective $\chi[N/t]$. 
Then the same strategy would ensure $\dmas, \sts,  \agass \models \coop{t_1,N'} \chi[N/t]$ for every $N' < N$, since every joint action of  $N'$ can be lifted to a joint action of  $N$ leading to the same outcome, where the remaining $N-N'$ agents  always perform the idle action $\noop$. 
\end{proof}

A key consequence of the monotonicity properties is that it allows to eliminate some quantifier patterns, in the cases listed in the following lemma.

\begin{lemma} 
\label{lem:equiv1} 
For every term $t$
and temporal objective $\chi$  in \lang,  the following hold.
\begin{enumerate}
\item %1
$\forall y_1 \coop{y_1, t} \chi \equiv \coop{0, t} \chi[0/y_1]$; 

\item %2
$\exists y_2 \coop{t, y_2} \chi \equiv \coop{t, 0} \chi[0/y_2]$; 

\item %4
$\forall y_1 \exists y_2 \coop{y_1, y_2} \chi  \equiv 
\coop{0, 0} \chi[0/y_1,0/y_2]$; 

\item %4
$\exists y_2 \forall y_1 \coop{y_1, y_2} \chi  \equiv  
\coop{0, 0} \chi[0/y_1,0/y_2]$;   

\item %5
$\forall y_2 \forall y_1 \coop{y_1, y_2} \chi \equiv
\forall y_2 \coop{0, y_2} \chi[0/y_1]$;     

\item %6
$\forall y_1 \forall y_2 \coop{y_1, y_2} \chi \equiv 
\forall y_2 \coop{0, y_2} \chi[0/y_1]$; 

\item %7
$\exists y_1 \exists y_2 \coop{y_1, y_2} \chi \equiv 
\exists y_1 \coop{y_1, 0} \chi[0/y_2]$; 

\item %8
$\exists y_2 \exists y_1 \coop{y_1, y_2} \chi \equiv 
\exists y_1 \coop{y_1, 0} \chi[0/y_2]$.     
\end{enumerate}
\end{lemma}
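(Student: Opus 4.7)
The strategy is to obtain items (1) and (2) directly from the monotonicity properties of Lemma~\ref{lem:monotonicity}, and then to derive items (3)--(8) by a short combinatorial argument using (1), (2), and the obvious equivalences $\forall y_1 \forall y_2 \equiv \forall y_2 \forall y_1$ and $\exists y_1 \exists y_2 \equiv \exists y_2 \exists y_1$ (which hold exactly as in first-order logic, since the quantifier clauses in Definition~\ref{def:semantics} are pointwise).

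For item (1), the implication $\forall y_1 \coop{y_1, t}\chi \Rightarrow \coop{0, t}\chi[0/y_1]$ is immediate by instantiating $y_1$ with $0$ in the universal clause of Definition~\ref{def:semantics}. For the converse, I would apply \cmon$^+$ of Lemma~\ref{lem:monotonicity} to $\varphi := \coop{y_1, t}\chi$ with substitution variable $y_1$: by the syntactic constraints of Definition~\ref{def:syntax} (quantification is permitted only over positive free occurrences, and $y_1$ may appear only in the first position of strategic operators), every free occurrence of $y_1$ in $\varphi$ is positive and lies in a first position of $\coop{\cdot,\cdot}$. Hence $\varphi$ is non-decreasing in $y_1$, so its truth at $y_1 = 0$ transfers to every $y_1 \in \nat$, yielding $\forall y_1 \, \varphi$. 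Item (2) is entirely symmetric: using \nmon$^+$, the formula $\coop{t, y_2}\chi$ is non-increasing in $y_2$, hence its truth at any value of $y_2$ forces its truth at $y_2 = 0$; the reverse direction follows by instantiating $y_2$ with $0$.

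For items (3)--(8), a chain of rewrites suffices, justified by the fact that logical equivalence in $\lang$ is preserved under replacement of equivalent subformulae (since the semantics of every connective, quantifier, strategic, and temporal operator is compositional). Item (3) follows by first applying (2) to the inner $\exists y_2$, obtaining $\forall y_1 \coop{y_1, 0}\chi[0/y_2]$, and then applying (1) to the outer $\forall y_1$. Item (4) is symmetric, using (1) first and (2) second. Items (5) and (7) are single applications of (1) resp.\ (2) under an outer quantifier of the opposite kind that cannot be eliminated. Items (6) and (8) reduce to (5) and (7) respectively by commutativity of like quantifiers. The only point requiring care throughout is the hypothesis check for Lemma~\ref{lem:monotonicity}: in every instance one must verify that all free occurrences of the variable being substituted are positive and lie in the correct argument position of $\coop{\cdot,\cdot}$; this is guaranteed by Definition~\ref{def:syntax} together with the observation that $y_1$ never appears in the second position, nor $y_2$ in the first, of any strategic operator.
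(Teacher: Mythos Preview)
Your proposal is correct and follows essentially the same approach as the paper: both derive the equivalences from the polarity/position constraints of Definition~\ref{def:syntax} together with the monotonicity Lemma~\ref{lem:monotonicity}. The only minor difference is organisational: the paper treats claims (1)--(6) as direct consequences of monotonicity and obtains (7)--(8) by commuting the like quantifiers, whereas you first establish (1) and (2) and then derive (3)--(8) compositionally from them; your routing for (7) and (8) (via (2) under the outer $\exists y_1$, then commutation) is in fact a bit cleaner than the paper's somewhat cryptic reference to (5) and (6).
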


\begin{proof} ~ 

The logically non-trivial implications of claims 1-6 follow immediately from the polarity constraint in the definition of formulae and Lemma  
\ref{lem:monotonicity}. Claims 7 and 8 follow respectively from claims 5 and 6, by commuting the quantifiers. 

\end{proof}

Lemma \ref{lem:equiv1} shows that the only non-trivial cases of quantifications over formulae of the kind $\coop{\term_1, \term_2} \chi$ are those allowed in normal forms, listed in ~(\ref{eq:nf-temporal}) (after Definition \ref{def:normalform}). 
We will make use of that to re-define the syntax of $\lang$ to suit better our further technical work.  
First we define an \defstyle{admissible quantifier prefix} $\mathcal{Q}$ to be a string of the form $\Q \agvar_i$ or $\Q\agvar_i\Q'\agvar_j$ where $\Q,\Q' \in \set{\exists, \forall}$ and $i,j \in \set{1, 2}$, $i \neq j$.
Now, we re-define the set of state formulas of $\lang$ to be generated by the following modified grammar:
\[\varphi ::= {}  \top \mid p \mid \neg \varphi \mid 
 (\varphi \land \varphi)  \mid (\varphi \lor \varphi)  \mid
     \coop{\term_1, \term_2}{ \chi} \mid \mathcal{Q} \varphi\]
The same positive polarity requirements as before for applying the quantifier prefixes are imposed. Clearly, this grammar is equivalent to the original grammar, i.e. it generates the same set of formulae. In the rest of the paper we adopt the new grammar above.

Next, we define recursively a \defstyle{partial quantifier elimination function} $\pqe$ on path and state formulae $\xi \in \lang$ which produces formulae $\pqe(\xi)$  where all occurrences of subformulae in the left-hand sides of the equivalences in Lemma~\ref{lem:equiv1} are successively replaced with the corresponding right-hand sides.

\medskip
\begin{algorithmic}[1]
   \State \textbf{Input}: a quantifier prefix $\mathcal{Q}$ and a (state or path) formula $\xi$ of \lang

   \State \textbf{Output}: a (state or path) formula $\xi'$ of \lang

   \Procedure{$\pqe$}{$\xi$}

   \Cases{$\xi$}

   \Case{$\top \mid \avarprop$} \Return $\xi$ 

   \Case{$\neg \psi$} \Return $\neg \pqe(\psi)$

   \Case{$\psi_1 \land \psi_2$} \Return $\pqe(\psi_1) \land \pqe(\psi_2)$

   \Case{$\psi_1 \lor \psi_2$} \Return $\pqe(\psi_1) \lor \pqe(\psi_2)$

   \Case{$\coop{\term_1, \term_2} \chi$} \Return $\coop{\term_1, \term_2} \pqe(\chi)$

   \Case{$\mathcal{Q} \varphi$}

   \Cases{$\mathcal{Q} \varphi$}

   \Case{$\forall \agvar_1 \coop{\agvar_1, \term} \chi$}
   \Return $\coop{0, \term} \pqe(\chi)[0/\agvar_1]$

   \Case{$\exists y_2 \coop{t, y_2} \chi$}
   \Return $\coop{t, 0} \pqe(\chi)[0/y_2]$

   \Case{$\forall y_1 \exists y_2 \coop{y_1, y_2} \chi$}
   \Return $\coop{0, 0} \pqe(\chi)[0/y_1,0/y_2]$

   \Case{$\exists y_2 \forall y_1 \coop{y_1, y_2} \chi$}
   \Return{$\coop{0, 0} \pqe(\chi)[0/y_1,0/y_2]$}

   \Case{$\forall y_1 \forall y_2 \coop{y_1, y_2} \chi$}
   \Return $\forall y_2 \coop{0, y_2} \pqe(\chi)[0/y_1]$

   \Case{$\forall y_2 \forall y_1 \coop{y_1, y_2} \chi$}
   \Return $\forall y_2 \coop{0, y_2} \pqe(\chi)[0/y_1]$

   \Case{$\exists y_1 \exists y_2 \coop{y_1, y_2} \chi$}
   \Return $\exists y_1 \coop{y_1, 0} \pqe(\chi)[0/y_2]$

   \Case{$\exists y_2 \exists y_1 \coop{y_1, y_2} \chi$}
   \Return $\exists y_1 \coop{y_1, 0} \pqe(\chi)[0/y_2]$

      \Case{\textbf{default}} \Return $\mathcal{Q} \pqe(\varphi)$ \Comment{Covers all other cases} 
   
   \EndCases

      \Case{$\atlx \psi$} \Return $\atlx \pqe(\psi)$

   \Case{$\atlg \psi$} \Return $\atlg \pqe(\psi)$

   \Case{$\psi_1 \atlu \psi_2$} \Return $\pqe(\psi_1) \atlu \pqe(\psi_2)$

  \EndCases
   \EndProcedure
 \end{algorithmic}

\begin{lemma}\label{def:pqe-equivalence}
Let $\varphi$ be any formula in $\lang$, then $\pqe(\varphi) \equiv \varphi$.
\end{lemma}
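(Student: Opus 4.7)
The plan is to prove the lemma by structural induction on $\varphi$, carried out jointly with the analogous statement for path formulae, since state and path formulae are defined by mutual recursion. The base cases $\xi = \top$ and $\xi = p$ are immediate because $\pqe$ returns the formula unchanged, so equivalence holds trivially.

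For the inductive step, I would proceed by case analysis on the outermost constructor of $\xi$, mirroring the structure of the algorithm. The Boolean cases ($\neg$, $\land$, $\lor$), the temporal cases ($\atlx$, $\atlg$, $\atlu$), and the strategic case $\coop{\term_1,\term_2}\chi$ all reduce to applying the inductive hypothesis to each immediate subformula and then invoking compositionality of logical equivalence (i.e.\ replacing equivalents under a propositional, temporal, or strategic operator preserves truth in every pointed model under every assignment). This is a routine but essential observation, and for the strategic operator it follows from Definition~\ref{def:semantics}, since the outcome set $\outf(\sts,\astratg_C,N)$ only depends on the model, not on the chosen subformula.

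The substantive work lies in the quantifier case $\mathcal{Q}\varphi$. Here $\pqe$ either rewrites $\mathcal{Q}\varphi$ into one of the eight right-hand sides from Lemma~\ref{lem:equiv1} (after recursing into the body), or else simply descends via the default clause. The default clause is handled by the IH together with compositionality of $\forall$ and $\exists$. For each of the eight explicit patterns, say $\forall y_1 \coop{y_1,t}\chi \mapsto \coop{0,t}\pqe(\chi)[0/y_1]$, I would argue in two steps: first, by Lemma~\ref{lem:equiv1},
\[
\forall y_1 \coop{y_1,t}\chi \;\equiv\; \coop{0,t}\chi[0/y_1];
\]
second, by the IH, $\pqe(\chi) \equiv \chi$, and logical equivalence is preserved under substitution of a constant for a free variable (because $\dmas,\sts,\agass \models \psi[0/y]$ iff $\dmas,\sts,\agassow{y}{0} \models \psi$), so $\pqe(\chi)[0/y_1] \equiv \chi[0/y_1]$, and finally compositionality under $\coop{0,t}$ concludes the case. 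The remaining seven patterns are treated identically.

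The main obstacle I anticipate is bookkeeping rather than conceptual: I have to verify that the syntactic side-conditions are preserved by $\pqe$, so that the rewritten formula is still a well-formed $\lang$-formula to which the IH applies. Specifically, the polarity constraint on quantified variables must be preserved (clear, since $\pqe$ introduces no new negations), and the constraint that $y_1$ occurs only in first position and $y_2$ only in second position in strategic operators must survive the recursive call (also clear, since $\pqe$ never moves variables between argument positions). Once these syntactic invariants are noted, the induction goes through uniformly across all clauses of the algorithm.
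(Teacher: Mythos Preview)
Your proposal is correct and follows essentially the same approach as the paper: structural induction on $\varphi$, with the Boolean, temporal, and strategic cases handled by compositionality plus the inductive hypothesis, and the quantifier cases reduced to Lemma~\ref{lem:equiv1} after applying the IH to the body $\chi$ and noting that equivalence is preserved under substitution. Your treatment is in fact more explicit than the paper's (which gives only one representative case and omits the syntactic well-formedness checks you flag), but the underlying argument is identical.
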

\begin{proof}
By induction on the structure of $\varphi$ (using the modified grammar), following the recursive definition of $\pqe$.  The only non-trivial cases are those in lines 12-19 and they use the equivalences in Lemma~\ref{lem:equiv1}.  For instance, let $\varphi = \forall \agvar_1 \coop{\agvar_1, \term} \chi$. By definition, $\pqe(\forall \agvar_1 \coop{\agvar_1, \term} \chi) = \coop{0, \term} \pqe(\chi)[0/\agvar_1]$ and by inductive hypothesis $\pqe(\chi) \equiv \chi$, thus we get $\coop{0, \term} \pqe(\chi)[0/\agvar_1] \equiv \coop{0, \term} \chi[0/\agvar_1]$. The claim now follows from case \emph{1.} in Lemma~\ref{lem:equiv1}. All other cases are proved analogously.
\end{proof}

\begin{lemma}\label{def:pqe-equality}
Let $\varphi$ be any formula in $\langb$, then $\pqe(\varphi) = \varphi$.
\end{lemma}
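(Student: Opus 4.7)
The plan is to proceed by structural induction on $\varphi \in \langb$, closely following the recursive definition of $\pqe$, and to show that for a normal form formula the algorithm always takes the ``pass-through'' branches and therefore returns the input verbatim.

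The base cases $\varphi = \top$ and $\varphi = p$ are immediate from the first clause of $\pqe$. For the Boolean cases $\neg \psi$, $\psi_1 \land \psi_2$, $\psi_1 \lor \psi_2$, and for the strategic case $\coop{t_1,t_2}\chi$ (after calling $\pqe$ on its temporal objective, which in turn recurses into the embedded state subformulae), the corresponding clauses of $\pqe$ simply rebuild the formula from $\pqe$ applied to its immediate state or temporal subformulae; by the inductive hypothesis these calls return the subformulae unchanged, so the overall value equals $\varphi$. The temporal cases $\atlx\psi$, $\atlg\psi$, $\psi_1 \atlu \psi_2$ are handled identically.

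The only delicate case is $\varphi = \mathcal{Q}\varphi'$, where $\mathcal{Q}$ is an admissible quantifier prefix. Here $\pqe$ branches on the shape of $\mathcal{Q}\varphi'$ and rewrites it whenever the shape matches one of the eight special patterns on lines 12--19 of the algorithm; otherwise it falls through to the default branch and returns $\mathcal{Q}\,\pqe(\varphi')$. The key observation is that \emph{every} one of those eight special patterns contains, at its top level, either a $\forall y_1$ or an $\exists y_2$ quantifier. By condition (NF1) of Definition~\ref{def:normalform}, no subformula of a formula in $\langb$ may contain such a quantifier occurrence. Hence for $\varphi \in \langb$ none of the special patterns can match $\mathcal{Q}\varphi'$, and $\pqe$ necessarily takes the default branch, returning $\mathcal{Q}\,\pqe(\varphi')$. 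Since $\varphi' \in \langb$ as well (being a subformula of a normal form formula), the inductive hypothesis yields $\pqe(\varphi') = \varphi'$, and thus $\pqe(\varphi) = \mathcal{Q}\varphi' = \varphi$.

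The only real obstacle is bookkeeping: one must verify, by inspection of the recursive grammar generating $\langb$ (in particular the production~(\ref{eq:nf-temporal})), that the subformulae $\chi$ and $\varphi'$ fed back into $\pqe$ really do belong to $\langb$ so that the inductive hypothesis applies. This is immediate because each permitted quantifier prefix in~(\ref{eq:nf-temporal}) is applied to a strategic formula $\coop{t_1,t_2}\chi$ whose temporal objective $\chi$ is built from $\langb$-state formulae, and the Boolean and strategic clauses preserve membership in $\langb$ by construction.
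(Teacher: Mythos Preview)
Your proof is correct and follows essentially the same approach as the paper's: structural induction on $\varphi$, observing that only the pass-through clauses of $\pqe$ apply to normal-form formulae. Your explicit justification via (NF1)---that each of the eight special patterns in lines 12--19 contains a $\forall y_1$ or an $\exists y_2$ and hence cannot match any subformula of a \langb-formula---is exactly the reason behind the paper's terse claim that ``the only cases that apply to formulae in normal form are those in lines 5--9, 20 and 22--24''; you have simply spelled out what the paper leaves implicit.
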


\begin{proof}
Again, induction on the structure of $\varphi$ in normal form, following the recursive definition of $\pqe$. Note, that the only cases that apply to formulae in normal form are those in lines 5-9, 20 and 22-24, which do not modify $\varphi$.
\end{proof}

Note that after applying $\pqe$, the resulting formula satisfies condition (NF1) in the definition of normal form.

%%%%%%%%%%%%%%%%%%%%%%%
\subsection{Transformation to normal forms and fixpoint equivalences} 
\label{subsec:normalforms2} 
%%%%%%%%%%%%%%%%%%%%%%%

Next, we show that quantification can always be distributed, up to equivalence in the finite, over conjunctions and disjunctions and pushed inside subformulae so that every bound variable is immediately preceded by a quantifier that binds it, which will be used further for transformations of \lang\ formulae to normal form.

  We define by recursion a 2-argument function $\pushf$, applied to pairs consisting of an admissible quantifier prefix $\mathcal{Q}$ and a formula $\varphi$ in $\lang$, such that $\pushf(\mathcal{Q}, \varphi)$ is a formula in $\lang$ which 
satisfies conditions (NF2) and (NF3) of the definition of normal form,
  and which we will prove to be equivalent to  $\mathcal{Q} \varphi$. For the purpose of defining $\pushf$ as described, we will need to define it on a wider scope, viz. applied to any state or path formula $\xi$, even though $\mathcal{Q} \xi$ may not be a legitimate formula of \lang.
    In what follows, we denote by $\ol{\mathcal{Q}}$ the swap of the quantifiers in the prefix $\mathcal{Q}$ with their duals, i.e. $\exists$ with $\forall$ and vice versa.

 \begin{algorithmic}[1]
   \State \textbf{Input}: a quantifier prefix $\mathcal{Q}$ and a (state or path) formula $\xi$ of \lang

   \State \textbf{Output}: $\xi'$, a (state or path) formula of \lang

   \Procedure{$\pushf$}{$\mathcal{Q}, \xi$}

   \Cases{$\xi$}

   \Case{$\top \mid \avarprop$}\label{alg:push-base} \Return $\xi$ 

   \Case{$\neg \psi$}\label{alg:push-neg} \Return $\neg \pushf(\ol{\mathcal{Q}}, \psi)$

   \Case{$\psi_1 \land \psi_2$}\label{alg:push-and} \Return $\pushf(\mathcal{Q}, \psi_1) \land \pushf(\mathcal{Q}, \psi_2)$

   \Case{$\psi_1 \lor \psi_2$}\label{alg:push-or} \Return $\pushf(\mathcal{Q}, \psi_1) \lor \pushf(\mathcal{Q}, \psi_2)$

    \Case{$\coop{\term_1, \term_2} \chi$}\label{alg:push-strat}  
    
       	\Cases{$\mathcal{Q}$}

    	    \Case{$\Q \agvar_i$, where $\term_i = \agvar_i$, for $i=1$ or $i=2$} 
		
\Case{or  
	    $\Q \agvar_1 \Q' \agvar_2$ or $\Q' \agvar_2 \Q \agvar_1$, where $\term_1 = \agvar_1, \term_2 = \agvar_2$}	    
	
	   \hspace{\fill} \Return $\mathcal{Q} \coop{\term_1, \term_2} \pushf(\mathcal{Q}, \chi)$

		   \Case{$\Q \agvar_i$, where $\term_i \neq \agvar_i$, for $i=1$ or $i=2$} 
		   
\Case{or  
	   $\Q \agvar_1 \Q' \agvar_2$ or $\Q' \agvar_2 \Q \agvar_1$, where $\term_1 \neq  \agvar_1, \term_2 \neq  \agvar_2$}
		   
	 \hspace{\fill}   \Return $\coop{\term_1, \term_2} \pushf(\mathcal{Q}, \chi)$

 \Case{$\Q \agvar_1 \Q' \agvar_2$ or $\Q' \agvar_2 \Q \agvar_1$, where $\term_1 = \agvar_1, \term_2 \neq \agvar_2$}
		     
	 \hspace{\fill}     \Return $\Q \agvar_1 \coop{\term_1, \term_2} \pushf(\mathcal{Q}, \chi)$

 \Case{$\Q \agvar_1 \Q' \agvar_2$ or $\Q' \agvar_2 \Q \agvar_1$, where $\term_1 \neq \agvar_1, \term_2 = \agvar_2$}

	 \hspace{\fill}     \Return $\Q' \agvar_2 \coop{\term_1, \term_2} \pushf(\mathcal{Q}, \chi)$     	    
	
           \EndCases

     \Case{$\Q'' \agvar_k \psi$}
 
     \Cases{$\mathcal{Q}$}

    	    \Case{$\Q \agvar_i$, where  $i=k$}  
	   \Return $\pushf(\Q'' \agvar_k, \psi)$ 
	    
    	    \Case{$\Q \agvar_i$, where  $i \neq k$} 
	   \Return $\pushf(\Q \agvar_i\Q'' \agvar_k, \psi)$

    	    \Case{$\Q \agvar_i \Q' \agvar_j$, where  $i = k$} 
	   \Return $\pushf(\Q' \agvar_j\Q'' \agvar_k, \psi)$  

    	    \Case{$\Q \agvar_i \Q' \agvar_j$, where  $j = k$} 
	   \Return $\pushf(\Q \agvar_i\Q'' \agvar_k, \psi)$  
	    	       
           \EndCases

   \Case{$\atlx \psi$} \label{alg:push-atlx} \Return $\atlx \pushf(\mathcal{Q}, \psi)$

   \Case{$\atlg \psi$}\label{alg:push-atlg} \Return $\atlg \pushf(\mathcal{Q}, \psi)$

   \Case{$\psi_1 \atlu \psi_2$}\label{alg:push-atlu} \Return $\pushf(\mathcal{Q}, \psi_1) \atlu \pushf(\mathcal{Q}, \psi_2)$
  \EndCases
   \EndProcedure
 \end{algorithmic}

It is quite easy to see that $\pushf(\mathcal{Q}, \varphi)$ is a formula in $\lang$ whenever 
$\mathcal{Q} \varphi$ is a formula in $\lang$. 
 Intuitively, the function $\pushf$ recursively pushes the quantifier prefix $\mathcal{Q}$ inside the formula by either swapping it when negation occurs or by distributing it over the others boolean connectives until it vanishes. 
 When a strategic operator, possibly with variables as arguments that are quantified by $\mathcal{Q}$ is reached, then $\mathcal{Q}$ is placed in front of the strategic operator and is also distributed in its temporal objective, but the vacuous quantification occurring in the process is removed.
Lastly, when the formula begins with another quantifier $\Q'' \agvar_k$, then it is prefixed by $\mathcal{Q}$, the resulting vacuous quantification, if any, is removed, and the resulting prefix is pushed inside.
   
   \begin{example} \label{ex:push}
     Let $\varphi \in \lang$ be 
     \[ \begin{array}{c}
\forall \agvar_1
 \big( \coop{\agvar_1, 5} (\forall \agvar_2 \coop{\agvar_1, \agvar_2} \atlx p_1) \; \atlu \; (\exists \agvar_2 \coop{\agvar_1, \agvar_2} \atlf p_2 ) \; \lor{} \\
 \exists \agvar_1 (\forall \agvar_2 \coop{\agvar_1, \agvar_2} \atlf p_3  \; \land \; \neg \forall \agvar_2 \coop{3, \agvar_2} \atlx p_1 ) \big)
       \end{array} \]
     where $p_1 , p_2, p_3 \in \Phi$. Then $\pushf(\forall \agvar_1, \varphi) = $ 
   \[ \begin{array}{c}
 \forall \agvar_1 \coop{\agvar_1, 5} \big( (\forall \agvar_1 \forall \agvar_2 \coop{\agvar_1, \agvar_2} \atlx p_1) \;\; \atlu \;\; (\forall \agvar_1 \exists \agvar_2 \coop{\agvar_1, \agvar_2} \atlf p_2 ) \big) \lor{} \\
(\exists \agvar_1 \forall \agvar_2 \coop{\agvar_1, \agvar_2} \atlf p_3  \;\; \land \;\; \neg \forall \agvar_2 \coop{3, \agvar_2} \atlx p_1)
       \end{array} \]    
    \end{example}

\begin{theorem}
\label{thm:pushing-quantifiers}
Let
  $\mathcal{Q}$ be an admissible quantifier prefix
  and let $\mathcal{Q} \varphi$ be a formula of  \lang. 
  Then $\pushf(\mathcal{Q},\varphi)$ is logically equivalent in the finite  to $\mathcal{Q}\varphi$.
\end{theorem}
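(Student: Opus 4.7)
The plan is to prove the theorem by structural induction on the (state or path) formula $\xi$ that is the second argument of $\pushf$, with the strengthened inductive hypothesis that for every strict subformula $\xi'$ and every admissible quantifier prefix $\mathcal{Q}'$ such that $\mathcal{Q}'\xi'$ is a well-formed formula of \lang, we have $\pushf(\mathcal{Q}', \xi') \fineq \mathcal{Q}' \xi'$. The base cases $\xi \in \{\top, p\}$ are immediate since no quantified variable is free in $\xi$. The negation case $\xi = \neg \psi$ follows from the quantifier dualities $\mathcal{Q}\neg\psi \equiv \neg \overline{\mathcal{Q}}\psi$, combined with the inductive hypothesis applied to $\psi$ under the dual prefix $\overline{\mathcal{Q}}$.

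For the Boolean cases ($\xi = \psi_1 \land \psi_2$ or $\psi_1 \lor \psi_2$), the nontrivial directions of quantifier distribution, namely $\forall$ past $\lor$ and $\exists$ past $\land$, are justified by the monotonicity Lemma \ref{lem:monotonicity} together with finiteness. The positive-polarity syntactic constraint in Definition \ref{def:syntax} ensures that each free occurrence of $y_1$ (respectively $y_2$) yields a monotonically increasing (respectively decreasing) dependence in the value of the variable, so in a finite model the sequence of state sets satisfying $\psi_i$ stabilizes, and by Lemma \ref{lem:equiv1} the quantifier effectively reduces to a concrete value on each conjunct/disjunct. The temporal-operator cases $\atlx$, $\atlg$, $\atlu$, which arise inside strategic operators, are handled by the same kind of finite-monotonicity argument applied to their argument subformulae.

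The technically delicate case is the strategic operator $\xi = \coop{t_1, t_2} \chi$. The algorithm branches according to whether each quantifier $\Q y_i$ occurring in $\mathcal{Q}$ matches its corresponding term ($t_i = y_i$). When it matches, the quantifier is retained in front of $\coop{t_1, t_2}$ and also pushed into $\chi$ to rebind nested occurrences; the required identity $\Q y_i \coop{y_i, t_{3-i}} \chi \fineq \Q y_i \coop{y_i, t_{3-i}} (\mathcal{Q}\chi)$ holds in the finite because, by monotonicity, the outer $\Q y_i$ selects the extremal value of $y_i$ (via Lemma \ref{lem:equiv1}), and that value dominates any inner occurrence. When it does not match, the outer quantifier is dropped and absorbed into $\chi$; this relies on the finite-valid commutation $\Q y_i \coop{t_1, t_2} \chi \fineq \coop{t_1, t_2} \Q y_i \chi$ whenever $y_i \notin \{t_1, t_2\}$, which again follows from monotonicity and Lemma \ref{lem:equiv1}. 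Mixed two-quantifier prefixes combine these two observations. Finally, the prefix-combining case $\xi = \Q'' y_k \psi$ merges the prefixes via commutation and absorbs duplicate or vacuous occurrences using Lemma \ref{lem:equiv1}, reducing to an admissible prefix applied to $\psi$ for which the inductive hypothesis applies.

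The principal obstacle is bookkeeping: for each algorithmic branch one must verify a distinct finite-equivalence identity, and each verification depends on ensuring that the polarity and position restrictions required by Lemma \ref{lem:monotonicity} are preserved by the recursive calls of $\pushf$. A secondary technical issue is giving meaning to $\mathcal{Q}\chi$ when $\chi$ is a path formula (so $\mathcal{Q}\chi$ is not itself a legitimate formula of \lang); this is resolved by interpreting the quantifier semantically pointwise along plays and assignments, which coincides exactly with the way $\pushf$ distributes $\mathcal{Q}$ through the temporal operators down to the enclosed state subformulae.
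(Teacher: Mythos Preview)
Your overall induction scheme and treatment of the Boolean, negation, and prefix-merging cases are essentially correct and match the paper. However, your handling of the crucial case $\xi = \coop{t_1,t_2}\chi$ has a genuine gap.

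You write that ``the outer $\Q y_i$ selects the extremal value of $y_i$ (via Lemma~\ref{lem:equiv1}), and that value dominates any inner occurrence.'' But Lemma~\ref{lem:equiv1} only covers the patterns $\forall y_1$ and $\exists y_2$ (and their two-variable combinations), all of which collapse to the concrete value~$0$. The hard patterns $\exists y_1\coop{y_1,t}\chi$ and $\forall y_2\coop{t,y_2}\chi$ are precisely the ones \emph{not} covered by that lemma, and for them there is no ``extremal value'' selected a priori. For $\exists y_1\coop{y_1,t}\atlg\psi \fineq \exists y_1\coop{y_1,t}\atlg\exists y_1\psi$, the paper's argument takes, for each state $u$ in the finite extension of $\exists y_1\psi$, a witness $f(u)$, and sets $C^* := \max_u f(u)$; finiteness of the state space is what makes this max exist. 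For $\forall y_2\coop{t,y_2}\atlg\psi \fineq \forall y_2\coop{t,y_2}\atlg\forall y_2\psi$, the key observation is different and you do not mention it: in a finite model there are only \emph{finitely many abstract positional strategies} for $\theta(t)$ controllable agents, so by pigeonhole one fixed strategy $\sigma^{\mathbf c}$ works for infinitely many $N$, hence (by \nmon) for all $N$; this single uniform strategy is then played against a putative counter-strategy to derive a contradiction. The same ``finitely many strategies'' argument is also what drives the non-matching commutation $\Q y_i\coop{t_1,t_2}\chi \fineq \coop{t_1,t_2}\Q y_i\chi$, which you attribute only to monotonicity and Lemma~\ref{lem:equiv1}.

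A secondary issue: your pure structural induction on $\xi$ does not quite close the two-quantifier subcases, because after pushing one quantifier past $\coop{t_1,t_2}$ you must push the second past a formula that is no longer a strict subformula of the original $\xi$. The paper handles this by making the outer induction on the \emph{nesting depth} of strategic operators, with a nested structural induction inside; that way the intermediate formula still has the same nesting depth and the outer IH applies.
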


\begin{proof}
We prove the claim by induction on the \defstyle{nesting depth} $\nd(\varphi)$ of strategic operators in the state formula $\varphi$, defined as expected.

When $\nd(\varphi) = 0$ the claim is straightforward because any quantification over $\varphi$ is vacuous, hence $\mathcal{Q}\varphi \equiv \varphi$ and $\pushf(\mathcal{Q},\varphi) = \varphi$.   
Suppose now that $\nd(\varphi) > 0$ and the claim holds for all state formulae of \lang with lower nesting depth. 
We will do a nested induction on the structure of $\varphi$, following the recursive definition of $\pushf$.
    
  \begin{enumerate}
  \itemsep = 3pt
  \item $\varphi = \top \mid \avarprop$. This case does not apply now, but it is, anyway, trivial for every $\mathcal{Q}$. 
  
  \item $\varphi = \neg \psi$ follows from FOL and the inductive hypothesis (IH) for $\psi$. 
  
  \item $\varphi = \psi_1 \land \psi_2$. 
  
  \begin{enumerate}
   \smallskip 
\item 
  When $\mathcal{Q}=\forall \agvar_i$, the claim follows immediately from the valid equivalence (proved just like in FOL) $\forall \agvar_i (\psi_1 \land \psi_2) \equiv \forall \agvar_i \psi_1 \land \forall \agvar_i \psi_2$ and the IH for each of $\psi_1$ and $\psi_2$.   
  
   \smallskip 
\item 
 When $\mathcal{Q} =\exists \agvar_i$, it suffices to prove that 
 $\exists \agvar_i (\psi_1 \land \psi_2) 
\fineq 
 \exists \agvar_i \psi_1 \land  \exists \agvar_i \psi_2$, 
 and then use the IH for each of $\psi_1$ and $\psi_2$.  
The implication from left to right is by the validity of the implication 
$\exists \agvar_i(\psi_1 \land \psi_2) \to \exists \agvar_i \psi_1 \land \exists \agvar_i \psi_2$. 
To prove the converse implication, first note that, since $\exists \agvar_i (\psi_1 \land \psi_2)$ is a formula of  $\lang$, all free occurrences of $\agvar_i$ in $\psi_1$ and in $\psi_2$ must be positive. 
Now, suppose first that $i=1$ and let 
$\dmas, \sts,  \agass \models  \exists \agvar_1 \psi_1 \land  \exists \agvar_1 \psi_2$  
for some finite $\dmas$. 
Then,  $\dmas, \sts,  \agass \models \psi_1[C_1/ \agvar_1]$  and 
$\dmas, \sts,  \agass \models  \psi_2[C_2/ \agvar_1]$ for some $C_1,C_2 \in \bbN$.  
Let $C =  \max(C_1,C_2)$. By the monotonicity property  \cmon$^+$ from Lemma \ref{lem:monotonicity}, we obtain that $\dmas, \sts,  \agass \models \psi_1[C/ \agvar_1]$  and 
$\dmas, \sts,  \agass \models  \psi_2[C/ \agvar_1]$.
Therefore, $\dmas, \sts,  \agass \models (\psi_1 \land \psi_2)[C/ \agvar_1]$,  hence  
$\dmas, \sts,  \agass \models \exists \agvar_1 (\psi_1 \land \psi_2)$.  
This proves the validity of the converse implication 
$(\exists \agvar_1 \psi_1 \land  \exists \agvar_1 \psi_2) \to \exists \agvar_1 (\psi_1 \land \psi_2)$. 
The proof of the case where $i=2$ is analogous, using the monotonicity property  \nmon$^+$ from Lemma \ref{lem:monotonicity}.  

   \smallskip 
\item 
Lastly, the case when $\mathcal{Q} = \Q\agvar_i\Q'\agvar_j$ is readily reducible to the previous 2 cases, by distributing 
first $\Q'\agvar_j$ and then $\Q\agvar_i$.  
\end{enumerate}

\medskip
 \item $\varphi = \psi_1 \lor \psi_2$. This case is dually analogous to the previous one. 

  \begin{enumerate}
   \smallskip 
\item 
 When $\mathcal{Q} =\exists \agvar_i$, the claim follows immediately from the valid equivalence 
 $\exists \agvar_i(\psi_1 \lor \psi_2) \equiv \exists \agvar_i \psi_1 \lor \exists \agvar_i \psi_2$ 
 and the IH for each of $\psi_1$ and $\psi_2$.   

   \smallskip 
\item 
  When $\mathcal{Q}=\forall \agvar_i$,  it suffices to prove that 
  $\forall \agvar_i (\psi_1 \lor \psi_2) \fineq \forall \agvar_i \psi_1 \lor \forall \agvar_i \psi_2$, 
 and then use the IH for each of $\psi_1$ and $\psi_2$.  
 The implication from right to left 
 $(\forall \agvar_i \psi_1 \lor \forall \agvar_i \psi_2) \to \forall \agvar_i(\psi_1 \lor \psi_2)$ 
 is a validity, proved just like in FOL. For the converse implication, suppose first that $i=1$ and let 
$\dmas, \sts,  \agass \models \forall \agvar_1 (\psi_1 \lor \psi_2)$ for some finite $\dmas$. 
Then, $\dmas, \sts,  \agass \models (\psi_1 \lor \psi_2)[0/\agvar_1]$, hence 
 $\dmas, \sts,  \agass \models \psi_1[0/\agvar_1]$ or 
 $\dmas, \sts,  \agass \models \psi_2[0/\agvar_1]$. Suppose w.l.o.g. the former.  
 Then, by the monotonicity property  \cmon$^+$ from Lemma \ref{lem:monotonicity}, we obtain that  
 $\dmas, \sts,  \agass \models \psi_2[C/\agvar_1]$ for any $C\in \nat$, hence 
 $\dmas, \sts,  \agass \models \forall \agvar_1 \psi_1$, so   
 $\dmas, \sts,  \agass \models \forall \agvar_1 \psi_1 \lor \forall \agvar_1 \psi_2$.

For the case that $i=2$, assuming that $\dmas, \sts,  \agass \models \forall \agvar_2 (\psi_1 \lor \psi_2)$,
it follows that  
at least one of  $\dmas, \sts,  \agass \models \psi_1[N/\agvar_2]$ and 
 $\dmas, \sts,  \agass \models \psi_2[N/\agvar_2]$ holds for infinitely many values of $N \in \nat$. 
Suppose w.l.o.g. the former. Then,  by the monotonicity property  \nmon$^+$ from Lemma \ref{lem:monotonicity}, we obtain that  
 $\dmas, \sts,  \agass \models  {\psi_1}[N/\agvar_2]$ for any $N\in \nat$, hence 
 $\dmas, \sts,  \agass \models \forall \agvar_2 \psi_1$, so   
 $\dmas, \sts,  \agass \models \forall \agvar_2 \psi_1 \lor \forall \agvar_2 \psi_2$.  

   \smallskip 
\item 
Lastly, the case when $\mathcal{Q} = \Q\agvar_i\Q'\agvar_j$ is readily reducible to the previous 2 cases, by distributing 
first $\Q'\agvar_j$ and then $\Q\agvar_i$.  
\end{enumerate}

\medskip
    \item $\varphi= \Q'' \agvar_k \psi$. 
    Again, we consider the subcases depending on $\mathcal{Q}$. 
    
  \begin{enumerate}
\item $\mathcal{Q} = \Q \agvar_i$, where  $i=k$.  

    We are to show that  $\Q \agvar_i \Q'' \agvar_i \psi \fineq  \pushf(\Q'' \agvar_i, \psi)$, which follows from 
    $\Q \agvar_i \Q'' \agvar_i \psi \equiv \Q'' \agvar_i \psi$ and the IH.

\item $\mathcal{Q} = \Q \agvar_i$, where  $i \neq k$.

 We are to show that  $\Q \agvar_i \Q'' \agvar_k \psi \fineq  \pushf(\Q \agvar_i\Q'' \agvar_k, \psi)$, which follows from the 
 IH for $\mathcal{Q} = \Q \agvar_i \Q'' \agvar_k$ and $\psi$. 
  
\item $\mathcal{Q} = \Q \agvar_i \Q' \agvar_j$, where  $i = k$.

 We are to show that  $\Q \agvar_k \Q' \agvar_j \Q'' \agvar_k \psi \fineq  \pushf(\Q' \agvar_j\Q'' \agvar_k, \psi)$, which follows from 
 $\Q \agvar_k \Q' \agvar_j \Q'' \agvar_k \psi \equiv \Q' \agvar_j \Q'' \agvar_k \psi$ and the IH.  
 
\item The case $\mathcal{Q} = \Q \agvar_i \Q' \agvar_j$, where  $j = k$, is analogous.
\end{enumerate}

\medskip 
  \item $\varphi = \coop{\term_1, \term_2}\chi$.

This inductive case -- for both inductions, the external one, on $\nd(\varphi)$, and for the nested one, on the structure of $\varphi$ -- is the most involved case, where the finiteness of the models over which we prove the equivalence is used essentially. 
  There are several subcases, depending on $\mathcal{Q}$ and on the main temporal connective of $\chi$.
  The proof for each case is technical and some cases are longer than others, but they all use a similar approach, that essentially hinges on the finiteness of the model and the monotonicity properties 
 from Lemma \ref{lem:monotonicity}. 
  These will allow us to obtain uniformly large enough values of the quantified variables, beyond which the truth values of all strategic subformulae stabilise, and thus to establish the truth of the non-trivial implications. 
  We will provide a representative selection of proofs for some of the cases and will leave out the rest, which are essentially analogous, though possibly even longer.     

\begin{enumerate}
\item 
$\mathcal{Q} = \Q \agvar_i$, where $\term_i = \agvar_i$, for $i=1$ or $i=2$.  

 We are to show that 
   $\Q \agvar_i \coop{\term_1, \term_2}\chi
   \fineq  
   \Q \agvar_i \coop{\term_1, \term_2} \pushf(\Q \agvar_i, \chi)$, assuming the inductive hypothesis for the main state subformulae of $\chi$.  
   We consider the subcases depending on $\Q$, $i$, and the main temporal connective of $\chi$.  

 \smallskip 
\ \ \ \ \textbf{Case ($\forall \agvar_1 \atlg$)}: 
to prove 
$\forall \agvar_1 \coop{\agvar_1, t} \atlg \psi \fineq  \forall \agvar_1 \coop{\agvar_1, t} \atlg \pushf(\forall \agvar_1, \psi)$.  

By the IH for $\psi$, we have that $\forall \agvar_1 \psi \fineq \pushf(\forall \agvar_1, \psi)$. 
\\
So, it suffices to prove that $\forall \agvar_1 \coop{\agvar_1, t} \atlg \psi \fineq \forall \agvar_1 \coop{\agvar_1, t} \atlg \forall \agvar_1 \psi$.  
By Lemma \ref{lem:equiv1}, 
$\forall \agvar_1 \coop{\agvar_1, t} \atlg \psi \equiv \coop{0, t} \atlg \psi[0/\agvar_1]$ and  
$\forall \agvar_1 \coop{\agvar_1, t} \atlg \forall \agvar_1 \psi \equiv  \coop{0, t} \atlg \forall \agvar_1 \psi$. 

So, we have to prove that $\coop{0, t} \atlg \psi[0/\agvar_1] \equiv \coop{0, t} \atlg \forall \agvar_1 \psi$, which follows immediately, since  
$\forall \agvar_1 \psi \equiv \psi[0/\agvar_1]$, by \cmon$^+$ from Lemma \ref{lem:monotonicity}.

 \smallskip 
\ \ \ \ \textbf{Case ($\exists \agvar_1 \atlg$)}: 
to prove 
$\exists \agvar_1 \coop{\agvar_1, t} \atlg \psi \fineq  \exists \agvar_1 \coop{\agvar_1, t} \atlg \pushf(\exists \agvar_1, \psi)$.  

By the IH for $\psi$, we have that $\exists \agvar_1 \psi \fineq \pushf(\exists \agvar_1, \psi)$. 
So, it suffices to prove that $\exists \agvar_1 \coop{\agvar_1, t} \atlg \psi \fineq \exists \agvar_1 \coop{\agvar_1, t} \atlg \exists \agvar_1 \psi$. 
Since $\models \psi \to \exists \agvar_1 \psi$, we obtain validity of the implication 
$\exists \agvar_1 \coop{\agvar_1, t} \atlg \psi \to \exists \agvar_1 \coop{\agvar_1, t} \atlg \exists \agvar_1 \psi$.

For the converse, suppose 
$\dmas, \sts,  \agass \models \exists \agvar_1 \coop{\agvar_1, t} \atlg \exists \agvar_1 \psi$  
for some finite $\dmas$ with state space $\States$, 
assignment $\agass$ and $\sts \in \States$.   
Fix any $C\in \bbN$ such that 
 $\dmas, \sts, \agass \models \coop{C, t} \atlg \exists \agvar_1 \psi$.   
Since $ \agass$ fixes the values of all terms, we can treat $\exists \agvar_1 \psi$ as a closed formula.
Note that, according to the syntax of $\lang$, all occurrences of $\agvar_1$ in $\psi$ are positive. 
Let $W = \stexten{\exists \agvar_1 \psi}{\agass}$ be its extension in $\dmas$ (which depends on $\agass$) and let $\statw \in W$.  
 Let $f: W \to \bbN$ be a mapping assigning to every $\stu \in W$ a number $f(\stu)$ such that 
 $\dmas, \stu, \agass \models \psi[f(\stu)/\agvar_1]$. 
Now, let\footnote{This is where we use the finiteness of the model.} $f^* := \max_{\stu\in W} f(\stu)$ and $C^* := \max(f^*,C)$. 
Then, by \cmon$^+$ from Lemma \ref{lem:monotonicity}, we obtain that  
 $\dmas, \stu, \agass \models \psi[C^*/\agvar_1]$ for each $\stu \in W$, hence  
$\dmas, \sts, \agass \models \coop{C^*, t} \atlg \psi[C^*/\agvar_1]$.  
Therefore,  $\dmas, \sts, \agass \models \exists \agvar_1 \coop{\agvar_1,t} \atlg \varphi$. 
 Thus,  $\exists \agvar_1 \coop{\agvar_1, t} \atlg \exists \agvar_1 \psi \to \exists \agvar_1 \coop{\agvar_1, t} \atlg \varphi$ is valid in the finite, whence the claim.  

 \smallskip 
\ \ \ \ \textbf{Case ($\forall \agvar_2 \atlg$)}: 
to prove 
$\forall \agvar_2 \coop{t,\agvar_2} \atlg \psi \fineq  \forall \agvar_2 \coop{t,\agvar_2} \atlg \pushf(\forall \agvar_2, \psi)$.  

By the IH for $\psi$, we have that $\forall \agvar_2 \psi \fineq \pushf(\forall \agvar_2, \psi)$. 
So, it suffices to prove that $\forall \agvar_2 \coop{t,\agvar_2} \atlg \psi \fineq \forall \agvar_2 \coop{t,\agvar_2} \atlg \forall \agvar_2 \psi$.  
The implication $\models \forall \agvar_2 \coop{t,\agvar_2} \atlg \forall \agvar_2 \psi \to \forall \agvar_2 \coop{t,\agvar_2} \atlg \psi$ follows from 
$\models \forall \agvar_2 \psi \to \psi$, proved just like in FOL.  
For the converse, suppose 
$\dmas, \sts,  \agass \models \forall \agvar_2 \coop{t,\agvar_2} \atlg \psi$  
for some finite $\dmas$ with state space $\States$, 
assignment $\agass$ and $\sts \in \States$.   
Then, for every $N\in \bbN$, it holds that $\dmas, \sts,  \agass \models \coop{t,N} \atlg \psi[N/\agvar_2]$, i.e., there is an abstract positional joint strategy $\sigma_N$ for $\agass(t)$ many controllable agents, such that $\psi$ is true at every state on every outcome play 
enabled by $\sigma_N$ against $N$ uncontrollable agents. Since there are only finitely many abstract positional joint strategies 
for $\agass(t)$ controllable agents in $\dmas$, there is at least one such joint strategy which works for infinitely many values of $N$, and therefore, by \nmon, it will work for all $N\in \bbN$. Let us fix such strategy $\sigma^{\mathbf{c}}$. 
We will show that $\dmas, \sts,  \agass  \models \forall \agvar_2 \coop{t,\agvar_2} \atlg \forall \agvar_2 \psi$ by proving that, for every $N\in \bbN$,  
if $\sigma^{\mathbf{c}}$ is played by $\agass(t)$ many controllable agents it ensures the truth of $\dmas, \sts,  \agass  \models  \coop{t,N} \atlg \forall \agvar_2 \psi$.  
Suppose this is not the case for some $N'\in \bbN$. Then, there is an abstract positional joint strategy $\sigma^{\mathbf{n}}$ for $N'$ uncontrollable agents that guarantees reaching a state $\statw$ where $\forall \agvar_2 \psi$ fails on the unique play $\pi$ generated by the pair of joint strategies $(\sigma^{\mathbf{c}}, \sigma^{\mathbf{n}})$. Thus, 
$\dmas, \statw, \agass  \not\models \forall \agvar_2 \psi$, i.e.,  $\dmas, \statw, \agass  \models \lnot \forall \agvar_2 \psi$. 
Therefore, $\dmas, \statw, \agass  \models \lnot \psi[N''/\agvar_2]$ for some $N''\in \bbN$.  
Let $N^* := \max(N',N'')$. Then, by \nmon$^-$ from Lemma \ref{lem:monotonicity}, we have that 
$\dmas, \statw, \agass  \models \lnot \psi[N^*/\agvar_2]$. Furthermore, the strategy $\sigma^{\mathbf{n}}$ can be trivially extended to $\sigma^{\mathbf{n}*}$ for $N^*$ uncontrollable agents (by letting the extra $N^* - N'$ uncontrollable agents idle), hence the play $\pi$ is still generated by the resulting pair of joint strategies $(\sigma^{\mathbf{c}}, \sigma^{\mathbf{n}*})$ and the state $\statw$ as above will still be reached on it. 
On the other hand, by the choice of $\sigma^{\mathbf{c}}$, when it is played by the $\agass(t)$ many controllable agents against $N^*$ uncontrollable agents it guarantees maintaining forever the truth of $\psi$, i.e., $\dmas, \sts,  \agass \models \coop{t,N^*} \atlg \psi[N^*/\agvar_2]$. In particular, that implies $\dmas, \statw, \agass  \models \psi[N^*/\agvar_2]$ -- a contradiction. Therefore, the assumption that such $N'$ exists is wrong, whence the claim.

 \smallskip 
\ \ \ \ \textbf{Case ($\exists \agvar_2 \atlg$)}: 
to prove 
$\exists  \agvar_2 \coop{t,\agvar_2} \atlg \psi \fineq  \exists \agvar_2 \coop{t,\agvar_2} \atlg \pushf(\exists \agvar_2, \psi)$.  

This case is quite analogous to \textbf{Case ($\forall \agvar_1 \atlg$)} and is proved by using the IH for $\psi$, the equivalences  
$\exists  \agvar_2 \coop{t,\agvar_2} \atlg \psi \equiv \coop{t,0} \atlg \psi[0/\agvar_2]$ and 
$\exists \agvar_2 \coop{t,\agvar_2} \atlg \exists \agvar_2 \psi \equiv \coop{t,0} \atlg \exists \agvar_2 \psi$ from Lemma \ref{lem:equiv1}, and     
the monotonicity properties  $\cmon$ from Lemma \ref{lem:monotonicity}.

 \smallskip 
\ \ \ \  \textbf{Cases} $(\Q \agvar_i \atlx)$ are analogous, but a little simpler than those above.  

\ \ \ \  \textbf{Cases $(\Q \agvar_i \atlu)$}  are analogous, though a little longer than those above.   
    
 \smallskip      
  \item
$\mathcal{Q} = \Q \agvar_i$, where  $\term_1 \not = \agvar_i$ and $\term_2 \not = \agvar_i$.
  
    We are to show that 
   $\Q \agvar_i \coop{\term_1, \term_2}\chi \fineq  \coop{\term_1, \term_2} \pushf(\Q \agvar_i, \chi)$, 
   assuming the IH for the main state subformulae of $\chi$.  For that, it suffices to prove that the quantifier 
   $\Q$ can be equivalently pushed inside through $\coop{\term_1, \term_2}$ and the main temporal connective of $\chi$, e.g.,  
   that $\Q \agvar_i \coop{\term_1, \term_2}\atlg \psi \fineq  \coop{\term_1, \term_2} \Q \agvar_i \atlg \psi$. 
   The non-trivial implications follow from the fact that there are only finitely many abstract positional strategies for the controllable agents 
   in any given finite model, plus the monotonicity properties from Lemma \ref{lem:monotonicity}. The argument for that is essentially the same as that in the proof of \textbf{Case ($\forall \agvar_2 \atlg$)} above.

 \smallskip      
  \item 
$\mathcal{Q} = \Q \agvar_1 \Q' \agvar_2$, where  $\term_1 \neq \agvar_1$ and $\term_2 \neq \agvar_2$.
 
    We are to show that 
   $\Q \agvar_1 \Q' \agvar_2 \coop{\term_1, \term_2}\chi \fineq 
\coop{\term_1, \term_2} \pushf(\Q \agvar_1 \Q' \agvar_2, \chi)$, \\ 
   assuming the IH for all state formulae of lower nesting depth, including the main state subformulae of $\chi$. 
 This equivalence follows by applying case (b) twice, first for $\mathcal{Q} = \Q' \agvar_2 $ and then for 
 $\mathcal{Q} = \Q \agvar_1$ (the IH on the nesting of strategic operators is used here), and each time using the IH.
   
 \smallskip   
 The case  
$\mathcal{Q} = \Q' \agvar_2 \Q \agvar_1$, where  $\term_1 \neq \agvar_1$ and $\term_2 \neq \agvar_2$  is completely analogous. 
   
 \smallskip      
  \item 
$\mathcal{Q} = \Q \agvar_1 \Q' \agvar_2$, where  $\term_1 = \agvar_1$ and $\term_2 \neq \agvar_2$.
    We are to show that  \\
   $\Q \agvar_1 \Q' \agvar_2 \coop{\term_1, \term_2}\chi \fineq 
   \Q \agvar_1 \coop{\term_1, \term_2} \pushf(\Q \agvar_1 \Q' \agvar_2, \chi)$, \\ 
   assuming the IH for all state formulae of lower nesting depth, incl. the main state subformulae of $\chi$. 
   E.g., when $\chi = \atlg \psi$,  we are to prove 
 $\Q \agvar_1 \Q' \agvar_2 \coop{\agvar_1, t_2} \atlg \psi \fineq  
\Q \agvar_1 \coop{\agvar_1, t_2} \atlg \pushf(\Q \agvar_1 \Q' \agvar_2, \psi)$.  
By the IH, $\pushf(\Q \agvar_1 \Q' \agvar_2, \psi) \fineq \Q \agvar_1 \Q' \agvar_2 \psi $, 
so we are to prove that 
$\Q \agvar_1 \Q' \agvar_2 \coop{\agvar_1, t_2} \atlg \psi \fineq  
\Q \agvar_1 \coop{\agvar_1, t_2} \atlg \Q \agvar_1 \Q' \agvar_2 \psi$.   

This follows by first applying case (b) for $\mathcal{Q} = \Q' \agvar_2 $ and the IH to obtain

$\Q' \agvar_2 \coop{\agvar_1, t_2} \atlg \psi \fineq  
\coop{\agvar_1, t_2} \atlg  \pushf(\Q' \agvar_2, \psi) \fineq  
\coop{\agvar_1, t_2} \atlg \Q' \agvar_2 \psi$, and then applying $\Q \agvar_1$ to both sides, 
then case (a) for $\mathcal{Q} = \Q \agvar_1$, and again the IH. 

 \smallskip      
  \item The case 
$\mathcal{Q} =  \Q' \agvar_2 \Q \agvar_1$, where  $\term_1 = \agvar_1$ and $\term_2 \neq \agvar_2$ is similar.
 
 \smallskip      
  \item The cases  
 $\mathcal{Q} = \Q \agvar_1 \Q' \agvar_2$ and  
$\mathcal{Q} =  \Q' \agvar_2 \Q \agvar_1$ 
where  $\term_1 \neq \agvar_1$ and $\term_2 = \agvar_2$ are completely analogous. 

  \smallskip      
  \item 
$\mathcal{Q} = \Q \agvar_1 \Q' \agvar_2$, where  $\term_1 = \agvar_1$ and $\term_2 = \agvar_2$.

    We have to prove 
   $\Q \agvar_1 \Q' \agvar_2 \coop{\agvar_1, \agvar_2}\chi \fineq 
   \Q \agvar_1 \Q' \agvar_2 \coop{\agvar_1, \agvar_2} \pushf(\Q \agvar_1 \Q' \agvar_2, \chi)$, \\ 
   assuming the IH for all state formulae of lower nesting depth, incl. the main state subformulae of $\chi$. 
   E.g., when $\chi = \atlg \psi$,  we are to prove 
$\Q \agvar_1 \Q' \agvar_2 \coop{\agvar_1, \agvar_2} \atlg \psi \fineq  
\Q \agvar_1 \Q' \agvar_2 \coop{\agvar_1, \agvar_2} \atlg \pushf(\Q \agvar_1 \Q' \agvar_2, \psi)$.  

By the IH, $\pushf(\Q \agvar_1 \Q' \agvar_2, \psi) \fineq \Q \agvar_1 \Q' \agvar_2 \psi $, 
so we are to prove that \\ 
$\Q \agvar_1 \Q' \agvar_2 \coop{\agvar_1, \agvar_2} \atlg \psi \fineq  
\Q \agvar_1 \Q' \agvar_2 \coop{\agvar_1, \agvar_2} \atlg \Q \agvar_1 \Q' \agvar_2 \psi$.   

By case (a), we have already shown that \\ 
$\Q' \agvar_2 \coop{\agvar_1, \agvar_2} \atlg \psi \fineq  
 \Q' \agvar_2 \coop{\agvar_1, \agvar_2} \atlg \Q' \agvar_2 \psi$.  
 
 By applying $\Q \agvar_1$ to both sides we obtain \\ 
 $\Q \agvar_1\Q' \agvar_2 \coop{\agvar_1, \agvar_2} \atlg \psi \fineq  
 \Q \agvar_1\Q' \agvar_2 \coop{\agvar_1, \agvar_2} \atlg \Q' \agvar_2 \psi$, so it remains to prove \\   
 $\Q \agvar_1\Q' \agvar_2 \coop{\agvar_1, \agvar_2} \atlg \Q' \agvar_2 \psi
 \fineq 
 \Q \agvar_1 \Q' \agvar_2 \coop{\agvar_1, \agvar_2} \atlg \Q \agvar_1 \Q' \agvar_2 \psi$.  
  
  For each case of $ \Q$ the argument for the non-trivial implication uses the monotonicity properties from Lemma \ref{lem:monotonicity} and is respectively similar to that in the proof of \textbf{Case ($\exists \agvar_2 \atlg$)} and \textbf{Case ($\forall \agvar_2 \atlg$)} above. 

 \smallskip 
The other cases for $\chi$ are similar. 

 \smallskip      
  \item 
The case $\mathcal{Q} = \Q' \agvar_2 \Q \agvar_1$, where  $\term_1 = \agvar_1$ and $\term_2 = \agvar_2$  
is completely analogous to the previous one. 
\end{enumerate}

 \smallskip  
 This completes the proof for all cases in the definition of $\pushf(\mathcal{Q}, \coop{\term_1, \term_2} \chi)$ and, therefore, the last inductive case in both inductions. Q.E.D. 
     \end{enumerate}
     
\end{proof}

Now we will define a recursive function \nff that transforms any state or path formula $\xi$ of $\lang$ respectively into a state or path formula $\xi^{\mathsf{NF}}$ in \langb, while preserving equivalence in the finite.

\bigskip
\begin{algorithmic}[1]
   \State \textbf{Input}: $\xi$, a state or path formula of \lang

   \State \textbf{Output}: $\xi^{\mathsf{NF}}$,  a state or path formula of \langb

   \Procedure{\nff}{$\xi$}

   \Cases{$\xi$}

   \Case{$\top \mid \avarprop$}\label{alg:nf-base} \Return $\xi$ 

   \Case{$\neg \psi$}\label{alg:nf-neg}
   \Return $\neg \nff(\psi)$
   
   \Case{$\psi_1 \land \psi_2$}\label{alg:nf-and} \Return $\nff(\psi_1) \land \nff(\psi_2)$

   \Case{$\psi_1 \lor \psi_2$}\label{alg:nf-or} \Return $\nff(\psi_1) \lor \nff(\psi_2)$

   \Case{$\coop{\term_1, \term_2} \chi$}\label{alg:nf-strat}
   \Return $\coop{\term_1, \term_2} \nff(\chi)$

\Case{$\mathcal{Q} \psi$} \Return{$\pqe(\pushf(\mathcal{Q}, \nff(\psi)))$}

   \Case{$\atlx \psi$} \label{alg:nf-atlx} \Return $\atlx \nff(\psi)$

   \Case{$\atlg \psi$}\label{alg:nf-atlg} \Return $\atlg \nff(\psi)$

   \Case{$\psi_1 \atlu \psi_2$}\label{alg:nf-atlu} \Return $\nff(\psi_1) \atlu \nff(\psi_2)$
   
  \EndCases
   \EndProcedure
 \end{algorithmic}
Intuitively, $\nff$ transforms the input formula by first applying $\pushf$ and then $\pqe$ whenever a quantifier prefix is to be applied, thus producing a formula in a normal form. 

 \begin{example}
   Let $\varphi$ be as in Example~\ref{ex:push}. Then $\nff(\varphi)$ is:
\[ \begin{array}{c}
\coop{0, 5} \big( (\forall \agvar_2 \coop{0, \agvar_2} \atlx p_1) \;\; \atlu \;\;  (\forall \agvar_2 \coop{0, \agvar_2} \atlf p_2 ) \big) \lor{} \\
( \exists \agvar_1\forall\agvar_2 \coop{\agvar_1, \agvar_2} \atlf p_3  \;\; \land \;\; \neg \forall \agvar_2 \coop{3, \agvar_2} \atlx p_1 )
       \end{array} \]
   \end{example}

   \begin{lemma} \label{lem:push}
     Let $\varphi$ a state
     formula of $\langb$. For every admissible quantifier prefix $\mathcal{Q}$, if the variables occurring in $\mathcal{Q}$ do not occur free in $\varphi$, then $\pushf(\mathcal{Q}, \varphi) = \varphi$.
   \end{lemma}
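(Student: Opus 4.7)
The plan is to prove the claim by mutual structural induction on state and path formulas of $\langb$, following the case split in the recursive definition of $\pushf$. The common invariant maintained throughout the induction is that the variables of the current quantifier prefix do not occur free in the formula being processed, which is preserved when descending into subformulas.

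For the base cases ($\top$ and atomic propositions) the claim is immediate by the definition of $\pushf$. For the Boolean cases ($\neg\psi$, $\psi_1 \land \psi_2$, $\psi_1 \lor \psi_2$) and the temporal cases ($\atlx\psi$, $\atlg\psi$, $\psi_1 \atlu \psi_2$), $\pushf$ recurses on direct subformulas with either $\mathcal{Q}$ or its dual $\overline{\mathcal{Q}}$, both of which contain the same variables. Since those variables remain disjoint from the free variables of the subformulas (which are contained in those of $\varphi$), the inductive hypothesis applies and the subformulas are returned unchanged.

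For the strategic operator case $\varphi = \coop{\term_1,\term_2}\chi$, if a variable $y_i$ occurs in $\mathcal{Q}$ then by assumption $y_i$ is not free in $\varphi$, and in particular $\term_i \neq y_i$ (otherwise $y_i$ would appear free as an argument of the strategic operator). Inspecting the subcases of $\pushf$ for the strategic operator, this forces us into a branch where the quantifier prefix is simply discarded, yielding $\coop{\term_1,\term_2}\pushf(\mathcal{Q},\chi)$. The free variables of $\chi$ being contained in those of $\varphi$, the inductive hypothesis on $\chi$ gives $\pushf(\mathcal{Q},\chi) = \chi$, and thus the output equals $\varphi$.

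The main obstacle lies in the quantified case $\varphi = \Q'' y_k \psi$, which in $\langb$ must match one of the four restricted patterns of~(\ref{eq:nf-temporal}); in particular, the NF constraint guarantees that no prefix-bound variable occurs free in the innermost body $\chi'$ of the strategic operator. Here $\pushf$ first merges the external prefix $\mathcal{Q}$ with $\Q'' y_k$, collapsing duplicates when variables coincide, and then propagates the combined prefix through $\psi$ until it reaches the inner strategic operator $\coop{\term_1,\term_2}\chi'$. The key verification is that the subcases of $\pushf$ reattach in front of that strategic operator precisely those quantifiers whose variable appears as $\term_1$ or $\term_2$: by the shape of the NF patterns these are exactly the quantifiers of the original $\Q'' y_k$, while the quantifiers inherited from $\mathcal{Q}$, whose variables cannot match $\term_1$ or $\term_2$ by hypothesis, are stripped away. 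The residual merged prefix then recurses into $\chi'$, whose free variables avoid all of these quantified variables (by the NF constraint combined with the original assumption on $\mathcal{Q}$), so the inductive hypothesis yields that $\chi'$ is returned unchanged. A routine case analysis over the four patterns of~(\ref{eq:nf-temporal}) confirms in each case that the assembled output is exactly $\varphi$, completing the induction.
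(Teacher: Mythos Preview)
Your proposal is correct and follows essentially the same approach as the paper: structural induction on $\varphi$ in normal form along the recursive definition of $\pushf$, with the quantified case handled by unfolding $\pushf$ through the NF pattern down to the innermost temporal objective $\chi$ and then invoking the inductive hypothesis there, using the NF constraint that the quantified variables do not occur free in $\chi$. The paper's proof spells out only one representative subcase ($\varphi = \exists y_1 \coop{y_1,t_2}\chi$ with a two-variable prefix $\mathcal{Q}$), whereas you give a more complete outline of all the cases, but the argument is the same.
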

   \begin{proof}
     The argument is by structural induction on $\varphi$ in normal form, by following the recursive definition of $\pushf$. The non-trivial cases are those involving quantifiers. We consider $\exists \agvar_1\coop{\agvar_1, \term_2} \chi$ and $\mathcal{Q} = \Q\agvar_1\Q'\agvar_2$, the other cases are proved analogously.
Since $\varphi \in \langb$ by hypothesis, we have that $\term_2 \not = \agvar_2$, thus by definition
$\pushf(\Q\agvar_1\Q'\agvar_2, \exists\agvar_1 \coop{\agvar_1, \term_2} \chi) =$
    $\pushf(\Q'\agvar_2 \exists \agvar_1, \coop{\agvar_1, \term_2} \chi) =$
    $\exists \agvar_1 \coop{\agvar_1, \term_2} \pushf(\Q'\agvar_2 \exists \agvar_1, \chi)$. By hypothesis $\agvar_2$ is not free in $\psi$, which entails that $\agvar_2$ is not free in $\chi$ and the same holds for $\agvar_1$ by (NF2) in the definition of normal form. We can therefore apply the inductive hypothesis on $\chi$ to get $\exists \agvar_1 \coop{\agvar_1, \term_2} \pushf(\Q'\agvar_2 \exists \agvar_1, \chi) = \exists \agvar_1 \coop{\agvar_1, \term_2} \chi$.
     \end{proof}

   \begin{lemma} \label{lem:nff-closure}
If $\varphi \in \langb$ then $\nff(\varphi)=\varphi$.
\end{lemma}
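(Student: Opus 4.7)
The plan is to proceed by mutual structural induction on state and path formulae of $\langb$, following the recursive definition of $\nff$. The base cases ($\top$, $p$), the Boolean connectives, the strategic operator case $\coop{\term_1,\term_2}\chi$, and the temporal operator cases ($\atlx$, $\atlg$, $\atlu$) all preserve the normal form and reduce directly to the inductive hypothesis on subformulae; none of these require work beyond observing that the recursive calls return their inputs unchanged.

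The only nontrivial case is the quantifier-prefix case $\xi = \mathcal{Q}\psi$. Here I would use the fact that if $\xi \in \langb$, then by (NF1)-(NF3) the subformula $\psi$ must have the shape $\coop{\term_1,\term_2}\chi$ and the pair $(\mathcal{Q}, \coop{\term_1,\term_2})$ must match one of the four templates in the grammar~(\ref{eq:nf-temporal}), namely $\exists \agvar_1\coop{\agvar_1,\term_2}\chi$ (with $\term_2 \neq \agvar_2$), $\forall \agvar_2\coop{\term_1,\agvar_2}\chi$ (with $\term_1 \neq \agvar_1$), $\forall \agvar_2 \exists \agvar_1 \coop{\agvar_1,\agvar_2}\chi$, or $\exists \agvar_1 \forall \agvar_2 \coop{\agvar_1,\agvar_2}\chi$. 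In each template, no variable bound by $\mathcal{Q}$ occurs free in $\chi$. By the inductive hypothesis on $\chi$ we have $\nff(\chi) = \chi$, hence $\nff(\psi) = \coop{\term_1,\term_2}\chi = \psi$.

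Next I would inspect the four matching cases of the $\pushf$ algorithm for $\coop{\term_1,\term_2}\chi$ and verify that each one outputs $\mathcal{Q}\coop{\term_1,\term_2}\pushf(\mathcal{Q}, \chi)$. Since the variables in $\mathcal{Q}$ do not occur free in $\chi$, Lemma~\ref{lem:push} -- extended routinely to path formulae by a parallel induction through the temporal operators, which simply distribute $\pushf$ into state subformulae -- yields $\pushf(\mathcal{Q}, \chi) = \chi$. Therefore $\pushf(\mathcal{Q}, \nff(\psi)) = \mathcal{Q}\coop{\term_1,\term_2}\chi = \xi$. Since $\xi \in \langb$, Lemma~\ref{def:pqe-equality} then gives $\pqe(\xi) = \xi$, closing the induction.

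The main obstacle is bookkeeping rather than mathematical depth: one must carefully handle the mild ambiguity of how $\nff$ parses the prefix $\mathcal{Q}\psi$ when $\mathcal{Q}$ is two quantifiers long (greedily as a single 2-prefix, versus peeling one quantifier at a time and re-invoking $\nff$). Both readings lead to the same conclusion because $\pushf$ and $\pqe$ act as the identity on the normal-form patterns, but the case split deserves an explicit check. Beyond that, the argument is a direct unfolding of the definitions together with the two already-proved lemmas.
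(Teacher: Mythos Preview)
Your proposal is correct and follows essentially the same approach as the paper's proof: structural induction on $\langb$ formulae, with the only substantive case being $\mathcal{Q}\psi$, resolved by unfolding $\pushf$ on the strategic operator, applying Lemma~\ref{lem:push} to the inner temporal objective $\chi$ (using that the bound variables do not occur free in $\chi$), and then Lemma~\ref{def:pqe-equality}. Your extra care about extending Lemma~\ref{lem:push} to path formulae and about the parsing of two-quantifier prefixes is warranted bookkeeping that the paper leaves implicit, but it does not change the argument.
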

\begin{proof}
  By induction on the structure of $\varphi$ in normal form, following the recursive definition of $\nff$. 
 
  All cases are straightforward, except $\varphi = \mathcal{Q} \psi$. 
  We consider $\varphi = \exists \agvar_1 \coop{\agvar_1, \term_2} \chi$, all other cases being analogous. 
By the IH, $\nff(\coop{\agvar_1, \term_2} \chi) = \coop{\agvar_1, \term_2} \chi$.  
Also, note that $\agvar_1$ does not occur free in $\chi$ since $\varphi \in \langb$. 
Therefore, using Lemmas \ref{lem:push} and  \ref{def:pqe-equality}, we successively obtain:
$\nff(\varphi) = 
\nff(\exists \agvar_1 \coop{\agvar_1, \term_2} \chi) = 
  \pqe(\pushf(\exists \agvar_1, \nff(\coop{\agvar_1, \term_2} \chi))) =  \\
  \pqe(\pushf(\exists \agvar_1, \coop{\agvar_1, \term_2} \chi)) =   
   \pqe(\exists \agvar_1 \coop{\agvar_1, \term_2} \chi) =
   \exists \agvar_1 \coop{\agvar_1, \term_2} \chi 
    = \varphi $.
  \end{proof}
 
  \begin{theorem} 
\label{thm:langb} 
Let $\varphi$ be any formula in $\lang$. Then: 
\begin{enumerate} 
\item $\nff(\varphi) \in \langb$. 

\item $\nff(\varphi) \fineq \varphi$.

\item $\nff(\varphi)$ can be computed effectively and has length linearly bounded above by $|\varphi|$. 
\end{enumerate}
\end{theorem}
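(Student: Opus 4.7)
My plan is to prove all three claims by simultaneous structural induction on $\varphi$, following the recursive definition of $\nff$. The base cases ($\top$, $p$) and the Boolean cases are immediate, since $\langb$ is closed under negation, conjunction, and disjunction, and $\fineq$ is a congruence for these connectives. The temporal cases ($\atlx \psi$, $\atlg \psi$, $\psi_1 \atlu \psi_2$) and the strategic case $\coop{\term_1, \term_2} \chi$ follow directly from the IH applied to the immediate subformulae, together with the observation that wrapping a formula in normal form inside a temporal or strategic operator preserves the normal form conditions (NF1)--(NF3), and all three connectives are congruences for $\fineq$.

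The only essential case is $\varphi = \mathcal{Q}\psi$, where $\nff(\mathcal{Q}\psi) = \pqe(\pushf(\mathcal{Q},\nff(\psi)))$. For claim (2), the equivalence chain is $\mathcal{Q}\psi \fineq \mathcal{Q}\,\nff(\psi)$ by the IH, then $\mathcal{Q}\,\nff(\psi) \fineq \pushf(\mathcal{Q},\nff(\psi))$ by Theorem \ref{thm:pushing-quantifiers}, and finally $\pushf(\mathcal{Q},\nff(\psi)) \equiv \pqe(\pushf(\mathcal{Q},\nff(\psi)))$ by Lemma \ref{def:pqe-equivalence}. For claim (1), I need to verify that the output indeed lies in $\langb$. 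By the IH, $\nff(\psi)$ is already in normal form, so any quantifiers inside it are immediately followed by a strategic operator in accordance with (NF2)/(NF3) and no forbidden prefixes $\forall y_1$ or $\exists y_2$ occur. Inspecting the cases of $\pushf$, the prefix $\mathcal{Q}$ is either absorbed into an inner quantifier (the quantifier-merging clauses) or distributed through Booleans and temporal operators until it reaches strategic operators, where it either disappears (because its variables are not in the argument positions) or attaches as a quantifier immediately preceding a strategic operator. Thus $\pushf(\mathcal{Q},\nff(\psi))$ already satisfies (NF2) and (NF3); then $\pqe$ rewrites any remaining occurrence of a forbidden $\forall y_1$ or $\exists y_2$ prefix using the equivalences of Lemma \ref{lem:equiv1}, guaranteeing (NF1). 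A careful case-by-case check of the clauses of $\pushf$ and $\pqe$ confirms that nothing is produced that could violate any of the three conditions.

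For claim (3), effective computability is clear since $\nff$, $\pushf$ and $\pqe$ are all terminating recursive procedures that decrease a well-founded measure (syntactic structure plus accumulated quantifier prefix). The length estimate is the delicate point: $\pqe$ never increases length (it only substitutes $0$ for a variable or deletes a quantifier), while a single call $\pushf(\mathcal{Q},\xi)$ adds at most one copy of the bounded-size prefix $\mathcal{Q}$ at each strategic operator of $\xi$ and may duplicate the prefix across Boolean connectives, so $|\pushf(\mathcal{Q},\xi)| \le (1+|\mathcal{Q}|)\,|\xi|$. Since $|\mathcal{Q}| \le 4$ is a constant, each individual application contributes only a constant multiplicative factor, and the recursive accumulation over $\varphi$ yields the advertised linear bound in $|\varphi|$. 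The main obstacle I foresee is in claim (1): rigorously arguing that the interleaving of $\pushf$ (which must distribute $\mathcal{Q}$ precisely into the correct argument positions of the strategic operators and merge cleanly with pre-existing inner quantifiers) followed by $\pqe$ produces an output satisfying all three normal form conditions requires a careful, somewhat tedious case analysis of the many clauses in the two procedures, especially when $\mathcal{Q}$ is a two-quantifier prefix meeting an inner quantifier of a matching variable.
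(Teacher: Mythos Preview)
Your approach to claims (1) and (2) is essentially the same as the paper's: structural induction on $\varphi$ following the recursion of $\nff$, with the only non-trivial case $\varphi = \mathcal{Q}\psi$ handled by the chain IH $+$ Theorem~\ref{thm:pushing-quantifiers} $+$ Lemma~\ref{def:pqe-equivalence}. Your discussion of claim (1) is in fact more detailed than the paper's, which simply says ``direct inspection''.

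There is, however, a genuine gap in your argument for the linear size bound in claim (3). You write $|\pushf(\mathcal{Q},\xi)| \le (1+|\mathcal{Q}|)\,|\xi|$ and then conclude that ``the recursive accumulation over $\varphi$ yields the advertised linear bound''. This does not follow: a constant \emph{multiplicative} factor, accumulated over $d$ nested quantifier prefixes, gives a bound of the form $c^{d}\,|\varphi|$, which is exponential in the quantifier nesting depth, not linear. To get a linear bound you need an \emph{additive} argument. One clean way: observe that $\nff$ and $\pqe$ never change the ``quantifier-free backbone'' of the formula (they only add, remove, or relocate quantifier symbols and substitute $0$ for variables), and that in the final output every strategic operator carries at most one admissible prefix of length $\le 2$ in front of it. Hence the total number of quantifier symbols in $\nff(\varphi)$ is bounded by twice the number of strategic operators in $\varphi$, giving $|\nff(\varphi)| \le c\,|\varphi|$ for an absolute constant $c$. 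The paper itself merely says ``direct inspection of all cases'', so you are not missing a stated argument, but your stated reasoning as written would not establish the claim.
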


\begin{proof}
  The first claim follows by straightforward induction on the structure of $\varphi$, or just by direct inspection of the function $\nff$.
  
  Claim \emph{2.} is proved by induction on the structure of $\varphi$, following the cases of the recursive definition of $\nff$. The only non-trivial case is  $\varphi = \mathcal{Q} \psi$, which follows immediately from the IH, Theorem~\ref{thm:pushing-quantifiers}, and Lemma~\ref{def:pqe-equality}.
  
     Lastly, Claim \emph{3.} follows by direct inspection of all cases in the definitions of the functions $\pqe$, $\pushf$ and $\nff$.
  \end{proof}

We conclude the section by presenting the fixpoint characterizations of formulae in~(\ref{eq:nf-temporal}), which provide an effective procedure for the model checking algorithm.

\begin{theorem} 
\label{lem:fixpoints1}
For every terms $t, t',t'' \in \termset$ 
the following equivalences hold, where the formulae on the left are in \langb. 
\begin{enumerate}
\smallskip
\item %1
$\coop{t', t''} \atlg \varphi \equiv  
\varphi \land \coop{t', t''} \atlx \coop{t', t''} \atlg \varphi $  

\smallskip
\item %2
$\coop{t', t''} \psi \atlu \varphi \equiv  
\varphi \lor (\psi \land \coop{t', t''} \atlx \coop{t', t''} \psi \atlu \varphi)$  

\smallskip
\item %3
$\exists y_1 \coop{y_1, t} \atlg \varphi \fineq 
\varphi \land \exists y_1 \coop{y_1,t} \atlx \exists y_1 \coop{y_1,t} \atlg \varphi $  

\smallskip
\item %4
$\forall y_2 \coop{t, y_2} \atlg \varphi \fineq 
\varphi \land \forall y_2 \coop{t, y_2} \atlx \forall y_2  \coop{t, y_2} \atlg \varphi $  

\smallskip
\item %5
$\exists y_1 \coop{y_1, t} \psi \atlu \varphi  \fineq 
 \varphi \lor (\psi \land \exists y_1 \coop{y_1, t} \atlx \exists y_1 \coop{y_1, t} \psi \atlu \varphi)$  

\smallskip
\item %6
$\forall y_2 \coop{t, y_2} \psi \atlu \varphi \fineq 
\varphi \lor (\psi \land  \forall y_2 \coop{t, y_2} \atlx \forall y_2 \coop{t, y_2} \psi \atlu \varphi)$  

\smallskip
\item %7
$\forall y_2 \exists y_1 \coop{y_1, y_2} \atlg \varphi \fineq  
\varphi \land \forall y_2 \exists y_1 \coop{y_1, y_2} \atlx \forall y_2 \exists y_1 \coop{y_1, y_2} \atlg \varphi$.     

\smallskip
\item %8
$\exists y_1 \forall y_2 \coop{y_1, y_2} \atlg \varphi \fineq  \varphi \land  \exists y_1 \forall y_2 \coop{y_1, y_2} \atlx 
 \exists y_1 \forall y_2 \coop{y_1, y_2} \atlg \varphi$.    
  
  \smallskip
 \item %9
$\forall y_2 \exists y_1 \coop{y_1, y_2} \psi \atlu \varphi \fineq 
 \varphi \lor ( \psi \land  \forall y_2 \exists y_1 \coop{y_1, y_2} \atlx 
\forall y_2 \exists y_1 \coop{y_1, y_2} \psi \atlu \varphi)$.     

\smallskip
\item %10
$\exists y_1 \forall y_2 \coop{y_1, y_2} \psi \atlu \varphi \fineq 
 \varphi \lor  (\psi \land  \exists y_1 \forall y_2 \coop{y_1, y_2} \atlx 
 \exists y_1 \forall y_2 \coop{y_1, y_2} \psi \atlu \varphi)$.    
 \end{enumerate}
%}
\end{theorem}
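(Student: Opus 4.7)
I would prove the ten equivalences in three groups: (A) items~1--2 (no quantifiers, actually valid, not just \fineq), (B) items~3--6 (one quantifier prefix), (C) items~7--10 (two-quantifier prefix). All cases share the same skeleton: $(\Rightarrow)$ unfold a witness strategy for one step; $(\Leftarrow)$ glue together one-step actions with witness strategies at successor states into a single positional strategy. The quantified cases additionally require finding a \emph{uniform} number of controllable agents (or a uniform witness strategy across all values of $y_2$), and this is where the finite model hypothesis and Lemma~\ref{lem:monotonicity} do the real work.

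\textbf{Group A (items 1--2).} For $\coop{t',t''}\atlg\varphi \equiv \varphi \land \coop{t',t''}\atlx\coop{t',t''}\atlg\varphi$, the $(\Rightarrow)$ direction is immediate: a positional witness strategy $\sigma$ for $\atlg\varphi$ at $s$ is also a witness at every successor $s'$ reachable via $\sigma(s)$ against $\agass(t'')$ uncontrollable agents. For $(\Leftarrow)$, let $W = \stexten{\coop{t',t''}\atlg\varphi}{\agass}$; for each $s'\in W$ pick a positional witness $\sigma^{s'}$ and define a global positional strategy $\sigma^*$ by $\sigma^*(s') := \sigma^{s'}(s')$ for $s'\in W$. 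One checks that $W$ is closed under $\sigma^*$-outcomes (against $\agass(t'')$ uncontrollable agents), so every outcome play from $s$ stays in $W$, hence satisfies $\atlg\varphi$. Item~2 is analogous, using the standard pre-fixpoint argument for $\atlu$.

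\textbf{Group B (items 3--6).} Take item~3 as the prototype and write $\psi := \exists y_1 \coop{y_1,t}\atlg\varphi$. The $(\Rightarrow)$ direction reuses the witness pair $(C,\sigma_C)$ from the LHS at every successor. For $(\Leftarrow)$, suppose $\varphi$ holds at $s$ and $D$ controllable agents have an abstract action $a_D$ at $s$ ensuring every successor lies in $W := \stexten{\psi}{\agass}$. For each $s'\in W$ fix a witness $(C_{s'},\tau^{s'})$. Set $C^* := \max(\{D\}\cup\{C_{s'} : s'\in W\})$; this maximum exists by finiteness of $\States$. By the monotonicity property \cmon$^+$ of Lemma~\ref{lem:monotonicity} (applicable since $y_1$ occurs positively and only in the first position), each $\tau^{s'}$ lifts to a witness $\tau^{s'*}$ for $C^*$ agents, and $a_D$ lifts to $a^*$, by letting the extra $C^*-C_{s'}$ (resp.\ $C^*-D$) agents idle. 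Now glue: $\sigma^*(s):=a^*$ and $\sigma^*(s'):=\tau^{s'*}(s')$ for $s'\in W$. The key verification is that $W$ is $\sigma^*$-closed: if $s'\in W$ and $s''$ is any outcome successor under $\sigma^*$ against $\agass(t)$ uncontrollable agents, then $\tau^{s'*}$ continues to witness $\atlg\varphi$ from $s''$, so $s''\in W$; hence $\varphi$ holds everywhere along any $\sigma^*$-play from~$s$. Item~5 is the Until analogue, handled by well-founded induction on the distance to a $\varphi$-state inside $W$. Items~4 and~6 use $\forall y_2$ in place of $\exists y_1$; their $(\Rightarrow)$ direction uses a pigeonhole step: since $\dmas$ is finite there are only finitely many positional strategies for $\agass(t)$ controllable agents, so some $\sigma$ witnesses $\coop{t,N}\atlg\varphi$ for infinitely many $N$, and by \nmon$^+$ for all $N$. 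The $(\Leftarrow)$ then combines the same pigeonhole with the gluing construction.

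\textbf{Group C (items 7--10) and the main obstacle.} These combine both tricks. For item~7, $\forall y_2\exists y_1\coop{y_1,y_2}\atlg\varphi$, the $(\Leftarrow)$ direction asks: from the inner fixpoint information (any $N$ has \emph{some} $C_N$ with a witness at $s$, and at each reachable $W$-state the same holds) produce \emph{uniform in $N$} a single $C^*$ and strategy. The strategy one builds must work against arbitrarily many uncontrollable agents, so one cannot simply take a maximum over $N\in\nat$. The resolution is to first apply pigeonhole on positional strategies of $C_N$ controllable agents to extract a fixed witness strategy that works against all $N$ (using \nmon$^+$), and only then take the maximum over the finitely many states in $W$ (using \cmon$^+$) to obtain the uniform $C^*$; the quantifier order $\forall y_2\exists y_1$ makes this legitimate. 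Item~8, with prefix $\exists y_1\forall y_2$, is analogous but the two maxima are interchanged. Items~9 and~10 are the Until analogues, with the usual induction on the least step where $\varphi$ is reached. The main difficulty of the whole theorem lies precisely here: justifying that one can commute the choice of $C^*$ with the universal quantification over $N$ in a finite model. Both finiteness ingredients (finitely many positional strategies, finitely many states) together with the two monotonicity properties of Lemma~\ref{lem:monotonicity} are indispensable, and this is the only point in the whole proof that genuinely requires $\fineq$ rather than $\equiv$.
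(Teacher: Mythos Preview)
Your direct semantic approach is sound and would succeed, but it is \emph{not} the route the paper takes. The paper's proof of items~3--10 is almost entirely syntactic: it instantiates item~1 (resp.\ item~2) with $t'=y_1$ and/or $t''=y_2$, applies the relevant quantifier prefix to both sides of the resulting equivalence, and then invokes the normal-form machinery already built --- Theorem~\ref{thm:langb} together with Lemmas~\ref{lem:push}, \ref{lem:nff-closure} and~\ref{def:pqe-equality} --- to compute $\nff$ of the right-hand side explicitly and read off the desired normal-form formula. All the semantic heavy lifting you describe (pigeonhole on positional strategies, maxima over states, monotonicity) has already been absorbed into the proof of Theorem~\ref{thm:pushing-quantifiers}; here it is reused for free. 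Your plan essentially re-derives the relevant special cases of that theorem by hand, which is more self-contained but duplicates effort.

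One genuine confusion in your Group~C discussion: for item~7, with prefix $\forall y_2\exists y_1$, the $(\Leftarrow)$ direction does \emph{not} require producing a $C^*$ uniform in $N$. The LHS is $\forall N\,\exists C$, so for each fixed $N$ you may pick $C_N$ separately; the gluing should then be carried out over $V_N:=\stexten{\exists y_1\coop{y_1,N}\atlg\varphi}{\agass}\supseteq W$ rather than over $W$ (a $\tau^{s'}$-successor is only guaranteed to lie in $V_N$), and only a finite maximum over states of $V_N$ is needed. The ``uniform in $N$'' obstacle you describe is the crux of item~8 ($\exists y_1\forall y_2$), not item~7. Conversely, you underplay the $(\Rightarrow)$ direction of item~7: showing that successors of a witness action land in $W$ (not merely in $V_N$) genuinely needs finiteness, e.g.\ via the observation that the decreasing chain $V_0\supseteq V_1\supseteq\cdots$ stabilises at some $V_{N^*}=W$. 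With these two corrections your direct argument goes through.
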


\begin{proof} 
~

\begin{enumerate}
\item 
Follows directly from the semantics, just like the respective fixpoint equivalence for 
$\coop{A} \atlg$\! in\! ATL, cf.\! \cite{TLCSbook}. 

\item %2
Likewise, just like the respective fixpoint equivalence for 
$\coop{A}\! \atlu$ in ATL.

\item %3
First, note that $y_1$ does not occur free in $\varphi$ since $\exists y_1 \coop{y_1, t} \atlg \varphi \in \langb$. 

Now, we take the equivalence 1, where $t' = y_1$, and quantify both sides with $\exists y_1$, obtaining:

$\exists y_1 \coop{y_1, t} \atlg \varphi \equiv 
\exists y_1 (\varphi \land \coop{y_1, t} \atlx \coop{y_1, t} \atlg \varphi) 
\equiv 
\varphi \land \exists y_1 \coop{y_1, t} \atlx \coop{y_1, t} \atlg \varphi$.       
  
  By applying Theorem~\ref{thm:langb} to both sides  above
  and then using Lemmas \ref{lem:push},  \ref{lem:nff-closure} and \ref{def:pqe-equality}
we obtain:  
  
  $\exists y_1 \coop{y_1, t} \atlg \varphi   
  \\ 
  \fineq     
 \nff(\varphi \land \exists y_1\coop{y_1, t} \atlx \coop{y_1, t} \atlg \varphi) \\ 
 =  \nff(\varphi)  \land \nff(\exists y_1\coop{y_1, t} \atlx \coop{y_1, t} \atlg \varphi) \\   
=  \varphi  \land   \pqe(\pushf(\exists y_1, \nff(\coop{y_1, t} \atlx \coop{y_1, t} \atlg \varphi))) \\ 
=  \varphi  \land   \pqe(\pushf(\exists y_1, \coop{y_1, t} \atlx \coop{y_1, t} \atlg \nff(\varphi))) \\ 
=  \varphi  \land   \pqe(\pushf(\exists y_1, \coop{y_1, t} \atlx \coop{y_1, t} \atlg \varphi)) \\ 
=  \varphi  \land   \pqe(\exists y_1 \coop{y_1, t}  \pushf(\exists y_1, \atlx \coop{y_1, t} \atlg \varphi)) 
\\ 
=  \varphi  \land   \pqe(\exists y_1 \coop{y_1, t}  \atlx \exists y_1\coop{y_1, t} \pushf(\exists y_1, \atlg \varphi))  \\ 
=  \varphi  \land   \pqe(\exists y_1 \coop{y_1, t}  \atlx \exists y_1\coop{y_1, t} \atlg \varphi)  \\ 
\equiv 
\varphi \land \exists y_1 \coop{y_1, t} \atlx \exists \agvar_1 \coop{\agvar_1, t} \atlg \varphi$.

\medskip
The other cases are analogous.
\end{enumerate}
\end{proof}

%%% Local Variables:
%%% mode: latex
%%% TeX-master: "main"
%%% save-place: t
%%% End:
\section{Model checking}
\label{sec:MC}

In this section we develop an algorithm for model checking the fragment \langb.  
By virtue of Theorem \ref{thm:langb}, it will provide a model checking procedure for the whole 
\lang. 

Let $\varphi$ be any  state formula of \lang, $\dmas$ be a \oursys, $\sts$ a state and $\agass$ an assignment in $\dmas$. The \defstyle{local model checking problem} is the problem of deciding whether $\dmas, \sts, \agass \models \varphi$, while 
the \defstyle{global model checking problem} is the computational problem that returns the set of 
 states in $\dmas$ where the input formula $\varphi$ is satisfied, i.e. it is the problem of computing the  \defstyle{state extension} of $\varphi$ in $\dmas$ given $\agass$, formally defined as:
\[\stexten{\varphi}{\agass} = \set{\sts \in \States \mid \dmas, \sts, \agass \models \varphi}.\]
For closed formulae $\varphi$, $\stexten{\varphi}{\agass}$ does not depend on the assignment $\agass$, so we omit it and write 
$\stexten{\varphi}{}$.

Algorithm~\ref{algo:glob-mc} presented here solves the global model checking problem for all $\langb$ formulae.
The core sub-procedure of the algorithm is the function $\algostyle{preImg}$ which, given a set $\Qstates$ of states in $\States$ and $C, N \in \nat$, returns the set of states from which $C$ controllable agents have a joint action, which, when played against any joint action of other $N$ uncontrollable agents produces an outcome state in $\Qstates$. 
We will call that set the \defstyle{$(C,N)$-controllable pre-image of $\Qstates$}. Often we will omit $(C,N)$, when unspecified or fixed in the context, and will write simply ``the controllable pre-image of $\Qstates$''.   We also extend that notion to `` \defstyle{$(\term_1,\term_2)$-controllable pre-image}'', for any
terms $\term_1,\term_2$, 
the values of which are given by the assignment.  When $\Qstates = \stexten{\psi}{{\agass}}$, it computes the state extension of $\coop{\term_1,\term_2} \atlx \fob$ which is parameterised by terms $\term_1,\term_2$ (by means of their values 
$\agass(\term_1)$ and $\agass(\term_2)$).
We then extend that further to quantified extensions of $\coop{\term_1,\term_2} \atlx \fob$, by adding the respective quantification to the result.   
In all cases, we reduce the problem of computing the controllable pre-images to checking the truth of  Presburger formulae.

We now proceed with some technical preparation. Recall that $\actvarset^+$ is the set of $n+1$ action counters.
We will also be using auxiliary integer variables $k_1, \ldots, k_n, k_{\noop}$ and $\ell_1, \ldots, \ell_n, \ell_{\noop}$ not contained in $\actvarset^+$. Each $k_i$ (respectively, $\ell_i$) represents the number of {controllable} (respectively, {uncontrollable}) agents performing action $\act_i$; likewise for $k_{\noop}$ (resp., $\ell_{\noop}$) for the number of controllable (resp., uncontrollable) agents performing the idle action.
Also, for each $\sts$ in $\States$ and $i \in \set{1, \ldots, n}$ we introduce an auxiliary propositional constant $\actav^i_s$ which is true
if and only if action $\act_i$ is available in $\sts$, i.e., $\act_i \in \actav(\sts)$.

\begin{definition}\label{def:Presb}
Given a \oursys $\dmas$ with a finite  state space $\States$, 
state $\sts$ in $\States$, a subset $\Qstates$ of $\States$, and 
terms $\term_1$, $\term_2$, 
we define the following Presburger formulae: 

\[ \grd{\sts}{\Qstates}(\actvar_1,..., \actvar_n) := \bigvee_{\sts' \in \Qstates} \transrel(\sts, \sts')(\actvar_1,..., \actvar_n).  \hspace{24mm}  \]  
 
 \smallskip  
\[ \prf(\dmas, \sts, \term_1, \term_2, \Qstates) := 
 \exists k_1 ... \exists k_n\, \exists k_{\noop}  
 \Bigg( 
 \bigwedge_{i=1}^n (k_i \not = 0 \ra \actav^i_s) \land{} 
  \sum_{i =1}^n k_i + k_{\noop} = \term_1 \land {} 
  \hspace{\fill} ~ 
          \]
        \[  
 \forall \ell_1...  \forall \ell_n\,  \forall \ell_{\noop} \,        
       \bigg(   
       \Big(\bigwedge_{i=1}^n (\ell_i \not = 0 \ra \actav^i_s ) \land{} 
           \sum_{i =1}^n \ell_i + \ell_{\noop} = \term_2 
          \Big)   
    \ra 
     \grd{\sts}{\Qstates}\big((k_1+\ell_1),...,(k_n+\ell_n)\big)     
       \bigg) 
     \Bigg) 
    \]
\end{definition}

The formula $\prf(\dmas, \sts, \term_1, \term_2, \Qstates)$ intuitively says that there is a tuple of available actions at $\sts$ such that when played by $ \term_1$ many (controllable) agents and combined with any tuple of available actions for $ \term_2$ many (uncontrollable) agents, it satisfies a guard of a transition leading to a state in $Q$. 
(The formula can be shortened somewhat, if the quantification is restricted only to $k-$ and $\ell-$variables that correspond to action counters that appear in the guard $\grd{\sts}{\Qstates}$, which  would improve the complexity estimates, as shown in Section \ref{sec:complexity}.)    
That formula and its extensions with quantifiers over $\term_1$ (when equal to $\agvar_1$) and $\term_2$ (when equal to $\agvar_2$) will be used by the global model checking algorithm to compute the controllable pre-images of state extensions.

\begin{example}
Let us compute the state extension of the formula \\ $\varphi = \exists \agvar_1 \forall \agvar_2 \coop{\agvar_1, \agvar_2} \atlx (p \lor q)$ in the model $\dmas$ of Example~\ref{ex:1}. First, we compute  $\stexten{p \lor q}{} = \set{\sts_2, \sts_3, \sts_4, \sts_5, \sts_6}$. Then, for each state $\sts \in \dmas$ we check the truth of the closed Presburger formula 
 $\exists \agvar_1 \forall \agvar_2 \prf(\dmas, \sts, \agvar_1, \agvar_2, \stexten{p \lor q}{})$ in $\dmas$. 
  \begin{itemize}
  \item $\exists \agvar_1 \forall \agvar_2 \, \prf(\dmas, \sts_1, \agvar_1, \agvar_2, \stexten{p \lor q}{})$ is false, thus $\sts_1$ does not belong to the $\exists y_1\forall y_2(y_1,y_2)$-controllable pre-image of $\stexten{p \lor q}{}$. Indeed $11$ uncontrollable agents can force the system to stay in $\sts_1$ when they all perform $\act_3$;
  \item $\exists \agvar_1 \forall \agvar_2  \, \prf(\dmas, \sts_2, \agvar_1, \agvar_2, \stexten{p \lor q}{})$ is true, hence $\sts_2$ belongs to the $\exists y_1\forall y_2(y_1,y_2)$-controllable pre-image of $\stexten{p \lor q}{}$ trivially because all outgoing transitions from $\sts_2$ lead to states in $\stexten{p \lor q}{}$;
\item checking all other states likewise produces the final result: \\ 
$\stexten{\varphi}{} = \set{\sts_2, \sts_4, \sts_5, \sts_6}$.
\end{itemize}
  \end{example}

%%%%%%%%%%%%%%%%%%%%%%%%%%%%%%%%%%%%%%%%%%%%%%%%%%

\begin{algorithm}
\caption{Computing the controllable by \! $\term_1$ agents pre-image of $\Qstates$ against $\term_2$ agents (with $\term_1$, $\term_2$ possibly quantified).}
 \label{algo:pre-img}
 \begin{algorithmic}[1]
   \State \textbf{Inputs}: \oursys $\dmas$, $\term_1, \term_2, \in \termset$, $\Qstates \subseteq \States$, assignment $\agass$ and prefix $\pfix$
   
   \State \textbf{Output}: the ($\term_1$, $\term_2$)-controllable pre-image $Z \subseteq \States$ of $\Qstates$
   \Procedure{preImg}{$\dmas, \term_1, \term_2, \Qstates, \agass, \pfix$}

   \If{$\term_1 \neq y_1$ does not appear in $\pfix$}
   \State $\term_1 \gets \agass(\term_1)$
\EndIf
   
      \If{$\term_2 \neq y_2$  does not appear in $\pfix$}
   \State $\term_2 \gets \agass(\term_2)$
\EndIf
    
\State $Z \gets \emptyset$
   \ForAll{$\sts \in \States$}
   \If{$\pfix \, \prf(\dmas, \sts, \term_1, \term_2, \Qstates)$ true} $Z \gets Z \cup \set{\sts}$
   \EndIf
   \EndFor
   \State \Return $Z$
\EndProcedure
    \end{algorithmic}
  \end{algorithm}

  %%%%%%%%%%%%%%%%%%%%%%%%%%%%%%%%%%%%%%%%%%%%%%%

  \begin{algorithm}
\caption{Global model checking algorithm for closed formulae of the type $\pfix  \coop{\term_1, \term_2} \atlg \fob$.}
 \label{algo:G-fixpoint}
 \begin{algorithmic}[1]
   \State \textbf{Inputs}: \oursys $\dmas$, $\term_1, \term_2 \in \termset$, formula $\fob$, assignment $\agass$ and prefix $\pfix$
   \State \textbf{Output}: the set of states $\aset = \stexten{\pfix \coop{\term_1, \term_2} \atlg \fob}{\agass}$
   
   \Procedure{G-fixpoint}{$\dmas, \term_1, \term_2, \fob, \agass, \pfix$}

   \State $\Qstates \gets \algostyle{globalMC}(\dmas, \fob, \agass)$

   \State $\asetbis \gets \States$
   \State $\aset \gets \Qstates$
\While{$\asetbis \not\subseteq \aset$}    
\State $\asetbis \gets \aset$
\State $\aset \gets \algostyle{preImg}(\dmas, \term_1, \term_2, \pfix, \asetbis) \cap \Qstates$ 
\EndWhile
\State \Return $\aset$

\EndProcedure
    \end{algorithmic}
  \end{algorithm}

%%%%%%%%%%%%%%%%%%%%%%%%%%%%%%%%%%%%%%%%%%%%%%%%

  %%%%%%%%%%%%%%%%%%%%%%%%%%%%%%%%%%%%%%%%%%%%%%%

  \begin{algorithm}
    \caption{Global model checking algorithm for closed formulae of the type $\pfix  \coop{\term_1, \term_2} \fob_1 \atlu \fob_2$.}
 \label{algo:U-fixpoint}
 \begin{algorithmic}[1]
   \State \textbf{Inputs}: \oursys $\dmas$, $\term_1, \term_2 \in \termset$, formulae $\fob_1, \fob_2$, assignment $\agass$ and prefix $\pfix$
   \State \textbf{Output}: the set of states $\aset = \stexten{\pfix  \coop{\term_1, \term_2} \fob_1 \atlu \fob_2}{\agass}$
   \Procedure{U-fixpoint}{$\dmas, \term_1, \term_2, \fob_1, \fob_2, \agass, \pfix$}

   \State $\Qstates_1 \gets \algostyle{globalMC}(\dmas, \fob_1, \agass)$
    \State $\Qstates_2 \gets \algostyle{globalMC}(\dmas, \fob_2, \agass)$

    \State $\asetbis \gets \emptyset$
    \State $\aset \gets  \Qstates_2$   
\While{$\aset \not\subseteq \asetbis$}    
\State $\asetbis \gets \aset$
\State $\aset \gets \Qstates_2 \cup (\algostyle{preImg}(\dmas,\term_1, \term_2, \pfix, \asetbis) \cap 		
	\Qstates_1)$ 
        \EndWhile
        \State \Return $\aset$

\EndProcedure
    \end{algorithmic}
  \end{algorithm}

%%%%%%%%%%%%%%%%%%%%%%%%%%%%%%%%%%%%%%%%%%%%%%%%

\begin{algorithm}
\caption{Global model checking algorithm for $\lang$-formulae.}
 \label{algo:glob-mc}
 \begin{algorithmic}[1]
\State \textbf{Inputs}: \oursys $\dmas$, formula $\varphi$ and assignment $\agass$
   \State \textbf{Output}: the set of states $\aset = \stexten{\varphi}{\agass}$

\Procedure{\algostyle{globalMC}}{$\dmas, \fo, \agass$}
\Cases{$\fo$}

\Case{$\avarprop$}
\State \Return $\{\astate \in \States \mid \avarprop \in \labf(\astate)\}$

\Case{$\lnot \fob$}
\State \Return $\States \setminus \algostyle{globalMC}(\dmas, \fob, \agass)$

\Case{$\fob_{1} \land \fob_{2}$}
\State $\Qstates_1 \gets \algostyle{globalMC}(\dmas, \fob_1, \agass)$
\State $\Qstates_2 \gets \algostyle{globalMC}(\dmas, \fob_2, \agass)$
\State \Return $\Qstates_1 \cap  \Qstates_2$

\Case{$\fob_{1} \lor \fob_{2}$}
\State $\Qstates_1 \gets \algostyle{globalMC}(\dmas, \fob_1, \agass)$
\State $\Qstates_2 \gets \algostyle{globalMC}(\dmas, \fob_2, \agass)$
\State \Return $\Qstates_1 \cup  \Qstates_2$

\Case{$ \pfix  \coop{\term_1, \term_2}\atlx \fob$}
\State $\Qstates \gets \algostyle{globalMC}(\dmas, \fob, \agass)$
\State \Return $\algostyle{preImg}(\dmas, \term_1, \term_2, \Qstates, \agass, \pfix)$

\Case{$\pfix  \coop{\term_1, \term_2}\atlg \fob$}
\State \Return $\algostyle{G-fixpoint}(\dmas, \term_1, \term_2, \fob, \agass, \pfix)$

\Case{$ \pfix  \coop{\term_1, \term_2} \fob_{1} \atlu \fob_{2}$}   
\State \Return $\algostyle{U-fixpoint}(\dmas, \term_1, \term_2, \fob_1, \fob_2, \agass, \pfix)$
\EndCases
\EndProcedure
    \end{algorithmic}
  \end{algorithm}
  %%%%%%%%%%%%%%%%%%%%%%%%%%%%%%%%%%%%%%%%%%%%%%%%%%%%%

 We now present the global model checking Algorithm~\ref{algo:glob-mc}.
 From here on, we denote by $\pfix$ any string from the set
  $\set{\emptystring, \exists \agvar_1, \forall \agvar_2, \exists \agvar_1 \forall \agvar_2, \forall \agvar_2 \exists \agvar_1}$,
 where $\emptystring$ is the empty string. 
In each of the cases of the algorithms, $\pfix$ is assumed to be the \emph{longest} quantifier prefix that matches the input (sub)-formula.
 \begin{enumerate}
\item  The base case in Algorithm~\ref{algo:glob-mc} (line 3)
 of $\varphi$ being an atomic proposition $p$ simply returns the set of states, the labels of which contain $p$. 

\item  The boolean cases are straightforward.
 
\item 
In the case of Nexttime formula $\pfix \coop{\term_1, \term_2}\atlx \fob$, the algorithm first computes the state extension $\Qstates$ of the subformula $\fob$ with a recursive call, and then the controllable pre-image of $\Qstates$. 
The computation of the respective controllable pre-image is shown in Algorithm~\ref{algo:pre-img}. 
First, if any of $\term_1$ and $\term_2$ is not a variable
 that appears (i.e., is bound) in the quantifier prefix  $\pfix$, the assignment $\agass$ is applied to assign its value.
Then, for each state $\sts$, if the formula $\pfix \, \prf(\dmas, \sts, \term_1, \term_2, \Qstates)$ is true, the algorithm adds $\sts$ to the set of controllable states to be returned.

\item 
 Algorithms~\ref{algo:G-fixpoint} and~\ref{algo:U-fixpoint} compute the extension of 
 closed formulae of the type $\pfix  \coop{\term_1, \term_2} \chi$ 
 with temporal objective $\chi$ starting with $\atlg$ and $\atlu$ respectively. Their structure is similar to that for global model checking of such formulae in ATL (cf. e.g. the algorithm presented in \cite[ch.9]{TLCSbook}). They apply the iterative procedures of computing controllable pre-images that the fixpoint characterizations of the temporal operators $\atlg$ and $\atlu$ yield (ibid.). 
This is possible for quantified formulae as the quantifiers in formulae from \langb\ are propagated inside the temporal operators according to the respective fixpoint equivalences, proved in Theorem~\ref{lem:fixpoints1}.  
\end{enumerate}

\begin{theorem}
  Let $\dmas$ be a \oursys, $\varphi$ a $\langb$-formula and $\agass$ an assignment. Then
  \[
\stexten{\varphi}{\agass} = \algostyle{globalMC}(\dmas, \fo, \agass)
  \]
\end{theorem}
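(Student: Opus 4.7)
The plan is to prove the theorem by structural induction on the formula $\varphi \in \langb$, following the case analysis in Algorithm~\ref{algo:glob-mc}. The base case of atomic propositions is immediate from the labelling function, and the Boolean cases $\neg \fob$, $\fob_1 \land \fob_2$, $\fob_1 \lor \fob_2$ follow directly from the inductive hypothesis together with the standard semantics of the propositional connectives. The substantive work lies in the three strategic cases, and in all three we must argue in two layers: (i) that the auxiliary procedure \algostyle{preImg} correctly computes the $(\term_1,\term_2)$-controllable pre-image under any admissible prefix $\pfix$; and (ii) that the outer procedure assembles these pre-images into the correct state extension.

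For (i), I would first establish a key lemma: for every state $\sts$, every $\Qstates \subseteq \States$, and every admissible prefix $\pfix$,
\[
\dmas, \sts, \agass \models \pfix \coop{\term_1, \term_2} \atlx \psi
\quad\Longleftrightarrow\quad
\dmas \models \pfix\, \prf(\dmas, \sts, \term_1, \term_2, \Qstates),
\]
where $\Qstates = \stexten{\psi}{\agass}$ and any unquantified term $\term_i$ is replaced by its assignment value. The forward direction uses Definition~\ref{def:abstr-action}.2.2 to extract, from a witnessing abstract joint action of $C = \agass(\term_1)$ controllable agents, the values $k_1,\dots,k_n,k_{\noop}$ summing to $C$ and restricted to available actions; symmetrically the universally quantified $\ell$-variables range exactly over the abstract joint actions of $N = \agass(\term_2)$ uncontrollable agents. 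The guard $\grd{\sts}{\Qstates}$ is satisfied precisely when the resulting transition leads into $\Qstates$, by the definition of $\transrel$ and the determinism expressed in Definition~\ref{def:HDMAS}. The reverse direction inverts this correspondence. With quantified $\term_i$, the quantifiers in $\pfix$ commute with this translation because they simply iterate over the same $\nat$-values as in the semantics (Definition~\ref{def:semantics}, clause 5). This yields the correctness of \algostyle{preImg}, hence of the \atlx\ case of \algostyle{globalMC}.

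For (ii), in the $\atlg$ and $\atlu$ cases I would invoke the fixpoint characterizations of Theorem~\ref{lem:fixpoints1}. For any prefix $\pfix$ allowed in $\langb$, that theorem gives equivalences in the finite of the shape
\[
\pfix \coop{\term_1,\term_2} \atlg \fob \;\fineq\; \fob \;\land\; \pfix \coop{\term_1,\term_2}\atlx\, \pfix \coop{\term_1,\term_2} \atlg \fob,
\]
and analogously for $\atlu$. Algorithm~\ref{algo:G-fixpoint} computes the greatest fixpoint of the monotone operator $F(\aset) = \Qstates \cap \preimg(\dmas,\term_1,\term_2,\pfix,\aset)$ on the finite lattice $\wp(\States)$, started from $\Qstates = \stexten{\fob}{\agass}$, and Algorithm~\ref{algo:U-fixpoint} computes the least fixpoint of the corresponding $\atlu$-operator started from $\emptyset$. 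Termination follows from finiteness of $\States$ (so at most $|\States|$ iterations), and correctness from the standard Knaster--Tarski/Kleene argument together with the above fixpoint equivalences and the correctness of \algostyle{preImg} established in (i).

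The main obstacle I anticipate is step (i): one must carefully check that the Presburger encoding $\prf$ faithfully mirrors the abstract-strategy semantics of $\coop{\term_1, \term_2}\atlx$, including the interplay between the availability guards $\actav^i_s$, the action counters, and any free or quantified occurrences of $\term_1, \term_2$ in $\pfix$. Once this translation lemma is in place, the fixpoint equivalences of Theorem~\ref{lem:fixpoints1} reduce the remaining cases to the classical ATL model-checking schema, and the overall induction closes.
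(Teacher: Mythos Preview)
Your proposal is correct and follows essentially the same approach as the paper: structural induction on $\langb$-formulae, with the Boolean cases immediate, the $\atlx$ case reduced to the correctness of \algostyle{preImg} via the semantics of $\prf$, and the $\atlg$/$\atlu$ cases handled by the fixpoint equivalences of Theorem~\ref{lem:fixpoints1}. The paper's own proof is considerably terser---it simply asserts that the correctness of Algorithm~\ref{algo:pre-img} is ``implied by the semantics of $\prf$'' and that Algorithms~\ref{algo:G-fixpoint} and~\ref{algo:U-fixpoint} are ``justified by Theorem~\ref{lem:fixpoints1}''---whereas you spell out the translation lemma for $\prf$ and the Knaster--Tarski argument explicitly, which is a welcome elaboration rather than a different route.
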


\begin{proof}
By induction on the structure of \langb formulae. The boolean cases are straightforward. For nexttime formulae $\pfix \coop{\term_1, \term_2} \atlx \psi$ the claim immediately follows from the correctness of 
Algorithm~\ref{algo:pre-img}, implied by the semantics of $\prf(\dmas, \term_1, \term_2, \pfix, \stexten{\psi}{})$. For formulae of the type $\pfix \coop{\term_1, \term_2} \atlg \psi$ and $\pfix \coop{\term_1, \term_2} \psi_1 \atlu \psi_2$, it follows from the correctness of  Algorithms~\ref{algo:G-fixpoint} and~\ref{algo:U-fixpoint}, justified  by Theorem~\ref{lem:fixpoints1}.
  \end{proof}

  For model checking of the full language \lang, Algorithm~\ref{algo:glob-mc}  is combined with function $\nff$, transforming constructively any \lang-formula $\varphi$ to $\varphi^{\mathsf{NF}}$ in \langb, equivalent in the finite to $\varphi$ by virtue of  Theorem \ref{thm:langb}.

  \begin{example}
  \label{ex:32}
    We illustrate Algorithm~\ref{algo:glob-mc} by sketching its application to the formula  
     $\psi = \coop{7, 4} \atlx (\forall \agvar_2 \exists \agvar_1 \coop{\agvar_1, \agvar_2} \atlg p)$ in the \oursys model $\dmas$ in Figure~\ref{fig:example}. We fix any  assignment $\agass$ (it does not play any role, since $\psi$ is closed). The outer formula is a $\atlx$ formula, thus line $18$ calls recursively the global model checking on the subformula in the temporal objective. Line $4$ of $\algostyle{G-fixpoint}$ initializes $\aset \gets \set{\sts_2, \sts_3, \sts_4}$, viz., states labeled with $p$ and $\asetbis \gets \States = \set{\sts_1, \ldots, \sts_6}$. Since $\asetbis \not \subseteq \aset$, we enter the while cycle computing the fixpoint. In the numbered list below, each item $i)$ correspond to the $i$-th iteration cycle. 
\begin{enumerate}
\item
  \begin{itemize}
  \item $\asetbis \gets \set{\sts_2,\! \sts_3,\! \sts_4}$;
    \item $\algostyle{preImg}(\dmas,\! \agvar_1,\! \agvar_2,\! \set{\sts_2,\! \sts_3,\! \sts_4},\! \agass,\! \forall \agvar_2 \exists \agvar_1)\!=\!\set{\sts_2,\! \sts_4,\! \sts_5}$;
\item $\aset \gets \set{\sts_2, \sts_4, \sts_5} \cap \set{\sts_2, \sts_3, \sts_4} = \set{\sts_2, \sts_4}$.
\end{itemize}
\item
\begin{itemize}
  \item $\asetbis \gets \set{\sts_2, \sts_4}$;
  \item $\algostyle{preImg}(\dmas,\! \agvar_1,\! \agvar_2,\! \set{\sts_2,\! \sts_4}, \agass,\! \forall \agvar_2 \exists \agvar_1) = \set{\sts_2,\! \sts_4,\! \sts_5}$;
    \item $\aset \gets \set{\sts_2, \sts_4, \sts_5} \cap \set{\sts_2, \sts_3, \sts_4} = \set{\sts_2, \sts_4}$.
    \end{itemize}
  Now $\asetbis \gets \aset$ then the fixpoint is reached.
\end{enumerate}
The set $\aset$ is then returned, so $\stexten{\forall \agvar_2 \exists \agvar_1 \coop{\agvar_1, \agvar_2} \atlg p}{} = \set{\sts_2, \sts_4}$. We now move to the outer next formula for which line 19 of \algostyle{globalMC} algorithm calls the \algostyle{preImg} procedure. For each $\sts \in \States$ the truth of formula $\prf(\dmas, \sts, 7, 4, \set{\sts_2, \sts_4})$ is called. The final result is $\stexten{\psi}{}=\set{\sts_4, \sts_5}$.
\end{example}

%%%%%%%%%%%%%%%%%%%%%%%%%%%%%%%%%%%%%%%%%%%%%%%
    
    \begin{example}
      \label{ex:33}
Consider $\varphi = $ 
$\coop{6, 3}\!\atlx\! \big(\exists \agvar_1 \coop{\agvar_1,\!  10}\! (\forall \agvar_2\exists \agvar_1 \coop{\agvar_1, \agvar_2}\! \atlg p) \! \atlu\! (\forall \agvar_2 \coop{0, \agvar_2} \! \atlg \! q)\!  \big)$.
We start by computing the extension of $\forall \agvar_2 \coop{0, \agvar_2} \atlg q$, following Algorithm~\ref{algo:G-fixpoint}. 
 
From lines $4-6$: \  
$\Qstates \gets \stexten{q}{} = \set{\sts_5, \sts_6}$; $\asetbis \gets \set{\sts_1, \ldots, \sts_6}$, and $\aset \gets \set{\sts_5, \sts_6}$. 

Since $\asetbis \not \subseteq \aset$, we enter the iteration cycle: 
\begin{enumerate}
\item
  \begin{itemize}
  \item $\asetbis \gets \set{\sts_5, \sts_6}$;
  \item $\algostyle{preImg}(\dmas, 0, \agvar_2, \set{\sts_5, \sts_6}, \agass, \forall \agvar_2) = \set{\sts_6}$
    \item $\aset \gets \set{\sts_6} \cap \set{\sts_5, \sts_6} = \set{\sts_6}$.
    \end{itemize}
  \item
    \begin{itemize}
    \item $\asetbis \gets \set{\sts_6}$;
    \item $\algostyle{preImg}(\dmas, 0, \agvar_2, \set{\sts_6}, \agass, \forall \agvar_2) = \set{\sts_6}$;
      \item $\aset \gets \set{\sts_6} \cap \set{\sts_5, \sts_6} = \set{\sts_6}$.
      \end{itemize}
      The fixpoint is reached and $\stexten{\forall \agvar_2 \coop{0, \agvar_2} \atlg q}{} = \set{\sts_6}$.
  \end{enumerate}
  From Example \ref{ex:32} we get $\stexten{\forall \agvar_2\exists \agvar_1 \coop{\agvar_1, \agvar_2} \atlg p}{}=\set{\sts_2, \sts_4}$. We then move to computing the extension of the until formula, following Algorithm~\ref{algo:U-fixpoint}. From lines $4-7$: \\ 
  $\Qstates_1 \gets \set{\sts_2, \sts_4}$; $\Qstates_2 \gets \set{\sts_6}$; $\asetbis \gets \emptyset$ and $\aset \gets \set{\sts_6}$. 
  
  Since $\aset \not \subseteq \asetbis$, we enter the iteration cycle: 
  \begin{enumerate}
  \item
    \begin{itemize}
    \item $\asetbis \gets \set{\sts_6}$;
    \item $\algostyle{preImg}(\dmas, \agvar_1, 10, \set{\sts_6}, \agass, \exists \agvar_1) = \set{\sts_4, \sts_6}$. 
    
    Indeed, from $\sts_4$, e.g., $40$ controllable agents performing $\act_1$ guarantee that guard $\guard_4$ is satisfied.
      \item $\aset \gets \set{\sts_6} \cup (\set{\sts_4, \sts_6} \cap \set{\sts_2, \sts_4} ) = \set{\sts_4, \sts_6}$.
      \end{itemize}
    \item
      \begin{itemize}
      \item $\asetbis \gets \set{\sts_4, \sts_6}$;
      \item $\algostyle{preImg}(\dmas, \agvar_1, 10, \set{\sts_4,\sts_6}, \agass, \exists \agvar_1) = \set{\sts_2, \sts_4, \sts_6}$;
        \item $\aset \gets \set{\sts_6} \cup (\set{\sts_2, \sts_4, \sts_6} \cap \set{\sts_2, \sts_4}) = \set{\sts_2, \sts_4, \sts_6}$.
        \end{itemize}
      \item
        \begin{itemize}
        \item $\asetbis \gets \set{\sts_2, \sts_4, \sts_6}$;
        \item $\algostyle{preImg}(\dmas,\! \agvar_1,\! 10,\! \set{\sts_2,\! \sts_4,\! \sts_6},\! \agass,\! \exists \agvar_1) = \set{\sts_2,\! \sts_4,\! \sts_5,\! \sts_6}$;
          \item $\aset \gets \set{\sts_6} \cup (\set{\sts_2, \sts_4, \sts_5, \sts_6} \cap \set{\sts_2, \sts_4}) = \set{\sts_2, \sts_4, \sts_6}$.
          \end{itemize}
          The fixpoint is reached. Thus: 
          \\
          $\stexten{\exists \agvar_1 \coop{\agvar_1,\!  10}\! (\forall \agvar_2\exists \agvar_1 \coop{\agvar_1, \agvar_2}\! \atlg p) \! \atlu\! (\forall \agvar_2 \coop{0, \agvar_2} \! \atlg \! q)}{} =  
          \set{\sts_2, \sts_4, \sts_6}$.
    \end{enumerate}
Lastly,  we call $\algostyle{preImg}(\dmas, 6, 3, \set{\sts_2, \sts_4,\sts_6}, \agass, \emptystring)$ to compute $\stexten{\varphi}{}  = \set{\sts_1, \sts_4, \sts_5, \sts_6}$. 
      \end{example}

\section{Complexity estimates}
\label{sec:complexity}

As well-known from \cite{AHK-02}, the time complexity of model checking of ATL formulae is linear in both the size of the model\footnote{The simplified algorithm presented here works in quadratic time.} and the length of the formula. Note that in standard concurrent game models the number of agents is fixed and the transition relation is represented explicitly, by means of transitions from each state labelled with each action profile.
In \oursys models, however, the transitions are represented symbolically, in terms of the guards that determine them. An explicit representation would be infinite, in general. 
Thus, the question of how to measure the size of \hdmas models  arises. 
Given a \oursys $\dmas$, we consider the following parameters: the size $|\States|$ of the state space; the size $n$ of the action set 
$\actset$, and the size $|\transrel|$ of the symbolic transition guard function. The latter is defined as the sum of the length of all guards appearing in $\transrel$, where we assume a binary encoding of numbers.

Given a $\langb$ formula $\varphi$ and a \oursys \dmas,
the number of fixpoint computations in the global model checking algorithm is bounded by the length of $|\varphi|$.
Each computation executes the while cycle at most $|\States|$ times, and at each iteration, the function $\algostyle{preImage}$ is called. The pre-image algorithm cycles through all states again and invokes model checking of a \pra\ formula $\prf$ each time. In the worst case $|\prf| = |\transrel|$, as $\grd{\sts}{\Qstates}$ could be the disjunction of almost all guards in $\dmas$. The complexity of checking the truth of a \pra-formula depends not just on its size, but more precisely on the numbers of quantifier alternations and of quantified variables in any quantifier block (cf.~\cite{DBLP:journals/siglog/Haase18}). In our case, the maximum number of quantifier alternations is 4, while the number of variables in any quantifier block is at most $n+1$. By applying results from ~\cite{DBLP:conf/csl/Haase14} (cf. also~\cite{DBLP:journals/siglog/Haase18}), these yield a worst case complexity  $\Sigma_3^{\textsf{EXP}}$, or more precisely $\textrm{STA}(\ast, 2^{|\transrel|^{O(1)}} , 3)$ when the model is not fixed, or at least $n$ is unbounded, but it is down to $\textrm{STA}(\ast, {|\transrel|^{O(1)}}, 3)$ when $n$ is fixed.

\medskip
Thus, the number of variables and quantifier alternation depth in $\prf$-formulas crucially affect the complexity of model checking of $\langb$- formulae. We can distinguish the following cases of lower complexity bounds:

\begin{enumerate} 
\item When no quantifier patterns 
$\exists \agvar_1 \forall \agvar_2$
occur, the maximal alternation depth is 3, hence the complexity is reduced to $\textrm{STA}(\ast, 2^{|\transrel|^{O(1)}} , 2)$, respectively  $\textrm{STA}(\ast, {|\transrel|^{O(1)}} , 2)$. 

\item 
If no quantification $\forall \agvar_2$ is allowed, but the number of uncontrollable agents is a parameter, the maximal alternation depth is 2, hence the complexity is reduced to $\textrm{STA}(\ast, 2^{|\transrel|^{O(1)}} , 1)$, respectively  $\textrm{STA}(\ast, {|\transrel|^{O(1)}} , 1)$. 

\item In the case when the number of either controllable or uncontrollable agents is fixed or bounded, the resulting $\prf$-formulas become  either existential or universal (by replacing the quantifiers over the actions of the bounded set of agents with conjunctions, resp. disjunctions),
In these cases, the complexity drops to NP-complete if the number of actions is unbounded, resp. P-complete if that number is fixed or bounded.  
\end{enumerate}

\section{Concluding remarks}
\label{sec:concluding}

{We have proposed and explored a new, generic framework for modelling, formal specification and verification of dynamic multi-agent systems, where agents can freely join and leave during the evolution of the system. We consider indistinguishable agents and therefore the system evolution is affected only by the number of agents performing actions. As neither of the currently available logics are well-suited for expressing properties of such dynamic models, we have devised a variation of the alternating time temporal logic ATL to specify strategic abilities of coalitions of controllable versus non-controllable of agents.}

The framework and results presented here are amenable to various extensions, e.g. allowing any \pra-formulae as guards in \hdmas models; allowing more expressive languages, e.g. with arbitrary LTL or parity objectives, with somewhat more liberal quantification patterns in \lang (i.e., formulae of the type $\forall y \coop{y,y} \atlx \fo$ and $\exists y \coop{y,y} \atlx \fo$ can be added easily), adding several super-agents with controllable sets of agents, etc.  The main technical challenge for some of these extensions would be to lift or extend the model checking procedure for them. Still, in particular, extending the present framework to include any finite number of different agent ``types'', with each type having a different protocol, is rather straightforward, as follows. 
Let us fix a set of agent types $\set{T_1, \ldots, T_m}$. Now each agent belong to one  specific type. Definition~4 will then have $d_1, \ldots, d_m$ action availability functions, one for each type, so that agents belonging to the same type have the same set of available actions in each system state, but agents belonging to different types might have different available actions. Lastly the logic will now involve $m$ 
variables for the controllable agents of each type, and $m$ other variables for the non-controllable ones of each type.     
The same restrictions on the use of these variables will apply in this extended logic and the notion of normal form, the technical results related to it, and the model checking algorithm for formulae in normal form, extend as expected to the multi-type case. 

Of the numerous possible applications we only mention a natural link with the Colonel Blotto games \cite{borel1953theory}, \cite{Roberson2006}, where two players simultaneously distribute military force units across $n$ battlefields, and in each battlefield  the player (if any) that has allocated the higher number of units wins. 
As suggested by our fortress example, our framework can be readily applied to model and solve algorithmically multi-player and multiple-round extensions of Colonel Blotto games, which we leave to future work. More generally, dynamic resource allocation games \cite{AvniHenzingerKupferman16} as well as verification of parameterised fault-tolerance in multi-agent systems \cite{DBLP:conf/ijcai/KouvarosL17} seem naturally amenable to applications of the present work.

%%%%%%%%%%%%%%%%%%%%%%%%%%%%%

% \begin{acknowledgements}
% This work was supported by research grant 2015-04388 of the Swedish Research Council.
% \end{acknowledgements}

% BibTeX users please use one of
%\bibliographystyle{spbasic}      % basic style, author-year citations
\bibliographystyle{splncs04}     % mathematics and physical sciences
\bibliography{VG-MAS}   % name your BibTeX data base

 \end{document}